\tikzset{double_border/.style={draw, double, double distance=1pt,outer sep=1.2pt}}
\title{On the Interplay of Cube Learning and Dependency Schemes in QCDCL Proof Systems} 
\titlerunning{Cube Learning and Dependency Schemes in QCDCL} 
\author{Abhimanyu Choudhury}{The Institute of Mathematical Sciences, Chennai, India \and Homi Bhabha National Institute, Training School Complex, Anushaktinagar, Mumbai, India}{abhimanyuc@imsc.res.in}{https://orcid.org/0009-0003-7659-5995}{} 
\author{Meena Mahajan}{The Institute of Mathematical Sciences, Chennai, India \and Homi Bhabha National Institute, Training School Complex, Anushaktinagar, Mumbai, India}{meena@imsc.res.in}{https://orcid.org/0000-0002-9116-4398}{Partially supported by J C Bose National Fellowship.}
\authorrunning{A.~Choudhury and M.~Mahajan} 
\keywords{QBF, CDCL, Resolution, Dependency schemes} 
\newcommand{\calQ}{\mathcal{Q}} 
\newcommand{\D}{\mathtt{D}} 
\newcommand{\CAxiom}{\mbox{$\mathtt{Axiom}$}} 
\newcommand{\TAxiom}{\mbox{$\mathtt{Axiom}$}} 
\newcommand{\Res}{\mbox{$\mathtt{Res}$}} 
\newcommand{\TermRes}{\mbox{$\mathtt{TermRes}$}} 
\newcommand{\PS}{\mathtt{P}} 
\newcommand{\var}{\mathtt{var}}
\newcommand{\rrs}{\mathtt{rrs}}
\newcommand{\respath}{\mathtt{res}}
\newcommand{\trv}{\mathtt{trv}}
\newcommand{\std}{\mathtt{std}}
\newcommand{\lo}{\mathtt{LEV\textrm{-}ORD}}
\newcommand{\uniany}{\mathtt{UNI\textrm{-}ANY}}
\newcommand{\aro}{\mathtt{ASS\textrm{-}R\textrm{-}ORD}}
\newcommand{\ao}{\mathtt{ANY\textrm{-}ORD}}
\newcommand{\resolve}{\mathtt{res}}
\newcommand{\ante}{\mathtt{ante}}
\newcommand{\reduce}{\mathtt{red}}
\newcommand{\cube}{\mathtt{cube}}
\newcommand{\reduceD}{\mathtt{red}\mbox{-}\D}
\newcommand{\reduceDrrs}{\mathtt{red}\mbox{-}\Drrs}
\newcommand{\reduceDstd}{\mathtt{red}\mbox{-}\Dstd}
\newcommand{\red}{\mathtt{RED}}
\newcommand{\Dres}{\D^{\respath}}
\newcommand{\Drrs}{\D^{\rrs}}
\newcommand{\Dtrv}{\D^{\trv}}
\newcommand{\Dstd}{\D^{\std}}
\newcommand{\QCDCL}{\mbox{$\mathtt{QCDCL}$}} 
\newcommand{\QCDCLlored}{\mbox{$\QCDCL^{\lo}_{\red}$}}
\newcommand{\QCDCLcube}{\mbox{$\QCDCL^{\cube}$}}
\newcommand{\QCDCLDrrs}{\mbox{$\QCDCL(\Drrs)$}}
\newcommand{\DrrsQCDCLcube}{\mbox{$\Drrs + \QCDCL^{\cube}$}}
\newcommand{\DrrsQCDCLcubeDrrs}{\mbox{$\Drrs+ \QCDCL^{\cube}(\Drrs)$}}
\newcommand{\QRes}{\mbox{$\mathtt{Q}$-$\Res$}} 
\newcommand{\QDRes}{\mbox{$\mathtt{Q}(\D)$-$\Res$}}  
\newcommand{\QDrrsRes}{\mbox{$\mathtt{Q}(\Drrs)$-$\Res$}} 
\newcommand{\QDstdRes}{\mbox{$\mathtt{Q}(\Dstd)$-$\Res$}} 
\newcommand{\QURes}{\mbox{$\mathtt{QU}$-$\Res$}} 
\newcommand{\LDQRes}{\mbox{$\mathtt{LDQ}$-$\Res$}} 
\newcommand{\LDQTermRes}{\mbox{$\mathtt{LDQ}$-$\TermRes$}} 
\newcommand{\LDQDRes}{\mbox{$\mathtt{LDQ}(\D)$-$\Res$}}
\newcommand{\LDQDTermRes}{\mbox{$\mathtt{LDQ}(\D)$-$\TermRes$}}
\newcommand{\PHP}{\mbox{$\mathtt{PHP}$}}
\newcommand{\Equality}{\mbox{$\mathtt{Equality}$}}
\newcommand{\Trapdoor}{\mbox{$\mathtt{Trapdoor}$}}
\newcommand{\TwinEquality}{\mbox{$\mathtt{TwinEq}$}}
\newcommand{\CR}{\mbox{$\mathtt{\CR}$}}
\newcommand{\DepTrap}{\mbox{$\mathtt{Dep\textrm{-}Trap}$}}
\newcommand{\TwoPHPandCT}{\mbox{$\mathtt{TwoPHPandCT}$}}
\newcommand{\PreDepTrap}{\mbox{$\mathtt{PreDepTrap}$}}
\newcommand{\PropDepTrap}{\mbox{$\mathtt{PropDep\textrm{-}Trap}$}}
\newcommand{\DoubleLongEq}{\mbox{$\mathtt{DoubleLongEq}$}}
\newcommand{\PreRRSTrap}{\mbox{$\mathtt{PreRRSTrapdoor}$}}
\newcommand{\StdDepTrap}{\mbox{$\mathtt{StdDepTrap}$}}
\newcommand{\redPreRRSTrap}{\mbox{$\reduceDrrs(\PreRRSTrap)$}}
\newcommand{\T}{\mathcal{T}}
\newcommand{\Ord}{\mathtt{ORD}}
\newcommand{\Dord}[1]{{#1}\textrm{-}\mathtt{ORD}}
\newcommand{\QCDCLDep}[3]{\mbox{$\QCDCL^{#1}(#2,#3)$}}
\newcommand{\ClausePolicy}{\mbox{$\mathtt{ClausePol}$}}
\newcommand{\CubePolicy}{\mbox{$\mathtt{CubePol}$}}
\newcommand{\NoCube}{\mbox{$\mathtt{No}$-$\mathtt{Cube}$}}
\newcommand{\CubeLD}{\mbox{$\mathtt{Cube}$-$\mathtt{LD}$}}
\newcommand{\CubeD}[1]{\mbox{$\mathtt{Cube}$-$#1$}}
\newcommand{\QTermRes}{\mbox{$\mathtt{Q}$-$\TermRes$}}
\newcommand{\QDTermRes}{\mbox{$\mathtt{Q}(\D)$-$\TermRes$}}
\newcommand{\LDRD}[1]{\mathtt{LD}\Res(#1)} 
\newcommand{\LDTRD}[1]{\mathtt{LDT}\Res(#1)} 
\begin{document}
	
	\maketitle
	
\begin{abstract}
Quantified Conflict Driven Clause Leaning (QCDCL) is one of the main approaches to solving Quantified Boolean Formulas (QBF). Cube-learning  is employed in this approach to ensure that true formulas can be verified. 
Dependency Schemes help to detect spurious dependencies that are implied by the variable ordering in the quantifier prefix of QBFs but are not essential for constructing (counter)models. This detection can provably shorten refutations in specific proof systems, and is expected to speed up runs of QBF solvers.

The simplest underlying proof system [BeyersdorffB\"ohm-LMCS2023],
formalises the reasoning in the QCDCL approach on false formulas, when neither cube-learning nor dependency schemes is used.
The work of [B\"ohmPeitlBeyersdorff-AI2024] further incorporates cube-learning. 
The work of [ChoudhuryMahajan-JAR2024] incorporates a limited use of  dependency schemes, but without cube-learning. 

In this work, proof systems underlying the reasoning of QCDCL solvers
which use cube learning, and which use dependency schemes at all
stages, are formalised. Sufficient conditions for soundness and
completeness are presented, and it is shown that using the standard
and reflexive resolution path dependency schemes ($\Dstd$ and $\Drrs$)
to relax the decision order provably shortens refutations.

When the decisions are restricted to follow quantification order, but
dependency schemes are used in propagation and learning, in
conjunction with cube-learning, the resulting proof systems using the
dependency schemes $\Dstd$ and $\Drrs$ are investigated in detail and
their relative strengths are analysed.
\end{abstract}
	
        
	\section{Introduction} 
	\label{sec:Intro}

Despite the NP-hardness of the satisfiability problem, in the
last three decades SAT solvers have been phenomenally
successful in solving instances of humongous size, and have
become the go-to tool in many practical industrial
applications (see e.g.\ \cite{Var14,sathandbookcdcl}).
This success has spurred ambitious programs to develop solvers
for computationally even more hard problems. In particular,
the PSPACE-complete problem of determining the truth of
Quantified Boolean Formulas QBFs has many more applications
(see e.g.\ \cite{ShuklaBPS19}), and over the last
twenty years QBF solvers have rapidly approached the state of
industrial applicability.

The paradigm that revolutionized SAT solving is Conflict Driven Clause
Learning CDCL (\cite{SilvaSakallah-ICCAD96}), and this is also one of
the principal approaches (but not the only one) in QBF solving. The
CDCL technique was lifted to QBFs in the form of QCDCL
(\cite{ZhangMalik-ICCAD02}, see also \cite{GMN-HandbookSAT09}; in \cite{Lonsing-Thesis12}, the term QDPLL is used), and
implemented in state-of-the-art solvers DepQBF
\cite{LonsingBiere-JSAT10,LonsingE17} and Qute \cite{PeitlSS-JAIR19}
with further augmentations to enhance performance.

For both SAT and QBF, solving techniques are intricately connected
with proof systems. The runtime trace of a solver on a formula can be
thought of as a proof of the final outcome (sat/unsat,
true/false). Proof systems abstract out the reasoning employed in the
solvers, and allow representing these traces-as-proofs as formal
proofs. The CDCL paradigm in SAT solvers corresponds to resolution, a
very well-studied proof system. There are multiple
ways in which resolution can be lifted to QBFs, see
\cite{M4CQBF} for an overview.  As
shown in \cite{BeameKS-JAIR04}, resolution proofs can be efficiently
extracted from traces of CDCL-based SAT solvers. For QBFs, QCDCL
traces yield proofs in the proof system long-distance Q-resolution
\LDQRes~ \cite{ZhangMalik-ICCAD02,BalabanovJiang-FMSD12}. However, the
converse direction, going from resolution proofs to CDCL runs,
famously shown for SAT in
\cite{PipatsrisawatDarwiche-AI11,AsteriasFT-JAIR11}, seemingly breaks
down for QBF and QCDCL as currently implemented; the reasoning
employed in basic QCDCL solvers was abstracted in
\cite{BeyersdorffBohm-LMCS23,BohmPB-AI24} as the proof systems
\QCDCL\ and \QCDCLcube, and shown to be exponentially weaker than \LDQRes.

The proof system \QCDCL\ is a refutational proof system; it was
formulated in \cite{BeyersdorffBohm-LMCS23} to explain the reasoning of basic QCDCL-style algorithms on
false QBFs. The proof system
\QCDCLcube, defined in \cite{BohmPB-AI24}, incorporates cube-learning as
well, and can thus certify both falsity and truth of
QBFs. Intriguingly, it was shown in \cite{BohmPB-AI24} that even
for false QBFs, where cube-learning is not necessary for completeness,
it can still significantly shorten refutations. 
Very recently, it was
shown in \cite{BeyersdorffBohmMahajan-AAAI24} that even when short
refutations are actually found, it may take an exponentially long time to
find them.
Many other variants (different  policies for decision order, propagations, reductions) have been studied extensively in \cite{BohmB-JAIR24}. 

One heuristic that has been used in some QCDCL solvers is the use of
dependency schemes. These schemes involve performing some basic
analysis of the formula structure and identifying spurious
dependencies amongst variables, dependencies that are implied by the
quantification order of variables but are not necessary for
constructing (counter)models; see
for instance \cite{SlivovskySzeider-TCS16}. Eliminating  such dependencies would
transform a QBF to a Dependency QBF, DQBF, for which the computational
problem of deciding truth/falsity is even harder; it is NEXP-complete
(\cite{AzharPR-CMA01,BlinkhornPS-SAT21}). However, retaining the
formulas as a QBF, and using information about spurious
dependencies while propagating and learning, is still a
feasible approach, that has been implemented in the solver DepQBF
\cite{LonsingBiere-JSAT10,LonsingE17} using the {\em
	standard dependency scheme} $\Dstd$. In resolution-based QBF proof
systems, employing reduction rules based on the {\em reflexive
	resolution path dependency-scheme} $\Drrs$, is known to exponentially
shorten refutations (\cite{BlinkhornBeyersdorff-SAT17}), and the
expectation is that a similar advantage also manifests in QCDCL
solvers.

This work makes progress towards formally understanding the
strengths/limitations of using the dependency-scheme heuristic.  The first
steps in this direction were initiated in a recent work in
\cite{ChoudhuryMahajan-JAR24}. It considered the simplest setting, in
which the \QCDCL\ proof system uses the $\lo$ decision policy
(deciding variables according to the quantification order), and does
not learn cubes. 
A
dependency scheme $\D$ is used in propagation by, and learning of,
clauses. Additionally, a dependency scheme $\D'$ may be used to
preprocess the formula, reducing all clauses according to $\D'$ 
before beginning the QCDCL trails.  In this setting,
when $\D$ and $\D'$ are ``normal'' schemes (as
defined in \cite{PeitlSS-JAR19}), the resulting proof systems were shown to be 
sound and refutationally complete. 
In the same  setting, the four systems arising from using $\Drrs$ in preprocessing, in propagation/learning, in both, and in neither, 
were shown to be incomparable with each other.
In the 
underlying proof system \LDQRes, using dependency information can
never lengthen proofs. The handicap in QCDCL arises because QCDCL
algorithms also need to search for the proof.

In this work, we consider more general settings. Our contributions
are as follows:

\noindent {\bf Formalising intensive use of dependency schemes in QCDCL:}
We formalise the definitions of $\QCDCL$ and $\QCDCLcube$ proof
systems that use dependency schemes more intensively: in the decision
policy, which determines which variables can be "decided" at a particular stage, as well as in propagation and learning, with and without cube
learning. Using a dependency scheme $\D_1$ in the decision policy
means that a variable can be decided if all variables on which it
depends, according to $\D_1$, are already assigned; this is the policy
$\Dord{\D_1}$.  Using a dependency scheme $\D_2$ in propagation and
learning means that reductions enabled by $\D_2$ are performed
whenever possible.  For two dependency schemes $\D_1$ and $\D_2$
(which may be the same) we consider $\QCDCLcube$ proof systems that
use $\D_1$ in the decision policy and $\D_2$ in propagation through and
learning of clauses. We consider three scenarios with respect to cube-learning:
(1)~cube-learning is switched off completely; (2)~cube propagation and
learning is done without using any dependency schemes; or (3)~cube
propagation and learning use the scheme $\D_2$ but disallow
long-distance term-resolution. The reason for this difference between
clause and cube learning is that long-distance term resolution is not
(yet?) known to be sound if used in conjunction with dependency
schemes.
We show that for normal $\D_1$, $\D_2$, the resulting systems are
sound and refutationally complete; \cref{thm:depord-sound-complete}.

\noindent {\bf Provable advantage of $\Dord{\D}$:}
We show that other parameters remaining the same, using either
$\Drrs$ or $\Dstd$ as $\D_1$ is strictly better than using $\lo$; \cref{thm:depord-levord}.


\noindent {\bf Using cube-learning, and dependency schemes only in propagation/learning:}
When the decision policy is restricted to $\lo$, we generalise the results from
\cite{ChoudhuryMahajan-JAR24} to settings with cube-learning switched on, and also to settings where $\Dstd$ rather than $\Drrs$ is used. Specifically,
we show that
\begin{enumerate}
\item Using $\Dstd$ in pre-processing is useless; \cref{prop:DtrvDstd-nopre}.
  
	\item Switching on cube learning provably adds power
	  even in the presence of $\Dstd$ or $\Drrs$;
          \cref{thm:QCDCLcubeD-stronger}.
	
	\item Adding $\Drrs$ in various non-trivial ways to
	\QCDCLcube\ results in proof systems that are not only pairwise
	incomparable, \cref{thm:QCDCLcubeDrrs-incomparable}, but are also incomparable with both
	\QCDCLcube\ and \QCDCL; \cref{thm:QCDCLcubeDrrs-QCDCLcube,thm:QCDCLcubeDrrs-DrrsQCDCL}. 
	
	\item Adding $\Dstd$ to \QCDCL\ is orthogonal to switching on
	cube learning; \cref{thm:DstdCube-incomp}. In certain cases, adding $\Dstd$
	to both \QCDCL\ and \QCDCLcube\ offers a provable advantage.
	
	\item Although $\Drrs$ strictly refines $\Dstd$, in
	the context of \QCDCL\ and \QCDCLcube, adding these schemes
	gives rise to incomparable systems; \cref{thm:DstdDrrs-incomp}. Thus, the  $\lo$ policy can negate
	potential benefits of the strict refinement.
\end{enumerate}
We use several known bounds on formulas from earlier
works, and also show some new bounds for them. To obtain desired separations, we also introduce three carefully handcrafted new formulas. 
For easy reference, the known and new results (about
previously defined and new formulas) are collated in
\cref{tab:for-sideways}  in \cref{sec:QCDCLcuberesults}. The known simulation order of the proof systems, incorporating prior known results as well as the new results proved here, are summarised in \cref{fig:all-figures}, also in \cref{sec:QCDCLcuberesults}.

\paragraph*{Organisation of the paper:}
In \cref{sec:Prelim}, we give some basic definitions and describe the background about known proof systems and dependency schemes. In \cref{sec:QCDCLDep}, we define the new proof systems, show soundness and completeness, and show that the decision policy $\Dord{\D}$ is strictly more powerful than $\lo$. In \cref{sec:QCDCLcuberesults}, we briefly discuss preprocessing, we define three new QBF families and show various lower and upper bounds for their proof sizes, and we describe the simulation order among various \QCDCL\ systems. We conclude with some pointers for further directions of interest.


	\section{Preliminaries}
	\label{sec:Prelim}
	
	\subsection{Basic Notation}\label{sec:basics}
        We follow notation from 
\cite{BeyersdorffBohm-LMCS23,BohmPB-AI24}; see also \cite{M4CQBF}.
Selected relevant items are included here.

	A literal $\ell$ is a Boolean variable $x$ or its negation $\bar{x}$, and $\var(\ell)$ denotes the associated variable $x$. A clause is a disjunction of literals; a term or a cube is a conjunction of literals. For a clause or cube $C$, $\var(C)$ denotes the set $\{\var(\ell) \mid \ell \in C\}$. A propositional formula $\varphi$ is built from variables using conjunction, disjunction, and negation; it is in conjunctive normal form~(CNF) if it is a conjunction of clauses. For a formula $\varphi$, a variable $x$ in $\varphi$, and a Boolean value $a$, $\varphi|_{x=a}$ refers to the formula obtained by substituting $x=a$ everywhere in $\varphi$.  For a set $S$ of clauses and a literal $\ell$, we use shorthand $\ell \vee S$ to denote the set of clauses $\{\ell\vee C \mid C\in S\}$. The
	empty clause is denoted $\square$ and is unsatisfiable; the empty cube is denoted $\top$ and is always true. A clause (cube) is said to be tautological (contradictory) if for some variable $x$ it contains both $x$ and $\bar{x}$.
	
	The resolution rule can be applied to clauses and to cubes.
        The resolvent of  clauses $A'=A\vee x$ and $B'=B\vee \bar{x}$ is the clause $A\vee B$ denoted as $\resolve(A',B',x)$ or $\resolve(B',A',x)$.	The resolvent of cubes $A'=A\wedge x$ and $B'=B\wedge \bar{x}$ is  the cube $A\wedge B$, also  denoted as $\resolve(A',B',x)$ or $\resolve(B',A',x)$. 
	
	A Quantified Boolean Formula (QBF) in prenex conjunction normal form (PCNF) is a prefix with a list of variables, each quantified either existentially 
	or universally, and a matrix, which is a set (conjunction) of clauses over these variables. That is, it has the form  \[ \Phi = \calQ \vec{x} \cdot \varphi = Q_1 x_1 Q_2 x_2 \ldots Q_n x_n ~~ \varphi(x_1,x_2,\ldots ,x_n)\] where $\varphi$ is a propositional formula in CNF, and each $Q_i$ is in $\{\exists,\forall\}$. We denote by $X_\exists$ ($X_\forall$ respectively) the set of all variables quantified existentially (resp.\ universally).
	
	A QBF is true if for each existentially quantified variable $x_i$, there exists a (Skolem) function $s_i$, depending only on universally quantified variables $x_j$ with $j<i$, such that
	substituting these $s_i$ in $\varphi$ yields a tautology. Similarly, the formula is false if for each universally quantified variable $u_i$, there is a (Herbrand) function $h_i$, depending only on existentially quantified variables $x_j$ with $j<i$, such that substituting $h_i$ in $\varphi$ yields an unsatisfiable formula.

        \subsection{Some QBF proof systems, and the Dependency Scheme heuristic}\label{subsec:priorsystems}
        The propositional proof system Resolution certifies unsatisfiability of a propositional formula by adding clauses obtaining through resolution until the empty clause is added. This can be lifted to QBFs in many ways. One of the simplest ways is to use the resolution rule along  with a universal reduction rule, that allows removing a universal literal $u$ or $\bar{u}$ from a clause if it is not `{\em blocked}; that is, the clause has no existential literals quantified after $u$ in the prefix. This gives rise to the system \QURes; its restriction where resolution is allowed only on existential pivots is the system \QRes. The long-distance resolution rule generalises resolution by permitting seemingly useless universal tautological clauses under certain side-conditions, and gives rise to the system \LDQRes\ that generalises \QRes. Informally, in this system, a resolution on $x$ is permitted even if the resolvent ends up having $u$ and $\bar{u}$ for some universal variable $u$, provided $u$ is quantified to the right of $x$. The presence of $u$ and $\bar{u}$ together, often referred to as a {\em merged literal} $u^*$, is to be interpreted not as a tautology but as a place-holder for a partial strategy for $u$ depending on the value of the pivot $x$.

        In direct analogy to \QRes\ and \LDQRes\  are the proof systems \QTermRes\  and \LDQTermRes\ for certifying truth. Here the resolution is performed on terms, or cubes, with universal pivots, and existential literals can be reduced from a term if not blocked by universal literals quantified after them. The goal is to derive the empty term. A key point of difference is at the start; since the QBF is in PCNF, there are no terms to begin with. The Axiom  rule in these systems permits starting with any term that satisfies the matrix.

        For formal definitions of these proof systems, see for instance, Figure~2 in  \cite{BeyersdorffCJ-ToCT19} (for \QRes,\QURes, \LDQRes), Figure~2, in \cite{SlivovskySzeider-TCS16} (for \QTermRes).  To help readability, we also include the definitions of the rules in the appendix. 
        
        The notion of blocking, used in the reduction rules, stems from the understanding that if variables $x,y$ are quantified differently with $y$ quantified after $x$, then the value of $y$ in a (counter)model may non-trivially {\em depend} on $x$. If there is no literal blocking $x$, then the satisfaction of the clause (falsification of cube) should not rely on the unblocked universal (resp.\ existential) $x$. The dependency scheme heuristic refines this further. If a syntactic examination of the clause-variable structure can reveal that $y$ does not really depend on $x$, {\em even though it is quantified later}, then $x$ can be reduced even in the presence of $y$. This can drive the process towards the empty clause/term faster. Dependency schemes do precisely this. They identify pairs $(x,y)$ where $x$ and $y$ are quantified differently and where $y$ can be safely assumed to be independent of $x$. (Actually, they list pairs where $y$ may depend on $x$; the other pairs can be assumed to be independent.)
        The trivial dependency scheme associates with each QBF $\Phi$ the dependency set $\Dtrv(\Phi) = \{(x,y) \mid \textrm{$y$ is quantifed after, and differently from, $x$} \}$. Other schemes can associate subsets of this set. The schemes relevant to this paper are the standard scheme $\Dstd$ (Def~7 in \cite{SamerSzeider-JAR09} and Def~9 in \cite{SlivovskySzeider-TCS16}), and the reflexive resolution scheme $\Drrs$ (Defs~3,4,6 in \cite{SlivovskySzeider-TCS16}); these definitions are reproduced below. For every $\Phi$, $\Dtrv(\Phi) \supseteq \Dstd(\Phi) \supseteq \Drrs(\Phi)$. 
        \begin{definition}[Standard Dependency Scheme, {\cite[Def 9]{SlivovskySzeider-TCS16}}] 
	\label{def:std}
	For a PCNF QBF $\Phi=\calQ \vec{x} \cdot \varphi$, the pair $(x,y)$ is in $\Dstd(\Phi)$ if and only if $(x,y) \in \Dtrv(\Phi)$ and there exists a sequence of clauses $C_1, \cdots, C_n \in \varphi$ and a sequence of existential literals $\ell_1,\cdots,\ell_{n-1}$ such that:
	\begin{itemize}
		\item $x \in C_1$ and $y \in C_n$, and
		\item for each $i\in[n-1]$, $(x,\var(\ell_i))
		\in \Dtrv(\Phi)$, $\var(\ell_i) \in \var(C_i)$, and $\var(\ell_i) \in \var(C_{i+1})$.
	\end{itemize}
\end{definition}

\begin{definition}[Reflexive Resolution Path Dependency Scheme, {\cite[Defs 3,4,6]{SlivovskySzeider-TCS16}}]        
	\label{def:rrs}
	
	Fix any PCNF QBF $\Phi=\calQ \vec{x} \cdot \varphi$.
	
	An ordered pair of literals $\ell_1,\ell_{2k}$ is  connected (via a resolution path) if there is a sequence of clauses $C_1, \cdots, C_k \in \varphi$ and a sequence of existential literals $\ell_2,\cdots,\ell_{2k-1}$ such that:
	\begin{itemize}
		\item $\ell_1 \in C_1$ and $\ell_{2k} \in C_k$,
		\item For each $i\in [k-1]$, $\ell_{2i} = \neg \ell_{2i+1}$.
		\item For each  $i \in [k-1]$, $(\var(\ell_1),\var(\ell_{2i})) \in \Dtrv$. 
		\item For each $i\in [k]$, $\var(\ell_{2i-1}) \neq \var(\ell_{2i})$. 
		\item For each $i\in [k]$, $\ell_{2i-1}, \ell_{2i} \in C_i$.
	\end{itemize}
	
	An ordered pair of variables $(x,y)$ is a resolution-path dependency pair if
	both $(x,y)$ and $(\neg x, \neg y)$ are connected, or if 
	both $(x,\neg y)$ and $(\neg x, y)$ are connected.
	\[\Drrs(\Phi) = \{ (x,y) \mid (x,y) \in \Dtrv ; \textrm{$(x,y)$ is
		a resolution-path dependency pair} \}.\]
\end{definition}

Roughly, $(x,y) \in \Dstd(\Phi)$ if there is a sequence of clauses with the first containing $x$ or $\bar{x}$, the last containing $y$ or $\bar{y}$, and each pair of consecutive clauses containing an existential variable quantified to the right of $x$.  For $\Drrs$ scheme, $(x,y) \in \Drrs(\Phi)$ if $(x,y)$ and $(\bar{x},\bar{y})$ or $(x,\bar{y})$ and $(\bar{x},y)$ are connected by a sequence of clauses, each pair of consecutive clauses containing an existential variable in opposite polarities  quantified to the right of $x$.

        When  a dependency scheme $D$ is incorporated into any of the preceding proof systems, the reduction rule becomes more generally applicable, and the side-conditions concerning merged literals also become more permissive. This gives rise to the proof systems \QDRes, \LDQDRes, \QDTermRes, \LDQDTermRes. See,  for instance, Figure~3 and Section~3.3 in \cite{SlivovskySzeider-TCS16} (for \QTermRes\ and $\D$-reductions), and Figure~1 in \cite{PeitlSS-JAR19} (for \LDQDRes). For a clause and a dependency scheme $\D$, we denote by $\reduceD(C)$ the clause  obtained by removing all unblocked universal literals from $C$. Similarly, for a cube $C$, $\reduceD_\exists(C)$ denotes the cube obtained  by removing all unblocked existential literals from $C$.  We denote by $\reduceD(\Phi)$ the QBF $\Psi$ obtained by replacing each clause $C$ in the matrix of $\Phi$ with the clause $\reduceD(C)$.  When $\D=\Dtrv$, we use the notation $\reduce(C)$ and $\reduce(\Phi)$.

        An  important subclass of dependency schemes are the  so-called {\em normal} dependency schemes, which have the property of being "monotone" and "simple". See Def.~7 in \cite{PeitlSS-JAR19} for the precise definition. 
        (Though we will not need the precise definitions, for completeness, we include the definition of normal  dependency schemes in the appendix.)
        The dependency schemes $\Dtrv$, $\Dstd$, $\Drrs$ are all normal. This class of schemes  is of interest to us because it is known that for normal dependency schemes, $\QDRes$ and \LDQDRes\ are sound and refutationally complete \cite{PeitlSS-JAR19}. The system $\QDTermRes$ is known to be sound and complete on true formulas for $\D \in \{\Dtrv, \Drrs\}$ (in \cite{SlivovskySzeider-TCS16}, soundness is shown for a stronger dependency scheme, $\Dres$, implying soundness for $\Drrs$ as well). However the soundness of $\LDQDTermRes$ is not known for  dependency schemes other than $\Dtrv$.

    	We say  proof system $\PS_1$ simulates a proof system $\PS_2$ if some computable function transforms proofs in $\PS_2$ into proofs in $\PS_1$ with at most polynomial blow-up in proof size. If this function is also computable in polynomial time (in the given proof size), we say that $\PS_1$ $p$-simulates $\PS_2$. Two systems are said to be incomparable if neither simulates the other. 
        
By definition \LDQDRes\ $p$-simulates \LDQRes\ and \QDRes, both of which $p$-simulate \QRes. Similarly, both \LDQTermRes\ and \QDTermRes\ $p$-simulate \QTermRes.  It is also  known that \QDRes\ is exponentially stronger than \QRes\ for $\D=\Drrs$; \cite{BlinkhornBeyersdorff-SAT17,BlinkhornBeyersdorff-IJCAI18}.

	\subsection{The proof systems \texorpdfstring{\QCDCL\  with and without cube learning}{QCDCL and QCDCLcube}}
	The proof system for \QCDCL, as defined in \cite{BeyersdorffBohm-LMCS23}, formalizes reasoning in QCDCL algorithms operating on false formulas without cube learning. These algorithms construct {\em trails} or partial assignments in a specific way -- {\em decide} values of variables according to some policy, {\em propagate} values of existential variables that appear in clauses which become unit after restriction by the trail so far and by universal reduction (call such a clause the {\em antecedent} of the propagated literal) -- trying to satisfy the matrix. If a conflict is reached,  then the trail is inconsistent with any Skolem function. {\em Conflict analysis} is performed, and a new clause is {\em learnt} and added to the matrix. If the empty clause is learned, the formula is deemed false. The corresponding refutation in the proof system \QCDCL\ consists of the sequence of constructed trails, and for each trail the sequence of long-distance resolution  steps performed in conflict analysis to learn a clause. The full definition can be found in \cite{BeyersdorffBohm-LMCS23} (Def.\ 3.5).
	
	The above formulation of the \QCDCL\ system only considers trails  
	that end in a conflict. Trails ending in a satisfying assignment are ignored. This is enough to ensure refutational completeness -- the ability to prove all false QBFs false. However, from satisfying assignments, solvers can and do learn cubes (or terms), and this is necessary to prove true QBFs true. In \cite{BohmPB-AI24} it was shown that even  while refuting false QBFs, allowing cube  learning from satisfying assignments can be advantageous. This led to the definition of the proof system $\QCDCLcube$, and in \cite{BohmPB-AI24}, it was shown to be strictly stronger than the  standard \QCDCL\ system. 
	The main idea in cube learning systems is to consider satisfying assignments also as conflicts, albeit of a different kind, and to learn cubes from these conflicts. (The algorithm learns that such a trail is inconsistent with any Herbrand function.) Learnt cubes are added disjunctively to the matrix, which thus at intermediate stages is not necessarily in CNF but is the disjunction of a CNF formula and some cubes. With the augmented CNF matrices, cubes that become unit after existential reduction  can now propagate  universal variables in a way that falsifies the cube.  Also, with the  augmented CNF matrices, a trail may end up satisfying a cube rather than the CNF; this too is now a conflict, and conflict analysis involving term-resolution can be performed to learn a new cube.
        For formal details, see Section~3 in \cite{BohmPB-AI24}. 
	
        Three factors affect the construction of a refutation or verification, and are relevant for our generalized definition in the following section:
Three factors affect the construction of the refutation.
\begin{enumerate}
	\item 	The decision policy: how to choose the next variable to branch
		on.  In standard \QCDCL\ as defined in \cite{BeyersdorffBohm-LMCS23}, 
		decisions must
		respect the quantifier prefix level order. (Variables $x,y$ are at
		the same level if they are quantified the same way, and no variable
		with a different quantification appears between them in the prefix
		order.) This policy is called $\lo$. The most unrestricted policy is $\ao$; any variable can be decided at any point.

Other policies such as $\ao$, $\aro$, $\uniany$, are also
possible; see \cite{BeyersdorffBohm-LMCS23, BohmPB-JAR24}.
		
\item The unit propagation policy.  Upon a partial assignment $\alpha$
to some variables, when does a clause $C$ propagate a literal?
In standard \QCDCL\, the Reduction policy (used by most current QCDCL solvers \cite{LonsingBiere-JSAT10,PeitlSS-JAIR19}) is used: a clause $C$
propagates literal $\ell$ if after restricting $C$ by $\alpha$ and
applying all possible universal reductions, only $\ell$ remains.
Also, propagations are made as soon as possible; see the description of natural trails (Def 3.4 in \cite{BeyersdorffBohm-LMCS23}). 
		

\item The set of learnable clauses/cubes. These clauses/cubes explain the conflict at the end of a trail, and are derived using (possibly long-distance) clause/term resolution with propagated literals as pivots.
	\end{enumerate}

	\section{Adding dependency schemes to the \QCDCL\ proof system}
	\label{sec:QCDCLDep}
	
	The work done in \cite{ChoudhuryMahajan-JAR24}  was the first to formalise the addition of dependency schemes to the \QCDCL\ proof system. It was done only for the setting where trails follow level-ordered decisions, and there is no cube learning. Here we relax both these restrictions; we  
allow  dependency schemes to affect the decision policy of the trail, and we allow cube-learning.

\subsection{Defining the $\Dord{\D}$ systems}
\label{subsec:depord}
Dependency schemes can be used in QCDCL algorithms in many ways: \\
(1)~in specifying the decision order, (2)~in specifying how reduction, propagation, and learning of clauses are performed, and (3)~in specifying how these are  performed for cubes, if at all. We set up unified notation to describe all such QCDCL-based proof systems.

	\begin{definition}[$\Dord{\D}$]
		\label{def:depord}
		For a  dependency scheme $\D$, the decision policy $\Dord{\D}$ permits a decision on a variable $x$ at some point in a trail if all variables $y$ on which $x$ depends according to $\D$ (i.e.\ $(y,x)\in\D$), have already been assigned. 
	\end{definition}

	We define a new notation to describe QCDCL based proof systems introduced below with or without the option for cube learning. 
	
	\begin{definition}
		\label{def:QCDCLcubeDep}
                $\QCDCLDep{\Ord}{\ClausePolicy}{\CubePolicy}$ is the \QCDCL\ proof system where
		\begin{enumerate}
		\item $\Ord$ denotes the decision policy; e.g.\ $\lo$, $\Dord{\D'}$, $\ao$. \\
                          In $\Dord{\D'}$, $\D'$ denotes the dependency scheme, such as $\Dord{\Drrs}$, $\Dord{\Dstd}$. 
			\item $\ClausePolicy$ is the dependency scheme $\D$ used in reduction, propagation, and learning  for clauses. Note that this scheme need not be the same as the $\D'$ in $\Dord{\D'}$. 
			\item $\CubePolicy \in \{\NoCube, \CubeLD, \CubeD{\D}\}$ denotes the type of usage of cube learning;
			\begin{enumerate}
				\item $\NoCube$: No cube learning.
				\item $\CubeLD$: Cube Learning used, but no dependency scheme (only $\Dtrv$) in cube propagation, and cube learning using $\LDQTermRes$.
				\item $\CubeD{\D}$: Cube Learning used, dependency scheme  $\D$ used in propagation, and cube  learning is done using $\QDTermRes$. 
			\end{enumerate} 
		\end{enumerate}
	\end{definition}

	Note that in this notation,  clause learning always uses $\LDQDRes$, where $\D$ might well be $\Dtrv$.
        However, for cube learning, the  propagation and learning can use either long-distance term resolution \LDQDTermRes, or dependency schemes without long-distance   $\QDTermRes$, not both. We impose this condition  because the soundness of $\LDQDTermRes$ is not known. 

In the notation of \cref{def:QCDCLcubeDep}, the standard vanilla \QCDCL\ system denoted \QCDCLlored\    in \cite{BeyersdorffBohm-LMCS23} would be $\QCDCLDep{\lo}{\Dtrv}{\NoCube}$, whereas the \QCDCLcube\ system from \cite{BohmPB-AI24} would be $\QCDCLDep{\lo}{\Dtrv}{\CubeLD}$. Further,  the dependency-based system \QCDCLDrrs\ introduced in \cite{ChoudhuryMahajan-JAR24} would be  $\QCDCLDep{\lo}{\Drrs}{\NoCube}$.

	To define what a derivation in these \QCDCL\ systems must look like, we must first define trails and the learnable clauses and cubes from trails in this system. The following definitions are the natural generalisations of the corresponding ones from \cite{BeyersdorffBohm-LMCS23,BohmPB-AI24}. We give a short illustration in \cref{ex:illustration} after \cref{def:derivQCDCLcubeDep}. 
	
	The trail is a sequence of literals (or $\square$, $\top$)
	\begin{equation*}
		T = (p_{(0,1)}, \cdots , p_{(0,g_0)};\mathbf{d_1}, p_{(1,1)}, \cdots p_{(1,g_1)}; \mathbf{d_2}, \cdots  \cdots \cdots; \mathbf{d_r}, p_{(r,1)}, \cdots p_{(r,g_r)} )
	\end{equation*} 
	where the literals $d_i$ (in boldface) are decision literals, the literals $p_{i,j}$ are propagated literals. and no opposing literals appear. We can also view it as a set of literals or an assignment, and the corresponding clause (cube) is the disjunction (conjunction) of all literals in it.

	The learnable constraints from a   trail are defined as follows:
	
	\begin{definition}[learning from conflict]
		From a trail
		\begin{equation*}
			T = (p_{(0,1)}, \cdots , p_{(0,g_0)};\mathbf{d_1}, p_{(1,1)}, \cdots p_{(1,g_1)}; \mathbf{d_2}, \cdots  \cdots \cdots; \mathbf{d_r}, p_{(r,1)}, \cdots p_{(r,g_r)} )
		\end{equation*} 
		ending in a conflict $p_{(r,g_r)} \in \{\square,\top \}$,  the set $L_T$
		of learnable constraints has a clause $C_{(i,j)}$ associated with each propagated literal $p_{(i,j)}$ propagated
		in the trail if $p_{(r,g_r)}=\square$, and a cube associated with each propagated literal in the trail if $p_{(r,g_r)}=\top$. These associated clauses/cubes are constructed by tracing the
		conflict backwards through the trail as follows. ($\ante(\ell)$
		denotes the clause/cube that causes literal $\ell$ to be propagated; i.e.\ the antecedent.) Starting with $\ante(\square)$ (respectively $\ante(\top)$), we resolve in reverse over the antecedent clauses (cubes) that propagated the existential (universal) variables as described below. All such resulting clauses (cubes) are learnable constraints. 

In particular, if $p_{(r,g_r)}=\square$, then 
\begin{itemize}
\item $C_{(r,g_r)} = \reduceD(\ante(p_{(r,g_r)}))$.
\item For $i \in \{0,1, \cdots , r\}$ and $j \in [g_i -1]$,
  if $\var(p_{(i,j)})\in X_\exists$ and  $\bar{p}_{(i,j)} \in C_{(i,j+1)}$, then
  $C_{(i,j)}$ is the clause obtained by resolving the clause $C_{(i,j+1)}$ with the clause obtained from the antecedent of $p_{(i,j)}$ after reduction; such a resolution is possible on pivot $p_{(i,j)}$. Otherwise, $C_{(i,j)}$ is simply the same as $C_{(i,j+1)}$. Thus, $C_{(i,j)}$  equals \\
$\reduceD[\resolve(C_{(i, j+1)}, \reduceD(\ante(p_{(i,j)})), p_{(i,j)})]$
if $\var(p_{(i,j)})\in X_\exists$ and $\bar{p}_{(i,j)} \in C_{(i,j+1)}$, and is 
$C_{(i,j+1)}$ otherwise.
  
			
\item The learning process skips decision variables, so $C_{(i,g_i)}$ is defined using $p_{(i,g_i)}$ and $C_{(i+1,1)}$. For $i \in \{0,1, \cdots , r-1\}$.
$C_{(i,g_i)}$ equals $\reduceD[\resolve(C_{(i+1, 1)}, \reduceD(\ante(p_{(i,g_i)})), p_{(i,g_i)})]$ if  $\var(p_{(i,g_i)})\in X_\exists$ and $\bar{p}_{(i,g_i)} \in C_{(i+1,1)}$, and is $C_{(i+1,1)}$ otherwise.

\end{itemize}
		If $p_{(r,g_r)}=\top$, then non-trivial cube-resolution is performed when $p_{(i,j)}$ is universal, not existential. The set $L_T$ depends on $\CubePolicy$.  If $\CubePolicy=\CubeLD$, then  $\reduceD$ is replaced by $\reduce_\exists$.	If $\CubePolicy = \CubeD{\D}$, then  $\reduceD$ is replaced by $\reduceD_\exists$, but the $\resolve$ step is performed  only if it is a valid $\QDTermRes$ resolution step; otherwise we use the previously learnt cube, just as we do in clause learning when the resolution on $p_{(i,j)}$ is not defined.
	\end{definition}

	\begin{definition}[learning from satisfaction]
		From a trail $T$ 
		that assigns all variables, satisfies all clauses, and does not satisfy any cube, 
		the set of learnable constraints is defined as follows:
For any set $L$ of literals, let $t_L$ denote the cube that is the conjunction of all literals in $L$. Viewing the trail $T$ as a set of literals, 
                \[
                L_T = \left\{
                \begin{array}{ll}
                  \{\reduce_\exists(t_{T'}) \mid T' \subset T; \textrm{~$T'$ satisfies all axioms and learnt clauses}\}
                  &  \textrm{if~} \CubePolicy=\CubeLD \\
                  \{\reduceD_\exists(t_{T'}) \mid T' \subset T;  \textrm{~$T'$ satisfies all axioms and learnt clauses}\}
                  & \textrm{if~} \CubePolicy=\CubeD{\D}. 
                \end{array} \right.
                \]
		
	\end{definition}
	
	\begin{definition}
		\label{def:derivQCDCLcubeDep}
		 For a specific choice of $\Ord$, $\ClausePolicy$, $\CubePolicy$, let $\PS$ be the proof system $\QCDCLDep{\Ord}{\ClausePolicy}{\CubePolicy}$. A $\PS$-derivation $\iota$ from a PCNF QBF $\Phi = \calQ \vec{x} \cdot \varphi$ of a clause or cube  $C$ is a sequence $\iota$ of triples, 
		 $ \iota= (T_1, C_1, \pi_1), \cdots , (T_m,C_m, \pi_m)$, where each $T_i$ is a trail, each $C_i$ is a clause/cube, and $C_m=C$. The objects $T_i,C_i,\pi_i$ are as defined below.

                 For each $i\in[m]$, $\varphi_i$ is a propositional formula of the form 
$\varphi_i = \left( \bigwedge_{C\in\mathcal{C}_i} C\right) \vee \left(\bigvee_{T\in\mathcal{T}_i} T\right),$ where $\mathcal{C}_i$ is a set of clauses and $\mathcal{T}_i$ is a set 
		 of cubes. These formulas are  defined iteratively; initially we have all the clauses of $\varphi$ and no terms, and after each trail either a clause or a term is learnt and added. Formally,
		 \[
		\begin{array}{rrlrl}
			& \mathcal{C}_1 & = \{C \mid C\in  \varphi\}, &  \mathcal{T}_1 & = \emptyset\\
			\textrm{If $C_i$ is a clause:}
			& \mathcal{C}_{i+1} &= \mathcal{C}_i \cup\{C_i\}, 
			& \mathcal{T}_{i+1} &= \mathcal{T}_i.\\
			\textrm{If $C_i$ is a cube:}
			& \mathcal{C}_{i+1} &= \mathcal{C}_i, 
			& \mathcal{T}_{i+1} &= \mathcal{T}_i \cup \{C_i\}.
		\end{array}
		\]
For each $i\in[m]$, $\Phi_i$ is the QBF with the same quantifier prefix as $\Phi$, and inner formula $\varphi_i$. 
For each $i\in[m]$, $T_i$ is a trail from the formula $\Phi_i$,
$C_i$ is a learnable clause or cube from $T_i$, 
and 
$\pi_i$ is the 
derivation of $C_i$ from $\Phi$ in the system as per $\ClausePolicy$ or $\CubePolicy$.
		
		A refutation in these systems is a derivation of the empty clause $\square$, and a verification in this system is a derivation of the empty term $\top$.
	\end{definition}
	
\begin{example}
  \label{ex:illustration}
  The $\TwoPHPandCT_n$ formulas, defined in \cite{ChoudhuryMahajan-JAR24}(Section 4.5),  have the prefix\\
$\calQ = \forall u   \exists x_1\cdots  x_{s_n} ~~ \exists y_1 \cdots  y_{s_n} ~~\forall v  \exists z_1,z_2$ and the matrix 
  \[
  \begin{array}{lc}
&  u \vee \PHP_n(x_1, \cdots, x_{s_n}) \qquad \bar{u} \vee \PHP(y_1, \cdots, y_{s_n}) \\
&  v \vee z_1 \vee z_2 \ ,\  v \vee \bar{z}_1 \vee z_2 \ , \  v \vee z_1 \vee \bar{z}_2 \ , \ v \vee \bar{z}_1 \vee \bar{z}_2 \\ 
\end{array}
\]
Here $\PHP_n$ refers to the propositional Pigeon-hole-principle formulas that assert the existence of a map from $n+1$ pigeons to $n$ holes without collision; these formulas are known to be exponentially hard for resolution. Due to $\PHP_n$, the matrix is unsatisfiable, and thus cube-learning makes no difference. So we consider \NoCube.

Consider refutations using $\Dord{\D}$, with $\D=\Drrs$ or $\Dstd$. For these formulas, it can be seen that $\Drrs=\emptyset$
and $\Dstd = \{(u,x_i),(u,y_i):  \textrm{~for all~} i \} \cup \{(v,z_1), (v,z_2)\}$.  The  $v,z_1,z_2$ variables are completely  independent from the $u,x,y$ variables; neither $(x_i,v)$ nor $(y_i,v)$  is in $\D$ for any $i$. Therefore using the $\Dord{\D}$ decision policy, we can decide $v$ or $z_i$ in the beginning.  Consider the trail that decides to assign $v$ to false and then $z_1$ to true. The clause $v \vee \bar{z}_1 \vee  z_2$ becomes unit, so $z_2$ is propagated. Then the clause $v \vee \bar{z}_1 \vee \bar{z}_2$ becomes empty, and the empty clause is propagated, leading to conflict. That is, in $ T_1 =  \mathbf{\bar{v}; z_1} , z_2 , \square$,  we have 
$\ante(\square) = v \vee \bar{z}_1 \vee \bar{z}_2$, and 
$\ante(z_2) =  v \vee \bar{z}_1 \vee  z_2$. In conflict analysis, we first resolve (the reduced version of) $\ante(\square)$ with $\ante(z_2)$ to obtain $v \vee \bar{z}_1$.

If $\Drrs$ is used in learning, then this can be reduced further to $\bar{z}_1$, which is learnt, i.e.\ included in $\mathcal{C}_2$ and $\Phi_2$. 
The next trail then begins with the propagaed literal $\bar{z}_1$, and propagates further; 	$T_2 =  \bar{z}_1 , z_2 , \square $, and 
$\ante(\square) = v \vee {z}_1 \vee \bar{z}_2$, 
$\ante(z_2) =  v \vee {z}_1 \vee  z_2$,
$\ante(\bar{z}_1) =  \bar{z}_1$, 
allowing us to learn  $\square$. This is a refutation in
$\QCDCLDep{\Drrs}{\Drrs}{\NoCube}$ or
$\QCDCLDep{\Dstd}{\Drrs}{\NoCube}$, but not in
$\QCDCLDep{\lo}{\Drrs}{\NoCube}$.

If $\Dstd$ or $\Dtrv$ is used in learning, then from trail $T_1$ the clause $v\vee \bar{z}_1$ is learnt (i.e.\ included in $\mathcal{C}_2$ and $\Phi_2$) since it cannot be further reduced. The next trail must again begin with a decision. With $T_2 =  \mathbf{\bar{v}}, \bar{z}_1 , z_2 , \square $,
$\ante(\square) = v \vee {z}_1 \vee \bar{z}_2$,
$\ante(z_2) =  v \vee {z}_1 \vee  z_2$, 
$\ante(\bar{z}_1) = v \vee \bar{z}_1$, 
allowing us to learn $\square$. This is a refutation in
$\QCDCLDep{\D_1}{\D_2}{\NoCube}$, where $\D_1$ could be $\Drrs$ or $\Dstd$ but not $\Dtrv$, and $\D_2$ could be $\Dstd$ or $\Dtrv$. \qed
\end{example}

	By definition, $\Dord{\D}$  generalises  $\lo$, and $\ao$ generalises $\Dord{\D}$. Thus, 
	\begin{observation}
	  \label{obs:lodoao}
For a $\CubePolicy \in \{\NoCube, \CubeLD, \CubeD{\D}\}$,  and dependency schemes $\D$, $\D'$, 
\begin{itemize}
  \item Every derivation in $\QCDCLDep{\lo}{\D}{\CubePolicy}$ is a
    derivation in $\QCDCLDep{\Dord{\D'}}{\D}{\CubePolicy}$.
  \item Every derivation in $\QCDCLDep{\Dord{\D'}}{\D}{\CubePolicy}$ is a
    derivation in $\QCDCLDep{\ao}{\D}{\CubePolicy}$.
\end{itemize}
	\end{observation}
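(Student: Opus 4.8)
The plan is to exploit the fact that the three systems $\QCDCLDep{\lo}{\D}{\CubePolicy}$, $\QCDCLDep{\Dord{\D'}}{\D}{\CubePolicy}$ and $\QCDCLDep{\ao}{\D}{\CubePolicy}$ share the same $\ClausePolicy$ (namely $\D$) and the same $\CubePolicy$, and differ \emph{only} in the decision policy $\Ord$. Consequently the notions of propagation, antecedent, the learnable-constraint sets (both from conflict and from satisfaction), and the learning derivations $\pi_i$ are all defined verbatim the same way in the three systems, and by the iterative rule of \cref{def:derivQCDCLcubeDep} the auxiliary formulas $\Phi_i$ are determined by the learnt clauses/cubes $C_j$ alone. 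By \cref{def:derivQCDCLcubeDep}, a sequence $\iota=(T_1,C_1,\pi_1),\ldots,(T_m,C_m,\pi_m)$ is a valid derivation exactly when (a)~each $T_i$ is a legal trail from $\Phi_i$ under the decision policy, (b)~each $C_i$ is a learnable constraint from $T_i$, and (c)~$\pi_i$ is the associated learning derivation. Since (b) and (c) depend only on $\D$ and $\CubePolicy$, I would reduce both bullets to a single statement about decision policies: every $\lo$-legal trail is $\Dord{\D'}$-legal, and every $\Dord{\D'}$-legal trail is $\ao$-legal.

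First I would note that a trail is legal iff each of its propagation steps is a valid propagation and each of its decision literals is permitted by the decision policy. The propagation discipline (the Reduction policy together with propagate-as-soon-as-possible, i.e.\ the natural-trail condition) refers only to clauses, cubes, reductions under $\D$, and antecedents, and makes no reference to $\Ord$; hence the forced propagations available at each point of a fixed literal sequence are identical across the three systems, so the propagation steps remain valid and the natural-trail condition is preserved. Thus only the legality of the \emph{decision} literals can differ between the systems, and it suffices to compare the decision conditions at each decision of $T_i$.

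For the first bullet I would argue $\lo$-legal $\Rightarrow\Dord{\D'}$-legal at every decision. Suppose a decision on a variable $x$ is $\lo$-legal in $T_i$; by definition of $\lo$ this means every variable lying at a strictly earlier prefix level than $x$ is already assigned at that point. Take any $y$ with $(y,x)\in\D'$. Since $\D'(\Phi_i)\subseteq\Dtrv(\Phi_i)$ for every dependency scheme, $(y,x)\in\Dtrv(\Phi_i)$, so $y$ is quantified before $x$ and with the opposite quantifier type; hence $y$ sits at a strictly earlier level than $x$ and is therefore already assigned. This verifies the precondition of \cref{def:depord}, so the decision is $\Dord{\D'}$-legal, and as this holds at every decision, the whole trail is $\Dord{\D'}$-legal. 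For the second bullet, $\ao$ imposes no constraint on which variable may be decided, so every $\Dord{\D'}$-legal decision is trivially $\ao$-legal and hence every $\Dord{\D'}$-legal trail is $\ao$-legal. Combining these two claims with the policy-independence of (b) and (c) shows that the very same sequence $\iota$ is a valid derivation in the larger system. I do not expect a genuine obstacle; the only point requiring care is to confirm that nothing beyond the decision literals---in particular the natural-trail propagation discipline and the learnable-constraint definitions---depends on $\Ord$, which is immediate from the definitions.
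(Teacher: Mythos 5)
Your proposal is correct and matches the paper's reasoning: the paper dispatches this observation in one line ("By definition, $\Dord{\D}$ generalises $\lo$, and $\ao$ generalises $\Dord{\D}$"), and your argument is simply a careful unpacking of that same fact — that the three systems differ only in the decision policy, and that the policies are nested in permissiveness because $\D'\subseteq\Dtrv$ forces any $\D'$-dependency of $x$ to lie at a strictly earlier prefix level. Nothing is missing; you have just made explicit what the paper treats as immediate.
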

	
        Trails in a system without cube learning are also trails in the corresponding system with cube learning. 
        Hence:  	\begin{observation}
	  \label{obs:nocubeImplycube}
          For $\CubePolicy \in \{\CubeLD, \CubeD{\D}\}$, and for any decision policy  $\Ord$, 
		any derivation in $\QCDCLDep{\Ord}{\D}{\NoCube}$ is also a derivation in $\QCDCLDep{\Ord}{\D}{\CubePolicy}$.
	\end{observation}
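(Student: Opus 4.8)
The plan is to take an arbitrary $\QCDCLDep{\Ord}{\D}{\NoCube}$-derivation $\iota = (T_1, C_1, \pi_1), \ldots, (T_m, C_m, \pi_m)$ and verify, object by object, that it is also a legal derivation in $\QCDCLDep{\Ord}{\D}{\CubePolicy}$. The crucial first step is to observe that the intermediate formulas coincide in the two systems. Since the $\NoCube$ system never learns cubes, every $C_i$ is a clause; hence in the notation of \cref{def:derivQCDCLcubeDep} we have $\mathcal{T}_i = \emptyset$ for all $i$, each $\varphi_i$ is a plain CNF, and the QBF $\Phi_i$ is literally the same whether $\iota$ is read in the $\NoCube$ system or in the cube-learning system.

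First I would show that each $T_i$ is a valid trail from $\Phi_i$ in the cube-learning system. The decision policy $\Ord$ and the clause-side reduction/propagation policy $\D = \ClausePolicy$ are identical in both systems, so every decision and every clause-propagation step of $T_i$ is admissible. The only extra behaviour available in the cube-learning system is cube propagation (a universal literal forced by a cube going unit) and early termination by a cube-conflict or by a satisfying assignment. Each of these requires at least one cube in $\Phi_i$, and $\mathcal{T}_i = \emptyset$, so none can occur; therefore $T_i$ reaches exactly the same clause-conflict $\square$, at the same position, as in the $\NoCube$ run. Next I would check learnability: because $T_i$ ends with $p_{(r,g_r)} = \square$, only the $p_{(r,g_r)} = \square$ case of ``learning from conflict'' applies, and that case is verbatim independent of $\CubePolicy$. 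Hence $C_i$ is a learnable constraint of $T_i$ in the cube-learning system, and the $\LDQDRes$-derivation $\pi_i$ carries over unchanged.

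The argument is essentially bookkeeping, and the single point requiring care is ruling out that switching on cube learning lets $T_i$ stop earlier than in the $\NoCube$ run, for instance by hitting satisfaction or a cube-conflict before the clause-conflict. This is precisely where the invariant $\mathcal{T}_i = \emptyset$ does the work: with no cube in the matrix, no cube can become unit, be satisfied, or be falsified, so the only terminating event is the clause-conflict, which fires identically in both systems. Assembling the three observations --- equal intermediate formulas, valid trails, and identical clause-learning --- shows that $\iota$ is a valid $\QCDCLDep{\Ord}{\D}{\CubePolicy}$-derivation of the same final clause or cube.
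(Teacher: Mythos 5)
Your proposal is correct and follows the same reasoning the paper uses, which it compresses into the single remark that trails without cube learning are also trails in the corresponding system with cube learning; your invariant $\mathcal{T}_i=\emptyset$ (no cubes ever enter the matrix, so no cube propagation, cube conflict, or satisfaction event can fire) is exactly the point that makes that remark go through. The extra bookkeeping you supply is a faithful elaboration rather than a different route.
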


If the matrix of the given PCNF formula is \textit{unsatisfiable}, then no satisfying trail can ever be constructed, no matter what policy is used, so no "cube learning" can happen. Hence:
	\begin{observation}
		\label{obs:unsatnocubeEqcube}
		For $\CubePolicy \in \{\CubeLD, \CubeD{\D}\}$,   and for any decision policy  $\Ord$,
                if a PCNF formula $\Phi$ has an unsatisfiable matrix, then any derivation from $\Phi$ in the system $\QCDCLDep{\Ord}{\D}{\CubePolicy}$  is also a derivation in $\QCDCLDep{\Ord}{\D}{\NoCube}$.        
	\end{observation}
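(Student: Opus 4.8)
The plan is to argue by induction on the derivation that no cube is ever learnt, so that the cube-learning machinery is never exercised and the $\CubePolicy$-derivation coincides verbatim with a $\NoCube$-derivation. Concretely, I would establish the invariant that for every $i\in[m]$, the cube-set $\mathcal{T}_i$ is empty and the clause-set $\mathcal{C}_i$ is (propositionally) unsatisfiable. The base case is immediate from \cref{def:derivQCDCLcubeDep}: $\mathcal{T}_1 = \emptyset$ and $\mathcal{C}_1$ consists exactly of the clauses of $\varphi$, which is unsatisfiable by hypothesis.

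For the inductive step, assume $\mathcal{T}_i = \emptyset$ and $\bigwedge_{C\in\mathcal{C}_i} C$ is unsatisfiable; then $\varphi_i$ is a genuine CNF and $\Phi_i$ is an ordinary PCNF QBF. The crux is to show that the trail $T_i$ can only yield a clause, never a cube. A cube is produced in exactly two ways: by \emph{learning from satisfaction}, or by \emph{learning from conflict} with terminal symbol $\top$. The former requires a total assignment satisfying every clause of $\varphi_i$, which is impossible since $\mathcal{C}_i$ is unsatisfiable. The latter requires the trail either to satisfy some cube of the matrix or to propagate a universal literal through a unit cube; both require $\mathcal{T}_i \neq \emptyset$, contradicting the hypothesis. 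Hence $T_i$ can only terminate in a $\square$-conflict, so $C_i$ is a clause, giving $\mathcal{T}_{i+1} = \mathcal{T}_i = \emptyset$ and $\mathcal{C}_{i+1} = \mathcal{C}_i \cup \{C_i\} \supseteq \mathcal{C}_i$. Since the conjunction over the larger set $\mathcal{C}_{i+1}$ contains the already-unsatisfiable conjunction over $\mathcal{C}_i$, it too is unsatisfiable, closing the induction. Note that this last step needs no soundness argument about the learnt clause $C_i$: merely adding clauses can only preserve unsatisfiability.

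Finally I would observe that, because $\mathcal{T}_i = \emptyset$ for all $i$, no trail $T_i$ ever contains a universal literal propagated through a cube, and every $C_i$ is a clause learnt using the clause policy $\ClausePolicy=\D$ via $\LDQDRes$; these are precisely the trail-construction and clause-learning rules of $\QCDCLDep{\Ord}{\D}{\NoCube}$, and on the cube-free matrices $\Phi_i$ they agree with those of $\QCDCLDep{\Ord}{\D}{\CubePolicy}$. Thus the sequence $(T_1,C_1,\pi_1),\dots,(T_m,C_m,\pi_m)$ is, unchanged, a valid derivation in the $\NoCube$ system. The only point genuinely requiring care is the $\top$-terminal case of learning from conflict: one must confirm that $\top$ can arise neither from satisfying a matrix cube nor from universal cube-propagation once $\mathcal{T}_i=\emptyset$, i.e.\ that every mechanism capable of introducing $\top$ consumes a pre-existing cube. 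Everything else is bookkeeping about the iterative construction of the $\Phi_i$.
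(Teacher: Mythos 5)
Your proposal is correct and follows essentially the same route as the paper, which justifies this observation with the single remark that an unsatisfiable matrix means no satisfying trail can ever be constructed, so no cube learning can happen. Your inductive invariant ($\mathcal{T}_i=\emptyset$ and $\mathcal{C}_i$ unsatisfiable throughout, with the check that a $\top$-conflict also requires a pre-existing cube) is simply a careful formalisation of that one-line argument.
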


        The following result is shown in \cite{ChoudhuryMahajan-JAR24}.
        \begin{proposition}[Theorem 1  in \cite{ChoudhuryMahajan-JAR24}]\label{prop:lo-d-no-cube-complete}
          For any normal dependency scheme $\D$, the system $\QCDCLDep{\lo}{\D}{\NoCube}$ is refutationally complete.
        \end{proposition}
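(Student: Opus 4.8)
The plan is to adapt the refutational-completeness argument for the base system $\QCDCLDep{\lo}{\Dtrv}{\NoCube}$ from \cite{BeyersdorffBohm-LMCS23}, systematically replacing the trivial reductions by $\D$-reductions and checking that normality of $\D$ keeps every step intact. The argument rests on three pillars: (i)~learned clauses are sound, so the accumulated formula stays false throughout; (ii)~as long as $\square$ has not been learned, a conflict trail always exists under the $\lo$ policy, so the process can always learn another clause; and (iii)~the process terminates, and the only way it can terminate is by learning $\square$.

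For pillar (i), observe that with $\ClausePolicy=\D$ the conflict-analysis derivation $\pi_i$ attached to each trail is, by construction, a valid $\LDQDRes$ derivation: each step is a resolution followed by a $\reduceD$ reduction. Since $\D$ is normal, $\LDQDRes$ is sound \cite{PeitlSS-JAR19}, so every learned clause $C_i$ is a sound consequence of $\Phi$. Hence each augmented QBF $\Phi_i$ has the same truth value as $\Phi$; in particular $\Phi_i$ remains false. For pillar (ii), falsity of $\Phi_i$ supplies a Herbrand (universal winning) strategy $h$, where the value of each universal $u$ depends only on existential variables quantified before $u$. I would build a natural trail that propagates all forced existential literals and, on reaching a universal block, decides its variables according to $h$; the $\lo$ order guarantees the existentials that $h$ consults are already assigned. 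Because $h$ wins against every existential play, the completed assignment falsifies some matrix clause $C$, and since $\reduceD(C)\subseteq C$ the $\D$-reduced clause is falsified no later than $C$. Thus the trail is driven to a conflict and yields a learnable clause; soundness of $\D$-reduction for normal $\D$ ensures that every such conflict --- possibly reached earlier than in the $\Dtrv$ system, because the reductions are more aggressive --- is genuine.

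Pillars (i) and (ii) together give a process that never gets stuck and always learns sound clauses while $\Phi_i$ stays false; pillar (iii), termination, is the main obstacle. Since there are only finitely many clauses over the variables of $\Phi$, it suffices to guarantee that the clause database grows monotonically until $\square$ appears, i.e.\ that a suitable choice of trails always learns a clause not already present (until a root-level conflict forces $\square$). This is exactly the crux inherited from \cite{BeyersdorffBohm-LMCS23}, and the delicate point here is that $\D$-reductions change both which literals propagate and which clauses are learned relative to the $\Dtrv$ analysis, so the progress measure and the accompanying trail-selection strategy must be re-verified in the $\D$-reduced dynamics.

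The enabling observation throughout, and the reason I expect the $\Dtrv$ argument to survive unchanged at the conceptual level, is normality: it is precisely what makes $\reduceD$ sound, so that more aggressive reductions only shrink clauses and prune the search space monotonically, never introducing a spurious conflict or a spurious propagation. Consequently the three pillars are mutually consistent --- soundness keeps $\Phi_i$ false, falsity guarantees conflict trails, and conflict trails feed the termination measure --- and the completeness of $\QCDCLDep{\lo}{\D}{\NoCube}$ for every normal $\D$ follows, with the entire burden of novelty concentrated in re-establishing the termination/progress step of pillar (iii) under $\D$-reductions.
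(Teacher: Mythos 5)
First, a point of reference: the paper does not prove this proposition at all --- it is imported verbatim as Theorem~1 of \cite{ChoudhuryMahajan-JAR24}, and the only ``proof'' present is that citation. So your attempt to re-derive it by adapting the $\Dtrv$ argument of \cite{BeyersdorffBohm-LMCS23} is being judged on its own. Pillars (i) and (ii) are right in outline: normality of $\D$ makes the conflict-analysis derivations valid, sound $\LDQDRes$ derivations (this is exactly \cref{lem:soundness}), so each $\Phi_i$ stays false; and deciding universals by a Herbrand strategy under $\lo$ does force every trail into a conflict, since a completed assignment consistent with the strategy falsifies some matrix clause and $\reduceD$ only removes literals.

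The genuine gap is pillar (iii), which you name but explicitly decline to close (``must be re-verified'', ``the entire burden of novelty concentrated in re-establishing the termination/progress step''). That step \emph{is} the completeness theorem: pillars (i) and (ii) only show that the process can always produce \emph{some} learnable clause, not that it ever produces a \emph{new} one, let alone $\square$. Finiteness of the clause space does not help unless you exhibit a trail-selection and clause-selection rule with a strictly decreasing progress measure --- e.g.\ showing that a conflict reached with no decision literals lets you resolve back over all propagated literals to a clause that $\reduceD$ collapses to $\square$ (which needs an argument that no blocked universal literal survives), and that otherwise the learnable set contains an asserting clause whose addition strictly extends the decision-level-zero propagations. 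Verifying this under $\D$-reduced propagation, where both the set of propagated literals and the learned clauses differ from the $\Dtrv$ dynamics, is precisely the work you have deferred, so the proof as written is incomplete. A secondary imprecision: you repeatedly invoke ``soundness of $\D$-reduction'' to justify steps of the completeness direction (``every such conflict is genuine''); completeness needs reachability of conflicts and progress, not genuineness, and conflating the two hides the fact that normality of $\D$ is used only in pillar (i).
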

        

	
		

A minor adaptation of the proof of Theorem 3.9 in \cite{BeyersdorffBohm-LMCS23} shows  the following soundness: 
	\begin{theorem}
		\label{thm:qcdclao-sound}
		For any normal dependency scheme $\D$, the proof systems $\QCDCLDep{\ao}{\D}{\CubeLD}$ and $\QCDCLDep{\ao}{\D}{\CubeD{\D}}$ are sound. 
	\end{theorem}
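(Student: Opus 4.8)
The plan is to establish the two halves of soundness separately: \emph{refutational} soundness (a derivation of $\square$ forces $\Phi$ to be false) and \emph{verificational} soundness (a derivation of $\top$ forces $\Phi$ to be true). In both cases the strategy is the one behind Theorem~3.9 in \cite{BeyersdorffBohm-LMCS23}: show that the constraint learnt from each trail is derivable from $\Phi$ in a proof system already known to be sound, so that a derivation of $\square$ (resp.\ $\top$) yields a refutation (resp.\ verification) in that underlying system. For clauses the target system is \LDQDRes, which is sound for every normal $\D$ by \cite{PeitlSS-JAR19}; for cubes it is \LDQTermRes\ when $\CubePolicy=\CubeLD$ (sound, as used already in \cite{BohmPB-AI24}) and \QDTermRes\ when $\CubePolicy=\CubeD{\D}$.

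First I would prove refutational soundness by induction on the length of the derivation, with the invariant that every clause in $\mathcal{C}_i$ is \LDQDRes-derivable from $\varphi$. The base case is immediate since $\mathcal{C}_1$ consists of the clauses of $\varphi$. For the inductive step, consider a trail $T_i$ ending in $\square$ and the clauses $C_{(i,j)}$ produced by the conflict-learning rule. Reading these in reverse gives a candidate \LDQDRes\ derivation: each nontrivial step is $\reduceD[\resolve(\cdot,\reduceD(\ante(\cdot)),\cdot)]$ on an \emph{existential} pivot, exactly matching a resolution-then-$\D$-reduction step of \LDQDRes. Two things must be checked: (i) that the $\reduceD$ applications are legal \LDQDRes\ reductions, which holds by the definition of $\reduceD$; and (ii) that whenever a merged universal literal appears in a resolvent, the universal variable is quantified to the right of the pivot, i.e.\ the long-distance side condition. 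The observation that makes this go through is that the learning rule resolves \emph{only} over existential propagations: universal literals that were propagated by learnt \emph{cubes} are skipped exactly as decision literals are, so they never serve as pivots and the clause-learning derivation involves clauses alone. Once the invariant holds, $\square\in\mathcal{C}_m$ gives an \LDQDRes\ refutation of $\Phi$, and soundness of \LDQDRes\ for normal $\D$ yields that $\Phi$ is false.

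Verificational soundness is the dual statement, proved by the symmetric induction maintaining that every cube in $\mathcal{T}_i$ is derivable from $\Phi$ in the relevant term-resolution system. The cubes entering $\mathcal{T}_i$ arise in two ways. A cube learnt from satisfaction is $\reduce_\exists(t_{T'})$ (resp.\ $\reduceD_\exists(t_{T'})$) for a subtrail $T'$ satisfying all axioms and learnt clauses; since $T'$ in particular satisfies the original matrix, $t_{T'}$ is a legal Axiom of \QTermRes\ and its existential reduction is a legal reduction step, so these are sound starting cubes. A cube learnt from a $\top$-conflict is obtained by tracing back over the \emph{universal} propagations (cube antecedents), skipping clause-propagated existential literals just as decisions are skipped; each step is a term resolution on a universal pivot followed by $\reduce_\exists$ (resp.\ $\reduceD_\exists$). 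For $\CubePolicy=\CubeLD$ these are \LDQTermRes\ steps with $\Dtrv$, which is sound; for $\CubePolicy=\CubeD{\D}$ the learning rule performs the resolution \emph{only} when it is a legal \QDTermRes\ step (otherwise reusing the previously learnt cube), so no long-distance term resolution is ever invoked and the derivation stays inside \QDTermRes. Hence $\top\in\mathcal{T}_m$ gives a verification in the underlying term system, whose soundness yields that $\Phi$ is true.

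The main obstacle, and the place where the ``minor adaptation'' needs genuine checking, is twofold. On the clause side it is re-establishing the long-distance merge invariant in the presence of $\D$-reductions and of a matrix that is no longer pure CNF: one must confirm that the cube-propagated universal literals, which now populate trails, never enter clause learning as pivots and do not corrupt the merge conditions of the existential pivots that are used; this invariant rests on the fact that antecedent clauses become unit only after $\reduceD$ and that $\D$-reduction respects the prefix order, so it is driven by prefix order rather than by trail order and is therefore insensitive to the $\ao$ policy. On the cube side it is the appeal to soundness of \QDTermRes\ for the scheme used in \CubeD{\D}; this is known for $\Dtrv$ and $\Drrs$ \cite{SlivovskySzeider-TCS16}, and extends to $\Dstd$ because $\Drrs\subseteq\Dstd$ makes every $\Dstd$-permitted reduction already legal under $\Drrs$, so that every \mbox{$\mathtt{Q}(\Dstd)$-$\TermRes$} proof is a \mbox{$\mathtt{Q}(\Drrs)$-$\TermRes$} proof and hence sound. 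For an arbitrary normal $\D$ one should either restrict to schemes for which \QDTermRes\ is known sound or supply the corresponding soundness separately; this is the step I would treat with the most care.
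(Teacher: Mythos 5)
Your proposal is correct and follows essentially the same route as the paper: reduce soundness to showing that every learnt clause is a valid \LDQDRes\ derivation (the merge/side-condition check being the only real work, done in the paper's \cref{lem:soundness} by replicating Lemma~3.7 and Proposition~3.8 of \cite{BeyersdorffBohm-LMCS23}), that cube learning under \CubeLD\ is sound by \cite[Theorem~3.8]{BohmPB-AI24}, and that under \CubeD{\D} every term-resolution step is by definition a legal \QDTermRes\ step. Your closing caveat about the soundness of \QDTermRes\ for an arbitrary normal $\D$ is well taken --- the paper's preliminaries only assert this for $\Dtrv$ and $\Drrs$, and your observation that it transfers to $\Dstd$ via $\Drrs(\Phi)\subseteq\Dstd(\Phi)$ is a point the paper's own proof passes over silently.
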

	
	\begin{proof}
		
		
          To show that both these systems are sound, it is enough to show the following three statements: (1)~The derivation of any learnt clause is a valid $\LDQDRes$ derivation.  
          (2)~If $\CubePolicy = \CubeLD$, the derivation of any learnt cube is a valid $\LDQTermRes$ derivation, and the addition of cubes when learning from satisfaction is sound. (3)~If $\CubePolicy = \CubeD{\D}$, the derivation of any learnt cube is a valid $\QDTermRes$ derivation. From these three it follows that sticking together the derivations of the final learnt empty clause/term  gives a proof in the corresponding system, and all these systems are known to be sound. 

          Statement~(3) is true by definition: term resolution in learning is performed only if it is valid in $\QDTermRes$. For Statement~(2),  cube learning is shown to be sound in \cite[Theorem 3.8]{BohmPB-AI24}.
          For statement~(1), we need to show that the resolution steps performed while learning respect the side-conditions of $\LDRD{\D}$. The analogous statement when $\D=\Dtrv$ is proved in  \cite[Lemma 3.7, Proposition 3.8, Theorem 3.9]{BeyersdorffBohm-LMCS23}, but the same proof works with any  $\D$. It is formally shown in \cref{lem:soundness} below.
	\end{proof}

\begin{restatable}{lemma}{restateLemmaSoundness}
		\label{lem:soundness}
		For any normal dependency scheme $\D$, the derivations of a clause learnt from a trail in the proof systems $\QCDCLDep{\ao}{\D}{\CubeLD}$ and $\QCDCLDep{\ao}{\D}{\CubeD{\D}}$ are valid $\LDQDRes$ derivations.
	\end{restatable}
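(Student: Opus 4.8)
The plan is to carry over, essentially verbatim, the three-part argument of \cite[Lemma 3.7, Proposition 3.8, Theorem 3.9]{BeyersdorffBohm-LMCS23} that handles the case $\D=\Dtrv$, replacing every universal reduction $\reduce$ by a $\reduceD$-reduction and every long-distance merge side-condition ``$u$ is quantified to the right of the pivot'' by the condition ``$(\var(p),u)\in\D$'' demanded by \LDQDRes. Since every reduction occurring in the learning process is by definition an application of $\reduceD$, and $\reduceD$ performs exactly the $\D$-reductions legal in \LDQDRes, the reduction steps are automatically valid; it therefore suffices to verify that each clause-resolution step in the backward traversal of the trail meets the side-conditions of the \LDQDRes\ resolution rule. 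Concretely, for the step that resolves the running clause $C_{(i,j+1)}$ with $\reduceD(\ante(p_{(i,j)}))$ on pivot $p_{(i,j)}$, one must check that the pivot is existential, that no existential variable other than the pivot occurs with both polarities across the two premises, and that every universal variable occurring with both polarities in the two premises depends on the pivot according to $\D$.

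First I would fix the basic invariant, proved by backward induction along the trail: every literal of an \emph{assigned} variable that occurs in a running clause $C_{(i,j)}$ or in a reduced antecedent $\reduceD(\ante(p_{(i,j)}))$ occurs with the polarity that is falsified by the trail. For non-pivot existential literals this is immediate, since each antecedent is unit-after-$\reduceD$-reduction, so all its existential literals except the propagated one are falsified, and the running clauses only accumulate such falsified literals; for the pivot $p_{(i,j)}$ one checks as usual that $\bar p_{(i,j)}\in C_{(i,j+1)}$ precisely when the resolution is performed, and that $\var(p_{(i,j)})\in X_\exists$ holds by the definition of the learning rule (we resolve only on propagated existential literals). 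Because the trail is a consistent assignment, this invariant immediately discharges the two ``assigned-variable'' obstructions: no existential variable other than the pivot can appear with both polarities, and no assigned (i.e.\ decided) universal variable can ever be merged, since in every premise it appears with its unique falsified literal.

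The remaining and main point is to control the \emph{unassigned} universal variables, which are the only ones that can form a merged literal $u^*$; this is the step I expect to be the main obstacle. The goal is to show that whenever such a $u$ is merged on an existential pivot $p=p_{(i,j)}$, one has $(\var(p),u)\in\D$. Since both premises are $\reduceD$-reduced, $u$ survives in $C_{(i,j+1)}$ only if it is blocked there by an existential literal it $\D$-depends on, and its opposite occurrence survives in $\reduceD(\ante(p))$ for the same reason. The crucial observation, which in the $\Dtrv$ case is free (there, failure to be $\reduceD$-reduced by the pivot forces $u$ to the right of $p$, which is exactly the $\Dtrv$ merge condition) but is genuinely content for a general normal scheme, is that the very clauses of the current matrix that keep $u$ unreduced in the two premises \emph{are} a witness for the dependency pair $(\var(p),u)$. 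For $\D=\Drrs$ this is literally the resolution-path condition of \cref{def:rrs}: the blocking existentials together with the two opposite occurrences of $u$ assemble into the required pair of connected resolution paths, so $(\var(p),u)\in\Drrs$; consistently, had this dependency been spurious, $u$ would have blocked the propagation of $p$ in the first place, so the offending merge could not have arisen. For a general normal $\D$ I would replace this concrete resolution-path reasoning by the monotonicity and simplicity properties of normal schemes \cite{PeitlSS-JAR19}, which is exactly what lets the blocking witnessed inside each premise be transferred to the dependency on the pivot and be preserved when the resolvent is re-reduced by $\reduceD$. Monotonicity additionally guarantees that dependencies valid for the original matrix remain valid for each augmented matrix $\Phi_i$, so that the learned-clause derivation is a legitimate \LDQDRes\ derivation throughout the run.

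Finally, I would close the induction and observe that concatenating the verified reduction and resolution steps, from $\reduceD(\ante(\square))$ down to the final learnt clause $C_{(0,1)}$, yields a single \LDQDRes\ derivation of that clause, which is what the lemma asserts. The argument is identical for $\CubePolicy=\CubeLD$ and $\CubePolicy=\CubeD{\D}$, since these policies differ only in how cubes are propagated and learnt, whereas the clause-learning traversal analysed above (invoked when the trail ends in $\square$) is common to both.
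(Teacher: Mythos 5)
Your overall plan---porting Lemma~3.7, Proposition~3.8 and Theorem~3.9 of \cite{BeyersdorffBohm-LMCS23} with $\reduce$ replaced by $\reduceD$---is exactly the strategy the paper follows, and your treatment of existential tautologies and of assigned variables is essentially the paper's Step~1. The gap is in the step you yourself flag as the main obstacle: the side condition for a universal merge. First, you aim at the wrong condition. The $\LDRD{\D}$ rule requires $(u,\var(p))\notin\D$ (the pivot must not depend on the merged universal), not $(\var(p),u)\in\D$; for a scheme $\D$ strictly contained in $\Dtrv$ the latter is strictly stronger and need not hold at a perfectly legal merge (take $u$ quantified after $\var(p)$ but with no resolution path between them when $\D=\Drrs$), so the statement you set out to prove is in general false, even though it would imply the needed one. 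Second, the mechanism you propose---that the blocking existentials keeping $u$ alive in the two premises assemble into a resolution-path witness for $(\var(p),u)$---is asserted rather than proved, and it does not obviously work: the premise $C_{(i,j+1)}$ is a derived clause, not a matrix clause, so its literals do not directly supply the clause sequence required by \cref{def:rrs}, the blocking existentials only witness pairs of the form $(u,y)\in\D$ for various $y$ in the premises (which says nothing about the pair $(u,\var(p))$), and for a general normal $\D$ ``monotonicity and simplicity'' do not convert blocking information into a dependency involving the pivot.

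The argument the paper actually uses is simpler and of a different kind: it exploits the \emph{unit-propagation mechanics of the antecedent of the pivot}, not the survival of $u$ under $\reduceD$ in the learnt clauses. For $p=p_{(i,j)}$ to be propagated, $\ante(p)$ restricted by the trail and then reduced must equal $\{p\}$, so the occurrence of $\bar u$ (or $u$) in $\ante(p)$ must have vanished at propagation time, either because $u$ was already assigned or because the literal was reduced from the restricted clause. In the latter case the pivot literal $p$ is still present in that restricted clause, so by the definition of $\reduceD$ the reduction is legal only if $(u,\var(p))\notin\D$---which is precisely the $\LDRD{\D}$ side condition. In the former case one contradicts the propagation of the later literal whose antecedent contributes the opposite occurrence of $u$. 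Your proposal never invokes this unit requirement on $\ante(p)$ for the universal literal, and without it the merge condition does not follow.
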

	
\begin{proof}
This proof essentially replicates the proofs of Lemma 3.7 and Proposition 3.8 from \cite{BeyersdorffBohm-LMCS23}.
		We need to show that every clause learnt from a trail :
		$$\T = (p_{(0,1)}, \cdots , p_{(0,g_0)};\mathbf{d_1}, p_{(1,1)}, \cdots p_{(1,g_1)}; \mathbf{d_2}, \cdots  \cdots \cdots; \mathbf{d_r}, p_{(r,1)}, \cdots p_{(r,g_r)} )$$
		is a valid \LDQDRes\ derivation. Let $C_{i,j}$ denote the clause learnt corresponding to propagated literal $p_{i,j}$. 
		
		\noindent	\textbf{Step 1:}  No $C_{i,j}$ contains an existential tautology. \\
		Suppose there exists a variable $x$ such that $x \neq \var(p_{i,j})$ and $x \in C_{i,j+1}$ and $ \bar{x} \in  \reduceD(\ante(p_{i,j}))$. Let $A = \ante(p_{i,j})$, then since $x$ is existential variable, and  $\bar{x} \in A$, therefore $x$ must be assigned in the trail prior to the propagation of $p_{i,j}$.\\
		On the other hand, we have $x \in C_{i,j+1}$, which  is the learnable clause which is derived with the aid of antecedent clauses of literals occurring right of $p_{i,j}$ in the trail. In particular, we can find some $p_{k,m}$ right of $p_{i,j}$ in the trail with  $x \in \ante(p_{k,m})$. But because $x$ appears in the trail to the left of $p_{i,j}$, this gives a contradiction since $ \ante(p_{k,m})$ must not become true before propagating $p_{k,m}$.
		
		\noindent	\textbf{Step 2:} Derivation of clauses with universal tautolgies is sound.\\
		Proof goes via contradiction. Suppose there exists a universal tautology derived which is unsound. Without loss of generality let that variable be $u$ and the propagated literal 
		over which this resolution happens be $p_{i,}$. Since the resolution is unsound $(u,\var(p_{i,j})) \in \D$ and on of the following conditions must hold:
		\begin{enumerate}
			\item $u \in C_{i,j+1}$ and $\bar{u} \in \ante(p_{i,j})$
			\item  $u \vee \bar{u} \in C_{i,j+1}$ and $\bar{u} \in \ante(p_{i,j})$
			\item  $u \in C_{i,j+1}$ and $u \vee \bar{u} \in \ante(p_{i,j})$
			\item $u  \vee \bar{u} \in C_{i,j+1}$ and $u \vee \bar{u} \in \ante(p_{i,j})$
		\end{enumerate}
		
		Consider the first case: Since $u \in C_{i,j+1}$ there has to be a propagated literal $p_{k,m}$ right of $p_{i,j}$ in the trail such that $u \in \ante(p_{k,m})$. In order to become unit, the $u$ in $\ante(p_{k,m})$ needs to vanish. We distinguish two cases:\\
		Case (i):  $\bar{u}$ was assigned before $p_{k,m}$ was propagated. Then $\bar{u}$ does not appear in the trail, then for $p_{i,j}$ to be propagated $\bar{u}$ must have been reduced in $\ante(p_{i,j})$ which is possible only if $(u,\var(p_{i,j})) \not\in \D$ giving rise to a contradiction.\\
		Case (ii): $u \in \ante(p_{k,m})$ is removed via reduction. For propagations $p_{i,j}$, $p_{k,m}$ to both happen $u$, $\bar{u}$ could not be assigned in the trail prior to propagating $p_{i,j}$, therefore  for $p_{i,j}$ to be propagated $\bar{u}$ must have been reduced in $\ante(p_{i,j})$ which is possible only if $(u,\var(p_{i,j})) \not\in \D$ giving rise to a contradiction. 
		
		The same argument above works for all the remaining cases.
	\end{proof}


        Thus using dependency schemes in decision order gives sound and complete systems.
	\begin{theorem}\label{thm:depord-sound-complete}
           For any  dependency schemes $\D'$, $\D$ where $\D$ is normal, 
          and for each $\CubePolicy\in\{\NoCube,\CubeLD,\CubeD{\D}\}$,  
          the proof system $\QCDCLDep{\Dord{\D'}}{\D}{\CubePolicy}$ is sound and refutationally complete.
	\end{theorem}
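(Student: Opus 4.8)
The plan is to prove both soundness and refutational completeness purely by assembling the structural inclusions from \cref{obs:lodoao,obs:nocubeImplycube} together with the two anchor results \cref{thm:qcdclao-sound} and \cref{prop:lo-d-no-cube-complete}, exploiting that the decision policy $\Dord{\D'}$ is sandwiched between $\lo$ and $\ao$ in restrictiveness. The key point making this work for an arbitrary (possibly non-normal) $\D'$ is that $\D'$ only constrains which variables may be decided; it never enters the conflict analysis that produces learnt clauses and cubes. Hence the derivations of $\QCDCLDep{\Dord{\D'}}{\D}{\CubePolicy}$ always form a subset of those of $\QCDCLDep{\ao}{\D}{\CubePolicy}$ and a superset of those of $\QCDCLDep{\lo}{\D}{\CubePolicy}$, so soundness can be inherited ``from above'' and completeness ``from below''.

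For soundness I would first secure soundness of the $\ao$ systems for all three cube policies. For $\CubeLD$ and $\CubeD{\D}$ this is exactly \cref{thm:qcdclao-sound}. For $\NoCube$, rather than re-establish the $\LDQDRes$-validity of learnt clauses (which is in any case guaranteed by \cref{lem:soundness}), I would observe via \cref{obs:nocubeImplycube} that every derivation in $\QCDCLDep{\ao}{\D}{\NoCube}$ is already a derivation in $\QCDCLDep{\ao}{\D}{\CubeLD}$, so soundness of the former follows from soundness of the latter. With all three $\ao$ systems sound, the second part of \cref{obs:lodoao} gives that every derivation in $\QCDCLDep{\Dord{\D'}}{\D}{\CubePolicy}$ is a derivation in the corresponding $\ao$ system, and soundness transfers downward to $\Dord{\D'}$ for each cube policy.

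For refutational completeness I would start from \cref{prop:lo-d-no-cube-complete}, giving completeness of $\QCDCLDep{\lo}{\D}{\NoCube}$ for normal $\D$. By the first part of \cref{obs:lodoao}, any refutation there is also a refutation in $\QCDCLDep{\Dord{\D'}}{\D}{\NoCube}$, so the latter is refutationally complete. To lift this to $\CubeLD$ and $\CubeD{\D}$, I would fix the decision policy at $\Dord{\D'}$ and apply \cref{obs:nocubeImplycube}: the $\NoCube$ refutation produced for any false QBF is simultaneously a refutation under either cube policy, witnessing completeness of $\QCDCLDep{\Dord{\D'}}{\D}{\CubePolicy}$ for all three choices. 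The entire argument is essentially bookkeeping of inclusions, so the only genuine obstacle is getting the two directions right---soundness needs the upper envelope $\ao$ to be sound together with downward closure, while completeness needs the lower envelope $\lo$ to be complete together with upward closure---and correctly handling the $\NoCube$ soundness case, which is not stated directly in \cref{thm:qcdclao-sound} and must instead be routed through $\CubeLD$.
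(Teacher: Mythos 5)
Your proposal is correct and follows essentially the same route as the paper: completeness is inherited upward from $\QCDCLDep{\lo}{\D}{\NoCube}$ via \cref{prop:lo-d-no-cube-complete} and \cref{obs:lodoao}, soundness is inherited downward from the $\ao$ systems via \cref{thm:qcdclao-sound} and \cref{obs:lodoao}, with \cref{obs:nocubeImplycube} bridging the cube policies. Your explicit routing of the $\NoCube$ soundness case through $\CubeLD$ is in fact slightly more careful than the paper's one-line invocation, which glosses over the fact that \cref{thm:qcdclao-sound} does not mention $\NoCube$ directly.
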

        \begin{proof}
        	 By \cref{prop:lo-d-no-cube-complete},  $\QCDCLDep{\lo}{\D}{\NoCube}$ is refutationally complete. By \cref{thm:qcdclao-sound}, $\QCDCLDep{\ao}{\D}{\CubePolicy}$ is sound.
        	 By \cref{obs:lodoao} and \cref{obs:nocubeImplycube}, all the aforementioned systems are sound and refutationally complete.
%
%
%
        \end{proof}

	\subsection{Strength of \QCDCL\ based proof systems with $\Dord{\D}$}
	\label{subsec:depordAdv}
	
	In \QCDCL\  based proof systems, incorporating dependency schemes into propagation and learning processes does not always yield benefits: as shown in \cite{ChoudhuryMahajan-JAR24},  certain pathological formulas can render the addition of dependency schemes disadvantageous when decisions are constrained to the $\lo$ decision policy. 

        If the dependency shceme is allowed to influence the decision order (i.e., the system adopts the $\Dord{\D}$  decision policy), we show below that the resulting systems are strictly more powerful than their counterparts using $\lo$. (For $\D_1=\Dstd$, an advantage over $\lo$ was noted already in \cite{Lonsing-Thesis12}. ) However, they remain strictly weaker than the \LDQDRes\ systems.

	\begin{restatable}{theorem}{restatedepordlevord}
	  \label{thm:depord-levord}
          For dependency schemes $\D_1 \in \{\Drrs,\Dstd\}$  and 	$\D_2 \in \{\Drrs,\Dstd,\Dtrv\}$, and
          for policy $\CubePolicy\in  \{\NoCube, \CubeLD, \CubeD{\D_2} \}$, 
	the proof system $\QCDCLDep{\Dord{\D_1}}{\D_2}{\CubePolicy}$ $p$-simulates  $\QCDCLDep{\lo}{\D_2}{\CubePolicy}$ and is not simulated by it. 
	\end{restatable}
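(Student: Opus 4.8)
The statement has two halves, of very different difficulty. The $p$-simulation direction is immediate: by the first item of \cref{obs:lodoao}, every derivation in $\QCDCLDep{\lo}{\D_2}{\CubePolicy}$ is \emph{verbatim} a derivation in $\QCDCLDep{\Dord{\D_1}}{\D_2}{\CubePolicy}$, since $\lo$ is the special case of $\Dord{\D_1}$ in which no spurious dependency is dropped. Hence the identity map witnesses the $p$-simulation with no blow-up, and all the work lies in the separation: exhibiting a family that is easy for $\Dord{\D_1}$ but hard for $\lo$.

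For the separation I would use the family $\TwoPHPandCT_n$ of \cref{ex:illustration}. Since its matrix is unsatisfiable (the embedded $\PHP_n$ is already unsatisfiable), cube learning is vacuous on it: by \cref{obs:unsatnocubeEqcube} it suffices to prove the lower bound in the $\NoCube$ system, and by \cref{obs:nocubeImplycube} it suffices to produce the short refutations in the $\NoCube$ system. This collapses both directions to the single policy $\CubePolicy=\NoCube$, removing the dependence on $\CubePolicy$ altogether. For the upper bound I would then simply invoke \cref{ex:illustration}, which already exhibits a constant-size refutation of $\TwoPHPandCT_n$ in $\QCDCLDep{\Dord{\D_1}}{\D_2}{\NoCube}$ for every $\D_1\in\{\Drrs,\Dstd\}$ and every $\D_2\in\{\Drrs,\Dstd,\Dtrv\}$. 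The enabling fact is that $(u,v),(x_i,v),(y_i,v)\notin\D_1$ for $\D_1\in\{\Drrs,\Dstd\}$, so $\Dord{\D_1}$ permits deciding $v$ (and then $z_1$) at the very start; the refutation then lives entirely inside the four-clause $z$-gadget, never touching the pigeonhole variables. As $s_n$ is polynomial in $n$, the formula has polynomial size and the refutation constant size.

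For the lower bound I would argue that $\lo$ cannot take this shortcut. At the start of any trail there is no unit propagation under any of the three schemes $\D_2$ (each matrix clause reduces to a clause of width $\ge 2$), so the first decision must be the unique level-$1$ variable $u$. To ever decide the level-$3$ variable $v$, and so reach the $z$-gadget, the $\lo$ policy requires $u$ together with all level-$2$ variables $x_1,\dots,x_{s_n},y_1,\dots,y_{s_n}$ to be assigned first; but any complete assignment to the $x$'s (when $u=0$) or the $y$'s (when $u=1$) falsifies a clause of $\PHP_n$, forcing a conflict before $v$ is reached. Consequently every conflict, and hence every learnt clause, is over $\{u,x,y\}$, and the run is in essence a $\QCDCL$ refutation of the pigeonhole part. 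The exponential hardness of $\PHP_n$ then gives a $2^{\Omega(n)}$ lower bound on the number of trails, matching the bound for $\TwoPHPandCT_n$ under $\lo$ established in \cite{ChoudhuryMahajan-JAR24}.

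I expect this last step to be the main obstacle. One must check that the reductions permitted by $\D_2$ do not provide an escape route: in particular for $\D_2=\Drrs$, where even $u$ and $v$ are reducible, the $z$-gadget reduces to an unsatisfiable set of $2$-clauses, and I must confirm that $\lo$ still cannot reach those clauses without first running into a pigeonhole conflict. I must also turn the informal "the search is trapped in $\PHP_n$" into a genuine size bound, e.g.\ by showing that restricting the projected run to the $u=0$ (or $u=1$) branch yields a resolution refutation of $\PHP_n(x)$ (resp.\ $\PHP_n(y)$), whose size lower-bounds the length of the $\QCDCL$ refutation. Care is needed here because the \emph{underlying} system $\LDQDRes$ refutes $\TwoPHPandCT_n$ in constant size via the $z$-gadget, so the hardness is genuinely a search handicap imposed by $\lo$ rather than a weakness of the resolution calculus.
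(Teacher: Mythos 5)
Your proposal is correct and follows essentially the same route as the paper: the $p$-simulation via \cref{obs:lodoao}, the reduction to $\CubePolicy=\NoCube$ via \cref{obs:unsatnocubeEqcube} using the unsatisfiable matrix of $\TwoPHPandCT_n$, the constant-size refutation from \cref{ex:illustration} under $\Dord{\D_1}$, and the $\lo$ lower bound by observing that the first decision must be $u$ and all $x$/$y$ variables must be decided before $v$, trapping the search in $\PHP_n$ (the paper cites Lemma~5 of \cite{ChoudhuryMahajan-JAR24} for the $\Drrs$ and $\Dtrv$ cases of $\D_2$ and gives the same informal argument for $\Dstd$). The cautionary points you raise about making the "trapped in $\PHP$" argument rigorous are legitimate but are handled in the paper by deferring to the prior work's lower bound.
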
	
\begin{proof}
  The $p$-simulation follows from \cref{obs:lodoao}.
  
  For $\D_1=\Dstd$, an advantage over $\lo$ was noted already in \cite{Lonsing-Thesis12}.

  For $\D_1=\Drrs$, to show that there is no reverse simulation, we consider the $\TwoPHPandCT$ formulae defined in \cite{ChoudhuryMahajan-JAR24}, and described in \cref{ex:illustration}. 
. These have an unsatisfiable matrix, so by \cref{obs:unsatnocubeEqcube}, it suffices to show lower and upper bounds for $\CubePolicy=\NoCube$. 
	
	For these formulas,
	$\Drrs=\emptyset$ and 
	$\Dstd = \{(u,x_i),(u,y_i):  \textrm{~for all~} i \} \cup \{(v,z_1), (v,z_2)\}$.
	
	It is shown in  \cite{ChoudhuryMahajan-JAR24} (Lemma 5) that these formulas  require exponential size refutations in $\QCDCLDep{\lo}{\Drrs}{\NoCube}$ and $\QCDCLDep{\lo}{\Dtrv}{\NoCube}$. Furthermore, the following  observation  shows that they also  require exponential size refutations in $\QCDCLDep{\lo}{\Dstd}{\NoCube}$; for these formulas, 
	$\Dstd = \{(u,x_i),(u,y_i):  \textrm{~for all~} i \} \cup \{(v,z_1), (v,z_2)\}$. 
	With the $\lo$ decision policy, the first decision must be on $u$, which causes no propagations, and subsequent decisions in $\lo$ force  refuting $\PHP$ (in either $x$ or $y$), which is known to requires exponential size.
	
	On the other hand, we have already seen in \cref{ex:illustration} that they have short (constant-sized) refutations in $\QCDCLDep{\Dord{\D_1}}{\D_2}{\NoCube}$ if $\D_1=\Drrs$ or $\Dstd$.
\end{proof}

	\begin{restatable}{theorem}{restatedepordldqdres}	
	\label{thm:ldqd-qcdcldord}
		For $\CubePolicy \in \{\NoCube,\CubeLD, \CubeD{\D}\}$ in  $\D \in \{ \Dtrv, \Dstd, \Drrs\}$, the proof system $\LDQDRes$ $p$-simulates $\QCDCLDep{\Dord{\D}}{\D}{\CubePolicy}$ and is not simulated by it.
        \end{restatable}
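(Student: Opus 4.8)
The plan is to prove the two halves separately. The forward direction, that $\LDQDRes$ $p$-simulates the QCDCL system, is the routine half and follows from the soundness machinery already in place; the reverse non-simulation is where the real work lies, and is the part I expect to be the main obstacle.

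\textbf{Simulation.} First I would show that $\LDQDRes$ $p$-simulates $\QCDCLDep{\Dord{\D}}{\D}{\CubePolicy}$. Take any refutation $\iota=(T_1,C_1,\pi_1),\dots,(T_m,C_m,\pi_m)$ with $C_m=\square$. By \cref{lem:soundness}, at every index $i$ at which a \emph{clause} $C_i$ is learnt, the attached derivation $\pi_i$ is a valid $\LDQDRes$ derivation of $C_i$ from the clauses available at that stage, namely the axioms of $\Phi$ together with the previously learnt clauses recorded in $\mathcal{C}_i$. The cube-learning steps add only cubes to $\mathcal{T}_i$ and never produce a clause, so they are irrelevant to a clause refutation and can simply be deleted; although learnt cubes may influence how the trails $T_i$ are built, the objects $\pi_i$ are stand-alone $\LDQDRes$ derivations over clauses only. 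Concatenating the $\pi_i$ over the clause-learning indices, in order, therefore yields a single $\LDQDRes$ derivation of $\square$ from $\Phi$, since each $\pi_i$ resolves only over clauses derived earlier in the concatenation. The total size is at most $\sum_i|\pi_i|\le|\iota|$, and the translation is plainly poly-time. Note that this argument is insensitive to both the decision policy and the choice of $\CubePolicy$.

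\textbf{No reverse simulation.} For the separation I would exhibit a family $F_n$ with polynomial-size $\LDQDRes$ refutations but requiring exponential-size refutations in $\QCDCLDep{\Dord{\D}}{\D}{\CubePolicy}$, engineered so that two simplifications apply simultaneously for every $\D\in\{\Dtrv,\Dstd,\Drrs\}$. The first is to make the dependency scheme \emph{trivial} on $F_n$, i.e.\ $\Drrs(F_n)=\Dstd(F_n)=\Dtrv(F_n)$. Then $\LDQDRes$ coincides with $\LDQRes$ on $F_n$; moreover the relaxed decision policy $\Dord{\D}$ collapses to $\lo$ on $F_n$, because $\Dtrv$ contains every across-alternation pair and, when cube learning is inactive, universal variables are only ever decided (clauses propagate existential literals only), so deciding a universal forces all earlier existentials to be assigned, which chains back to full prefix order. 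The second simplification is to give $F_n$ an \emph{unsatisfiable} matrix, so that by \cref{obs:unsatnocubeEqcube} the three cube policies all coincide with $\NoCube$. With both simplifications in place the separation reduces to establishing that $\LDQRes$ has polynomial refutations of $F_n$ while the plain system $\QCDCLDep{\lo}{\Dtrv}{\NoCube}$ requires exponential size; this is precisely the kind of gap between $\LDQRes$ and plain $\QCDCL$ already exhibited in \cite{BeyersdorffBohm-LMCS23}.

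\textbf{Main obstacle.} The hard part is producing a single $F_n$ that meets all four requirements at once --- a trivial dependency scheme (so that the order relaxation grants no advantage and $\LDQDRes=\LDQRes$), an unsatisfiable matrix (so that cube learning is neutralised), a polynomial $\LDQRes$ refutation, and an exponential plain-$\QCDCL$ lower bound --- and in verifying carefully that $\Dord{\D}$ really is no more powerful than $\lo$ on it, which hinges on the absence of spurious dependencies together with the fact that under $\NoCube$ universals are never propagated. The genuinely delicate situation is the fallback where no unsatisfiable-matrix witness with the other three properties exists: then the second simplification must be dropped, and one has to argue directly that cube learning cannot shorten refutations of $F_n$ --- that no sequence of learnable cubes enables universal propagations or satisfaction-conflicts that let the solver bypass the resolution-hard core of $F_n$. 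I would handle this by a structural analysis of the admissible trails, showing that every learnable clause and cube stays confined to that core, so the underlying exponential lower bound still applies.
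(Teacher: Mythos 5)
Your overall architecture matches the paper's: the simulation direction is obtained by observing that the clause-learning derivations $\pi_i$ are valid $\LDQDRes$ derivations (\cref{lem:soundness}) and concatenating them while discarding the cube steps, and the non-simulation is to be witnessed by a family with all dependency schemes trivial and an unsatisfiable matrix, so that both the relaxed decision order and cube learning are neutralised and the problem reduces to a gap between $\LDQRes$ and plain level-ordered $\QCDCL$. Your simulation half is complete (indeed more explicit than the paper's one-line justification), and your observation that $\Dord{\D}$ collapses to $\lo$ when $\D=\Dtrv$ on the witness family, because under $\NoCube$ universals are only ever decided, is a point the paper uses implicitly.

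The genuine gap is in the non-simulation half: you never produce the witness family, and your suggestion that the required gap is ``precisely the kind already exhibited in \cite{BeyersdorffBohm-LMCS23}'' does not go through. The separating formulas available there --- $\Equality$ and $\Trapdoor$ --- both have $\Drrs=\emptyset$, so they violate the very first of your four requirements (a trivial dependency scheme); moreover $\Equality$ has a satisfiable matrix and becomes easy once cube learning is switched on. Constructing a formula meeting all four requirements simultaneously is the actual content of the theorem, and the paper does it with the new $\DoubleLongEq$ family: $\Equality$ augmented with the two wide clauses $UT_n$ and $UT'_n$, which create resolution paths forcing $\Drrs=\Dstd=\Dtrv$ (\cref{clm:DoubleLongEqDrrs}) while leaving the short long-distance refutation intact; the exponential lower bound for level-ordered $\QCDCL$ is not inherited for free from $\Equality$ but is re-proved via the gauge method of \cite{BohmBeyersdorff-JAR23} (\cref{lem:DoubleLongEq-hardQCDCL}), since one must check the extra clauses do not lower the gauge. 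Since $\DoubleLongEq$ has a satisfiable matrix, the paper then appends disjoint $\PHP$ clauses over fresh trailing existentials to make the matrix unsatisfiable and kill cube learning, arguing that any refutation of the combined formula must refute either the $\DoubleLongEq$ core or $\PHP$; this realises the ``fallback'' you describe, but in the opposite order --- one starts from a satisfiable-matrix witness and forces unsatisfiability, rather than hoping an unsatisfiable witness exists outright. Without some such construction and its lower-bound proof, the separation is unproved.
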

We defer the proof of this theorem to the next section, since it uses a new formula that we define there, the \DoubleLongEq\ formulas. 
		
	\section{Dependency-schemes-based QCDCL systems restricted to $\lo$ }
	\label{sec:QCDCLcuberesults}
	
        We now restrict our attention to proof systems utilizing only the $\lo$ decision policy.  Even with $\lo$, dependency schemes can affect (enhance or impair) the performance of \QCDCL\ systems. These are the systems also considered in \cite{BeyersdorffBohm-LMCS23,BohmPB-AI24,ChoudhuryMahajan-JAR24}.

	\subsection{Preprocessing as a tool}
	\label{subsec:preprocessing}
        In \cite{ChoudhuryMahajan-JAR24}, another way of incorporating dependency schemes was also considered, through ``preprocessing''. For a more complete comparison, we very briefly define such systems here as well.
Preprocessing a formula using a dependency scheme simply means applying all reductions enabled by it on the input formula, and then proceeding with whatever version of \QCDCL\ is of interest, on the reduced formula.

	\begin{definition}
	\label{def:d+qcdcldep system}
	For a QBF $\Phi = \calQ \cdot \phi$ and a normal dependency scheme $\D$,
        a derivation of a clause $C$ from $\Phi$ in 
        $\D + \QCDCLDep{\Ord}{\ClausePolicy}{\CubePolicy}$ 
		is a derivation of $C$ from the QBF $\Psi = \reduceD(\Phi)$ in  $\QCDCLDep{\Ord}{\ClausePolicy}{\CubePolicy}$.
	\end{definition}
	
        As shown in Theorem~4 of \cite{ChoudhuryMahajan-JAR24}, preprocessing using  $\Drrs$ can significantly alter the system strength. In contrast, we observe below that  preprocessing using schemes $\Dtrv$ or $\Dstd$ has no effect, no matter what version of \QCDCL\ is the subsequent system.

\begin{restatable}{proposition}{restateDtrvDstdnopre}

        \label{prop:DtrvDstd-nopre}
        For every decision policy $\Ord$, dependency scheme $\D  \in \{\Dtrv, \Dstd, \Drrs\}$ and for $ \CubePolicy \in \{\NoCube,\CubeLD, \CubeD{\D}\}$, the proof systems
%
 $\QCDCLDep{\Ord}{\D}{\CubePolicy}$, 
 $\Dtrv + \QCDCLDep{\Ord}{\D}{\CubePolicy}$, and
 $\Dstd + \QCDCLDep{\Ord}{\D}{\CubePolicy}$ 
        are equivalent to each other.
\end{restatable}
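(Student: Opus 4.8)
The plan is to prove the stronger statement that all three systems possess \emph{identical} sets of trails and learnable constraints, so that they are literally the same proof system and in particular mutually $p$-simulate. Since preprocessing touches only the initial matrix and leaves the quantifier prefix, the decision policy $\Ord$, the clause policy $\D$, and the cube policy $\CubePolicy$ untouched, it suffices to compare the three input formulas $\Phi$, $\reduce(\Phi)$, and $\reduceDstd(\Phi)$, and to argue inductively that running $\QCDCLDep{\Ord}{\D}{\CubePolicy}$ on any two of them builds the same matrices $\Phi_i$, the same trails $T_i$, the same antecedents, and hence the same learnt clauses and cubes $C_i$. I would split this into two claims: first, that $\reduceDstd(\Phi)=\reduce(\Phi)$, collapsing the $\Dstd$- and $\Dtrv$-preprocessed systems; and second, that $\Dtrv$-preprocessing changes nothing.

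For the first claim the key observation is that $\Dstd$ never reduces more than the trivial scheme on a clause of the \emph{input matrix}. Let $u$ be a universal literal of a matrix clause $C$ that is not $\Dtrv$-reducible in $C$, i.e.\ there is an existential $y$ with $y\in\var(C)$ and $(u,y)\in\Dtrv$. Taking the one-element clause sequence $C_1=C$ with empty existential-literal sequence in \cref{def:std} immediately witnesses $(u,y)\in\Dstd$, so $u$ is $\Dstd$-blocked as well. Thus a universal of $C$ is $\Dstd$-reducible iff it is $\Dtrv$-reducible, giving $\reduceDstd(C)=\reduce(C)$ for every matrix clause, and hence $\reduceDstd(\Phi)=\reduce(\Phi)$. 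As the two preprocessed systems now run on the identical formula, they are the identical proof system.

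For the second claim I would use the composition identity $\reduceD(\reduce(C))=\reduceD(C)$, valid for every $\D\subseteq\Dtrv$ (hence for $\D\in\{\Dtrv,\Dstd,\Drrs\}$): a universal survives $\reduceD(\reduce(C))$ iff it is $\D$-blocked by an existential of $C$, which is exactly the condition for surviving $\reduceD(C)$, because universal reduction does not alter the existential literals present. Every antecedent used in conflict learning is a matrix clause or a previously learnt clause (already $\reduceD$-reduced and, by the inductive hypothesis, identical in both runs), and the learning rule applies $\reduceD$ to it; the identity therefore forces the learnt clauses $C_{(i,j)}$ to coincide whether we start from $\Phi$ or from $\reduce(\Phi)$. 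The same identity, applied to $\reduceD(C|_\alpha)$, shows that $C$ and $\reduce(C)$ propagate the same literal under any trail assignment $\alpha$ that does not satisfy $C$, since a universal deleted by $\reduce$ is trailing in $C$ and stays trailing in $C|_\alpha$, hence is removed by $\reduceD(C|_\alpha)$ anyway; so the trails coincide as well. The dual identity $\reduceD_\exists(\reduce_\exists(t))=\reduceD_\exists(t)$ handles the cube side, and since the initial cube set is empty, preprocessing introduces no cubes. Together with the first claim this yields equivalence of all three systems for every $\Ord$ and every $\CubePolicy\in\{\NoCube,\CubeLD,\CubeD{\D}\}$.

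The step I expect to be most delicate is the interchangeability of $C$ and $\reduce(C)$ during trail construction under the \emph{most permissive} policies. Restriction and reduction do not commute on the nose: if a trail sets a \emph{trailing} universal $u\in C$ to true while all other literals of $C$ are still unsatisfied, then $C$ becomes satisfied whereas $\reduce(C)$, having lost $u$, may instead turn unit and propagate. Under $\lo$ this divergence cannot arise, since all existentials of a trailing $u$ precede $u$ and are assigned first, forcing a conflict on $\reduceD(C|_\alpha)$ before $u$ is ever decided; but under $\ao$ one may decide $u$ prematurely. I would argue that such steps are immaterial: a trailing universal is stripped by universal reduction in every propagation and conflict test and so can never enter a conflict clause, and any trail that satisfies a clause through a trailing universal can be replaced, without increasing proof size, by one that propagates the corresponding reduced unit (for $\CubeLD$ and $\CubeD{\D}$ one argues dually that the set of trails satisfying all clauses, from which cubes are learnt, is preserved under $\reduce$). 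Making this replacement precise across all three cube policies is the main technical content; everything else follows from the two reduction identities above.
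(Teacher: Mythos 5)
Your proposal follows essentially the same route as the paper's proof: the same two-step decomposition (first $\reduceDstd(\Phi)=\reduce(\Phi)$, obtained exactly as you do by taking the one-clause sequence in \cref{def:std} to witness that every $\Dtrv$-blocked universal of a matrix clause is also $\Dstd$-blocked; then the identity $\reduceD(\reduce(C))=\reduceD(C)$ for $\D\subseteq\Dtrv$ to argue that trails, antecedents and learnt constraints can be replicated between $\Phi$ and $\reduce(\Phi)$). The one point where you go beyond the paper is the scenario in your last paragraph, where a trail satisfies a clause $C$ only through a $\Dtrv$-reducible universal: this divergence is genuine under $\ao$ or $\Dord{\D'}$, and the paper's own proof does not treat it --- it asserts only that the propagations of either run are ``enabled'' in the other, which is exactly what fails when $\reduce(C)$ becomes unit while $C$ is satisfied (and, symmetrically, the preprocessed run can acquire forced propagations absent from the original one; the satisfaction-based cube-learning rule is affected the same way, since a total trail can satisfy $C$ without satisfying $\reduce(C)$). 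Your observation that the problem cannot arise under $\lo$, because every existential of $C$ is assigned before a trailing universal of $C$ can be decided, covers the setting in which the paper actually applies this proposition; for the permissive policies your replacement argument is only sketched and would need to be carried out (in both simulation directions, and for the cube policies) to fully justify the statement as written for every $\Ord$.
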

\begin{proof}
	Let $\Phi$ be any given PCNF formula,  and let $\Psi = \reduce(\Phi)$.
	By definition, $\Dstd(\Phi)\subseteq\Dtrv(\Phi)$.
	So all reductions permitted by $\Dtrv$ are also permitted by $\Dstd$. If a clause $C$ of $\Phi$ has variables $x,y$ with $(x,y)\in \Dtrv$ , then by definition of $\Dstd$, $(x,y)$ is also in $\Dstd(\Phi)$. So $\Dstd$ does not enable any new reductions on initial clauses. 
	Hence  $\Psi=\reduceDstd(\Phi) = \reduceD^{\mathtt{trv}}(\Phi)$, so the second and third proof systems are equivalent.
	
	For any clause $C$, $\reduce(C)$ can only remove universal
	variables from $C$. By the way $\Psi$ is defined,  if $(x,y)\in \D(\Psi)$, then it is also in $\D(\Phi)$. In the other direction, if $(x,y)\in\D(\Phi)$, and if both $x,y$ appear in the matrix of $\Psi$, then $(x,y)$ is also in $\D(\Psi)$ because the prefix of $\Phi$ and $\Psi$ is the same, and the  witnessing sequence (in the case of $\Dstd$ or $\Drrs$) has only existential literals which are not removed, so the same sequence is a witness  in $\Psi$ too. 
	
	To show that the first and second proof systems are the same, note that for $\D \in  \{\Dtrv, \Dstd, \Drrs\}$, for any clause $C$, $\reduceD(C) = \reduceD(\reduce(C))$. Therefore if one of  $\Dtrv$, $\Dstd$ or $\Drrs$ are used in propagation and learning, then all the propagations in the first trail of either refutation are also enabled in  the other. (To be pedantic, a derivation in $\Phi$ may have universal variables that have vanished from the matrix of $\Psi$ but are still in the quantifier prefix of $\Psi$; these can have no effect on any propagation since they vanished through applications of $\reduce$.) Hence  the first clause/cube learnt in any one system can also be learnt in the other. Continuing this argument on subsequent trails, the entire derivation can be replicated. 
\end{proof}

	\subsection{Some New Formulae}
	\label{subsec:formulae}
	
We now introduce some new formulas which will be used to pinpoint the relative strengths of the proof systems.

\paragraph*{The $\DoubleLongEq_n$ formulas.}
The \Equality\ formulas, first defined in \cite{BeyersdorffBH-LMCS19}, 
show that the proof system \QCDCL\ with cube learning is stronger than \QCDCL\ without 
	 \cite{BohmPB-AI24}. We wish to show that cube-learning offers a similar advantage for systems with $\Drrs$. 
	 The \Equality\ formulas cannot show this because $\Drrs(\Equality)=\emptyset$; so using $\Drrs$ in any way makes them easy to refute irrespective of cube-learning. To achieve the desired separation, we modify the \Equality\ formula by adding two clauses that make $\Drrs$ and $\Dtrv$ identical. These new formulas, called \DoubleLongEq, maintain the separation without altering the hardness of \Equality.
		
		\begin{restatable}{formula}{restateDoubleLongEq}
			\label{for:DoubleLongEq}
			The $\DoubleLongEq_n$ formula has the prefix \\
			$\exists x_1  \cdots x_n ~ \forall u_1 \cdots u_n ~ \exists t_1 \cdots t_n$ and the PCNF matrix 
			\[ 
			\begin{array}{lc}
				&
				 \underbrace{(\bar{t}_1 \vee \cdots \vee \bar{t}_n)}_{T_n} ~ \wedge ~  \bigwedge_{i=1}^n
				\left[
				\underbrace{(x_i \vee u_i \vee t_i)}_{A_i}
				\wedge
				\underbrace{(\bar{x}_i \vee \bar{u}_i \vee t_i)}_{B_i}
				\right] ~ \wedge ~  
				\\
				& \underbrace{ (\bar{u}_1 \vee \cdots \bar{u}_n \vee \bar{t}_1 \vee \cdots \bar{t}_n) }_{UT_n}  ~ \wedge ~  \underbrace{ (\bar{u}_1 \vee \cdots \bar{u}_n \vee t_1 \vee \cdots t_n) }_{UT'_n}  
			\end{array}
			\]                                
		\end{restatable}
	
		(Note: deleting the clauses $UT_n$ and $UT'_n$ gives the \Equality\ formulas.)

\begin{restatable}{proposition}{restateDepDoubleLongEq}
\label{clm:DoubleLongEqDrrs}                  
  For $\Phi=\DoubleLongEq$, $\reduce(\Phi)= \Phi$, and
  $\Drrs(\Phi) = \Dstd(\Phi) = \Dtrv(\Phi)$.
\end{restatable}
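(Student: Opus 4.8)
The plan is to verify the two assertions separately, both by direct inspection of the matrix of $\Phi=\DoubleLongEq_n$. For the first assertion, $\reduce(\Phi)=\Phi$, I would note that the only universal variables are the $u_i$, and recall that $\reduce$ removes a universal literal from a clause $C$ only when no existential variable of $C$ lies to its right. Every clause of the matrix that carries a $u$-literal also carries a $t$-literal: $A_i$ and $B_i$ contain $t_i$, while $UT_n$ and $UT'_n$ contain all the $\bar t_m$ (respectively all the $t_m$). Since each $t_m$ is existential and quantified to the right of every $u_i$, we have $(u_i,t_m)\in\Dtrv$, so every $u$-literal is blocked and none can be reduced; the remaining clause $T_n$ has no universal literal at all. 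Hence $\reduce(\Phi)=\Phi$.

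For the second assertion I would use the fact that $\Drrs(\Phi)\subseteq\Dstd(\Phi)\subseteq\Dtrv(\Phi)$ holds for every QBF, so that it suffices to prove the single reverse inclusion $\Dtrv(\Phi)\subseteq\Drrs(\Phi)$; this immediately forces all three sets to coincide. Reading off the prefix gives $\Dtrv(\Phi)=\{(x_i,u_j): i,j\in[n]\}\cup\{(u_i,t_j): i,j\in[n]\}$, so I must exhibit, for each such pair, the connected literal-paths witnessing membership in $\Drrs(\Phi)$. The conceptual point is that the two new clauses $UT_n$ and $UT'_n$ provide ``long'' clauses holding all the $\bar u$-literals together with, respectively, all the $\bar t$- and all the $t$-literals; these are exactly the connecting clauses missing from the plain \Equality\ matrix, for which $\Drrs=\emptyset$.

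Concretely, for the pairs $(u_i,t_j)$ I would treat the diagonal case $i=j$ via the first polarity option: $(u_i,t_i)$ is connected by the single clause $A_i$, and $(\bar u_i,\bar t_i)$ by $UT_n$. For $i\neq j$ I would use the second polarity option: $(\bar u_i,t_j)$ is connected by $UT'_n$ (which contains both literals), while $(u_i,\bar t_j)$ is connected by the two-clause path $A_i,UT_n$ resolving on the existential pivot $t_i$. For the pairs $(x_i,u_j)$ I would again split: the diagonal $(x_i,u_i)$ is connected by $A_i$ and $(\bar x_i,\bar u_i)$ by $B_i$; for $i\neq j$ I would connect $(x_i,\bar u_j)$ by the path $A_i,UT_n$ with pivot $t_i$, and $(\bar x_i,u_j)$ by the three-clause path $B_i,UT_n,A_j$ with existential pivots $t_i$ then $t_j$, all of which are quantified to the right of $x_i$.

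I expect the delicate part to be the pairs $(x_i,u_j)$ with $i\neq j$. Here no single clause contains both an $x_i$-literal and a $u_j$-literal, so genuinely multi-step resolution paths through $UT_n$ (and back into some $A_j$) are unavoidable, and one must check three things simultaneously: that every intermediate pivot is an existential $t$-variable lying to the right of $x_i$; that consecutive pivots name distinct variables (this is precisely where $i\neq j$ is used, guaranteeing that $\bar t_i$ and $\bar t_j$ are genuinely different literals of $UT_n$); and that the chosen disjunct is realised in both of its required polarities. Once all pairs of both types are placed in $\Drrs(\Phi)$, the desired equality $\Drrs(\Phi)=\Dstd(\Phi)=\Dtrv(\Phi)$ follows from the sandwich inclusion.
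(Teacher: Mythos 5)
Your treatment of $\reduce(\Phi)=\Phi$ and of the pairs $(u_i,t_j)$ is correct and coincides with the paper's proof: the diagonal pairs via $A_i$ and $UT_n$, the off-diagonal pairs via the path $A_i,UT_n$ (pivot $t_i$) together with the single clause $UT'_n$. The problem is the part you yourself flag as delicate, the pairs $(x_i,u_j)$ with $i\neq j$. The definition of a resolution path requires, for every connecting literal $\ell_{2i}$, that $(\var(\ell_1),\var(\ell_{2i}))\in\Dtrv$, i.e.\ that the pivot variable be quantified \emph{after and differently from} $\var(\ell_1)$ --- not merely ``to the right of'' it, which is the weaker condition you check. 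When $\ell_1$ is a literal of the existential variable $x_i$, every admissible pivot would have to be simultaneously existential (by the definition of a path) and quantified differently from $x_i$ (hence universal), which is impossible. So your two-clause path $A_i,UT_n$ with pivot $t_i$ and your three-clause path $B_i,UT_n,A_j$ with pivots $t_i,t_j$ are not valid resolution paths, and this cannot be repaired: for $i\neq j$ no clause of the matrix contains both an $x_i$-literal and a $u_j$-literal, so only single-clause connections are available and $(x_i,u_j)\notin\Drrs(\Phi)$ (nor $\Dstd(\Phi)$).

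What this reveals is a mismatch with the convention the paper is actually using. The paper's proof opens by computing $\Dtrv(\DoubleLongEq)=\{(u_i,t_j): i,j\in[n]\}$, i.e.\ it records only the dependencies of existential variables on earlier universal ones (the pairs that govern the universal reductions used in the surrounding lemmas), and then verifies exactly those pairs for $\Drrs$. Under your literal reading of $\Dtrv$ as all differently-quantified ordered pairs, the claimed equality $\Drrs(\Phi)=\Dtrv(\Phi)$ would in fact be false, precisely because of the pairs $(x_i,u_j)$, $i\neq j$, that your argument tries (and fails) to place in $\Drrs$. You should either drop the $(x_i,u_j)$ pairs from consideration, matching the paper, or note explicitly that the equality is asserted only for the universal-to-existential dependencies.
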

\begin{proof}     
	By definition,  $\Dtrv(\DoubleLongEq) = \{(u_i,t_j) : i,j \in \{1 \cdots n \}\}$; that is, each $t_j$ variable depends on each $u_i$ variable. Since each occurrence of a $u$ variable in the formula is blocked by some $t$ variable, we have $\reduce(\DoubleLongEq)= \DoubleLongEq$.

	The next claim is that for this family of formulae $\Phi$, $\Drrs(\Phi) = \Dstd(\Phi) = \Dtrv(\Phi)$.
	It suffices to show that $\Dtrv(\Phi)\subseteq \Drrs(\Phi)$.
	
	We want to show that each $t_j$ depending on each $u_i$  is the case for $\Drrs$ as well. We consider two cases. 
	
	Case 1: $i=j$. The clauses $A_i$ and $UT_n$ contain the resolution paths $(u_i,t_i)$ and $(\bar{u}_i,\bar{t}_i)$ respectively. Therefore $(u_i,t_i) \in \Drrs$ for all $i \in \{1 \cdots n \}$. 
	
	Case 2: $i \neq j$.  The sequence of clauses $A_i, UT_n$ contains the resolution path $(u_i,t_i),(\bar{t}_i, \bar{t}_j)$, while the clause $UT'_n$ conatins the resolution path $(\bar{u}_i,t_j) $. Therefore $(u_i,t_j) \in \Drrs$ for all $i \neq j \in \{1 \cdots n  \}$. 	
\end{proof}

This makes the usage of the dependency schemes $\Drrs$ or $\Dstd$ completely useless. We now show that the formulas are easy to refute with cube-learning, but hard if cube-learning is switched off. 
Both these results closely mirror the corresponding results for \Equality\ shown in \cite{BeyersdorffBohm-LMCS23} and \cite{BohmPB-AI24} respectively.

\begin{restatable}{lemma}{restateDoubleLongEqeasy}
	\label{lem:DoubleLongEq-easyQCDCLcube}
		For $\D_1,\D_2 \in \{\Dtrv,\Dstd, \Drrs\}$  and $\CubePolicy \in \{\CubeLD,\CubeD{\D_2}\}$,  the \DoubleLongEq\ formulas have polynomial size refutations in  $\D_1 + \QCDCLDep{\lo}{\D_2}{\CubePolicy}$

\end{restatable}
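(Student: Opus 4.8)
The plan is to collapse all the systems named in the statement, \emph{on these particular formulas}, to a single trivial-scheme system, and then to replay the polynomial cube-learning refutation of \Equality\ from \cite{BohmPB-AI24}, the only new work being to check that the two extra clauses $UT_n,UT'_n$ never interfere. First I would invoke \cref{clm:DoubleLongEqDrrs}: since $\reduce(\Phi)=\Phi$ and $\Dtrv(\Phi)=\Dstd(\Phi)=\Drrs(\Phi)$ for $\Phi=\DoubleLongEq$, preprocessing by any $\D_1\in\{\Dtrv,\Dstd,\Drrs\}$ leaves the formula unchanged, and the reductions $\reduceD$, $\reduceD_\exists$ and the $\QDTermRes$ side-conditions for $\D_2\in\{\Dtrv,\Dstd,\Drrs\}$ all coincide with those for $\Dtrv$. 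Hence every system $\D_1+\QCDCLDep{\lo}{\D_2}{\CubePolicy}$ in the statement behaves here exactly like $\QCDCLDep{\lo}{\Dtrv}{\CubePolicy}$, and it suffices to produce one refutation legal for both $\CubePolicy=\CubeLD$ and $\CubePolicy=\CubeD{\Dtrv}$. I would aim for a refutation in which every cube is learnt \emph{from satisfaction}, so that cube learning is only an application of $\reduce_\exists$ and no term resolution ever occurs; such a derivation is automatically a legal $\CubeLD$ derivation (using $\LDQTermRes\supseteq\QTermRes$) and a legal $\CubeD{\Dtrv}$ derivation at once. (Should one instead follow \cite{BohmPB-AI24} verbatim and use term resolution, one would check those steps create no merged existential literal, hence are valid $\QTermRes$ steps.)

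Next I would replay the \cite{BohmPB-AI24} construction. The mechanism is that $\lo$ forces the $x_i$ to be decided first; a handful of satisfying trails then let one learn, for each $i$, the cubes $x_i\wedge\bar u_i$ and $\bar x_i\wedge u_i$, whose $\reduce_\exists$-reductions keep the universal $u_i$ and the blocked $x_i$ while discarding all $t$-literals. These cubes encode the Herbrand strategy $u_i=x_i$: once learnt, deciding a value of $x_i$ makes the cube unit in the $u_i$-literal and propagates $u_i=x_i$, which in turn makes $A_i$ or $B_i$ unit and propagates $t_i=1$. A polynomial sequence of conflict trails then drives the clause-learning through $T_n$ to $\square$, exactly as for \Equality. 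The content specific to \DoubleLongEq\ is to observe that $A_i,B_i,T_n$ behave identically to the \Equality\ case throughout, so the cube reductions and conflict analyses transfer verbatim and incur only polynomial overhead.

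The main obstacle, and the only genuinely new part, is to verify that $UT_n$ and $UT'_n$ are inert. Both are wide clauses whose $n$ universal literals $\bar u_i$ are each blocked by the existential $t$-literals quantified to their right, so no universal reduction is ever available on either of them. On each satisfying sub-trail $T'$ used for cube learning one has some $t_j=0$ (needed to satisfy $T_n$), which satisfies $UT_n$ via $\bar t_j$, and some $t_i=1$ or some $u_i=0$, which satisfies $UT'_n$; hence neither clause blocks a learnable cube nor contributes a literal to it. On each conflict trail, $UT'_n$ is satisfied as soon as any $t_i$ is set to $1$, and $UT_n$ is satisfied whenever some $x_i=0$ (so $u_i=0$); moreover each has an even number of negatable literals assigned two at a time (a decision $x_i=1$ falsifies $\bar u_i$ and $\bar t_i$ together), so it is never unit and never propagates.

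The one point requiring care is that, along the all-positive conflict trail, $UT_n$ becomes falsified \emph{simultaneously} with $T_n$ and so could be selected as $\ante(\square)$. I would dispose of this by a direct computation: reducing $UT_n$ and resolving it back through the same antecedents $B_i$ (resolving on the $t_i$) produces $\bigvee_i\bar u_i\vee\bigvee_i\bar x_i$, and since no $t$-literal survives, all the now-unblocked $\bar u_i$ reduce away, yielding the same learnt clause $\bigvee_i\bar x_i$ as resolving $T_n$ would. Thus the conflict analysis is unaffected by the choice of conflict clause. Establishing this inertness uniformly across all trails of the replayed refutation is the crux; once it is in place, the \cite{BohmPB-AI24} bound transfers and gives the claimed polynomial-size refutations.
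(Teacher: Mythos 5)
Your proposal matches the paper's proof in essentially every respect: both use \cref{clm:DoubleLongEqDrrs} to collapse all the named systems to the $\Dtrv$ case, replay the \cite{BohmPB-AI24} cube-learning refutation of \Equality\ (learning the cubes $x_i\wedge\bar u_i$ and $\bar x_i\wedge u_i$ from satisfying trails before switching to conflict trails), and observe that since every cube is obtained from satisfaction via the term axiom and $\reduce_\exists$ alone, with no term resolution, the derivation is simultaneously legal under $\CubeLD$ and $\CubeD{\D_2}$. The only difference is that you spell out the inertness of $UT_n$ and $UT'_n$ in detail where the paper simply asserts they ``play no role whatsoever''; your extra verification is correct and arguably a useful addition.
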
	
To prove this, we first show that these formulas are easy to refute with cube-learning.
\begin{proposition}
	\label{prop:DoubleLongEq-easyQCDCLcube}
	The \DoubleLongEq\ formulas have polynomial size refutations in the proof system  $\QCDCLDep{\lo}{\Dtrv}{\CubeLD}$.
\end{proposition}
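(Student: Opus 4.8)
The plan is to replay the polynomial-size cube-learning refutation of \Equality\ from \cite{BohmPB-AI24} on \DoubleLongEq, and then to verify that the two extra clauses $UT_n$ and $UT'_n$ never interfere. The engine of that refutation is cube learning used purely to enable \emph{universal} propagation: from satisfying trails one learns short cubes that encode the diagonal Herbrand strategy $u_i = x_i$, and these cubes then let each universal variable $u_i$ be propagated to agree with the already-decided $x_i$. Concretely, for each $i$ one can reach a satisfying assignment whose sub-assignment $T'_i = \{x_i = 0,\ u_i = 1,\ t_i = 0\} \cup \{t_j = 1 : j \neq i\}$ satisfies every clause of the matrix; since existential reduction deletes exactly the (rightmost) $t$-literals and keeps $x_i$ (blocked by the $u$'s to its right), we get $\reduce_\exists(t_{T'_i}) = \bar{x}_i \wedge u_i$, a cube that propagates $u_i = 0$ once $x_i = 0$ is on the trail. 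The symmetric sub-assignment yields $x_i \wedge \bar{u}_i$, propagating $u_i = 1$ once $x_i = 1$. With these $2n$ diagonal cubes in hand, any completed block of level-$1$ decisions forces $u_i = x_i$, hence forces all $t_i = 1$ through $A_i$ or $B_i$, producing a clause conflict against $T_n$; conflict analysis then combines the learnt clauses down to $\square$ in polynomially many trails, exactly as in \cite{BohmPB-AI24}. All decisions respect $\lo$ (first all $x$, then $u$, then $t$, with propagation free), clauses use only $\Dtrv$, and cube learning uses $\reduce_\exists$, so the whole derivation lives in $\QCDCLDep{\lo}{\Dtrv}{\CubeLD}$.

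The only genuinely new content is to check that $UT_n$ and $UT'_n$ stay inert. First, by \cref{clm:DoubleLongEqDrrs} we have $\reduce(\Phi) = \Phi$, so universal reduction simplifies no initial clause and the opening trail behaviour is identical to the \Equality\ setting. Second, for the cube-learning (satisfaction) trails I would use a clean sufficient condition: any (sub-)assignment that sets some $u_i = 0$ satisfies both $UT_n$ and $UT'_n$ through the literal $\bar{u}_i$. A direct check shows that each $T'_i$ above (which sets some $t_j = 0$, some $t_j = 1$, and includes a $u$-literal) satisfies $UT_n$ (via $\bar{t}_i$) and $UT'_n$ (via $t_j,\ j\neq i$) for $n \geq 2$; hence the same cubes remain learnable in \DoubleLongEq. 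This is precisely the delicate point, since adding clauses can only \emph{shrink} the satisfying set, so one must confirm the witnesses survive.

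Finally, in the conflict (clause-shortening) trails the added clauses cause no harm. Whenever the decided pattern contains some $x_i = 0$, the propagated value $u_i = 0$ satisfies both $UT_n$ and $UT'_n$, so they remain wide, non-unit, and inert, and the conflict is reached through $T_n$ exactly as before; in the remaining all-ones case $UT_n$ collapses to $\bar{t}_1 \vee \cdots \vee \bar{t}_n$ and is merely falsified in parallel with $T_n$, which is still a clause conflict of the desired kind. Thus no spurious unit propagation or premature conflict is introduced. I expect the main obstacle to be exactly this bookkeeping: confirming, trail by trail, that the extra clauses neither delete a satisfying assignment needed for learning the diagonal cubes (including the requirement that a learning trail satisfy no previously learnt cube) nor fire an unintended propagation, so that the \Equality\ refutation transfers essentially verbatim and the polynomial size bound is preserved.
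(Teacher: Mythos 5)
Your proposal takes essentially the same route as the paper's proof: replay the cube-learning refutation of the \Equality\ formulas from \cite{BohmPB-AI24} verbatim and check that the two extra clauses $UT_n$ and $UT'_n$ never propagate, never cause a premature conflict, and never destroy a satisfying sub-assignment needed to learn the diagonal cubes --- your explicit non-interference checks are correct and are in fact more detailed than the paper's one-line assertion that these clauses ``play no role whatsoever''. One minor slip that does not affect correctness: only $2(n-1)$ diagonal cubes (for $i \le n-1$) are learnable, not $2n$ --- once those are in place, every level-ordered trail is forced into a clause conflict at index $n$ (via $T_n$ propagating $\bar t_n$, then $A_n$, then $B_n$, at which point $UT_n$ is \emph{satisfied} by $\bar t_n$ rather than falsified) --- but this is exactly the construction of \cite{BohmPB-AI24} that you defer to, so the argument goes through.
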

\begin{proof}
	The polynomial size refutation for these formulas in $\QCDCLDep{\lo}{\Dtrv}{\CubeLD}$ is exactly the same as the refutation in $\QCDCLDep{\lo}{\Dtrv}{\CubeLD}$for the  \Equality\ formulas, as described in \cite{BohmPB-AI24}. By constructing trails in exactly the same manner, we first learn $2n-2$ cubes of the form $(x_i \wedge \bar{u}_i)$ and $(\bar{x}_i \wedge u_i)$ for $i=1...n-1$ and then start clause learning by constructing trails ending in a conflict. The two new clauses $UT_n$ and $UT'_n$ play no role whatsoever. For completeness, we reproduce the entire refutation below; a reader familiar with the construction from \cite{BohmPB-AI24} can completely skip these details. 
	
	The proof goes in two stages. The first stage involves learning the cubes $x_i \wedge \bar{u}_i$ and $\bar{x}_i \wedge u_i$ for $i \in [n-1]$. The first trail is the following.
	$$ \mathcal{T}_1 = \mathbf{x_1;\cdots ;x_n;\bar{u}_1;\cdots ;\bar{u}_n;\bar{t}_1;t_2;\cdots;t_n}$$
	
	It assigns all variables without conflict and satisfies the matrix. The partial assignment $x_1 \wedge \bar{u}_1 \wedge \bar{t}_1 \wedge t_2 \wedge \cdots \wedge t_n$ contained in it is also a satisfying assignment, and reducing it with $\reduce_\exists$ we can learn the cube $x_1 \wedge \bar{u}_1$ from this trail.
	
	Analogously, creating a complementary trail $\mathcal{T}_1'$ where each decision is the complement of the decision in $\mathcal{T}_1$, we can learn the cube $\bar{x}_1 \wedge u_1$.
	
	Suppose we have learn $2i$ cubes in the same manner; $x_j\wedge \bar{u}_j$ and $\bar{x}_j\wedge u_j$ for $j=1,\cdots,i$. For $i+1$, create the following trail.
	$$\mathcal{T}_{i+1}= \mathbf{x_1},u_1,t_1; \cdots ; \mathbf{x_i},u_i,t_j; \mathbf{x_{i+1};\cdots;x_n;\bar{u}_{i+1};\cdots ; \bar{u}_n; \bar{t}_{i+1}; t_{i+2};\cdots;t_n} $$
	
	In this trail,  for $j \leq i$, 
	$ \ante(u_j) = x_j \wedge \bar{u}_j $ and  
	$ \ante(t_j) = \bar{x}_j \vee \bar{u}_j \vee t_j $.
	As earlier, the trail satisfies all clauses without conflict. Extracting the partial assignment  $x_{i+1} \wedge \bar{u}_{i+1} \wedge t_1 \wedge \cdots t_i \wedge \bar{t}_{i+1} \wedge t_{i+2} \wedge \cdots \wedge t_n$ which also satisfies the matrix, and reducing it, we can learn the cube $x_{i+1} \wedge u_{i+1}$.  Analogously through a trail $\mathcal{T}_{i+1}'$ we learn $\bar{x}_{i+1} \wedge u_{i+1}$.
	
	Having learnt the $2n-2$ cubes in this manner, we start with clause learning, where we proceed by constructing the trails 	
	$\mathcal{U}_{n-1}, \mathcal{V}_{n-1},  \mathcal{U}_{n-2}, \mathcal{V}_{n-2}, \cdots , \mathcal{U}_1, \mathcal{V}_1$ described below, and learn clauses $L_{n-1}, R_{n-1}, \cdots L_1, R_1$ corresponding to these trails.  We use $T_j$ to denote the subclause of $T_n$ with literals $\bar{t_i}$ for $i\;e j$.\\
	The initial trail is 
	\begin{equation*}
		\mathcal{U}_{n-1}= (\mathbf{x_1},u_1,t_1; \mathbf{x_2},u_2, t_2; \cdots ; \mathbf{x_{n-1}}, u_{n-1}, t_{n-1}, \bar{t}_n, x_n, \square)
	\end{equation*}
	The antecedent clauses are as follows: 
	\begin{eqnarray*}
		\ante(u_j) &=& x_j \wedge \bar{u}_j \\
		\ante(t_j) &=& \bar{x}_j \vee \bar{u}_j \vee t_j  \\
		\ante(\bar{t}_n) &=& T_n \\
		\ante(x_n) &=& x_n \vee u_n \vee t_n \\
		\ante(\square) &=&   \bar{x}_n \vee \bar{u}_n \vee t_n 
	\end{eqnarray*}
	From these  clauses we learn the clause $L_{n-1} = \bar{x}_{n-1} \vee  \bar{u}_{n-1} \vee (u_n \vee \bar{u}_n) \vee T_{n-2}$. \\
	Then we restart and create a symmetric trail to $\mathcal{U}_{n-1}$:
	\begin{equation*}
		\mathcal{V}_{n-1}= (\mathbf{\bar{x}_1}, \bar{u}_1,t_1; \mathbf{\bar{x}_2}, \bar{u}_2, t_2; \cdots ; \mathbf{\bar{x}_{n-1}}, \bar{u}_{n-1}, t_{n-1}, \bar{t}_n, x_n, \square)
	\end{equation*}
	where the antecedent clauses are 
	\begin{eqnarray*}
		\ante(\bar{u}_j) &=& \bar{x}_j \wedge u_j \\
		\ante(t_j) &=& x_j \vee u_j \vee t_j  \\
		\ante(\bar{t}_n) &=& T_n \\
		\ante(x_n) &=& x_n \vee u_n \vee t_n \\
		\ante(\square) &=&  \bar{x}_n \vee \bar{u}_n \vee t_n . 
	\end{eqnarray*}
	From this trail we can learn the clause $R_{n-1} = x_{n-1} \vee u_{n-1} \vee (u_n \vee \bar{u}_n)\vee T_{n-2}$.
	
	For $i$ in the range of $2$ to $n-1$, we define the following clauses:
	\begin{equation*}
		\begin{split}
			L_i = \bar{x}_i \vee \bar{u}_i \vee \bigvee_{j=i+1}^n (u_j \vee \bar{u}_j) \vee T_{i-1}\\
			R_i = x_i \vee u_i \vee \bigvee_{j=i+1}^n (u_j \vee \bar{u}_j)  \vee T_{i-1}\\
		\end{split}
	\end{equation*}        
	We claim that from the trail $\mathcal{U}_i$ we learn the clause $L_i$ and from the trail $\mathcal{V}_i$ we learn the clause $R_i$. 
	We have already established this  for $i = n-1$.  Suppose we have already learnt $L_{n-1}, R_{n-1}, ..... , L_{i+1}, R_{j+1}$ for $1 \leq i < n-1$. Continuing, we consider the next trail, 
	\begin{equation*}
		\mathcal{U}_{i}= (\mathbf{x_1}, u_1, t_1; \mathbf{x_2},u_2,t_2; \cdots ; \mathbf{x_{i}}, u_i, t_{i}, x_{i+1}, \square)
	\end{equation*} 
	where the antecedent clauses are as follows.
	\begin{eqnarray*}
		\ante(u_j) &=& x_j \wedge \bar{u}_j \\
		\ante(t_j) &=& \bar{x}_j \vee \bar{u}_j \vee  t_j  \\
		ante(x_{i+1}) &=& L_{i+1} \\
		ante(\square) &=& R_{i+1} 
	\end{eqnarray*}
	From this we learn the clause $L_i = \bar{x}_i \bar{u}_i \vee \bigvee_{j=i+1}^n (u_j \vee \bar{u}_j) \vee T_{i-1}$. \\
	Next we create the symmetrical trail, 
	\begin{equation*}
		\mathcal{V}_{i}= (\mathbf{\bar{x}_1}, \bar{u}_1, t_1; \mathbf{\bar{x}_2},\bar{u}_2,t_2; \cdots ; \mathbf{\bar{x}_{i}}, \bar{u}_i, t_{i}, x_{i+1}, \square)
	\end{equation*} 
	and the antecedent clauses are as follows:
	\begin{eqnarray*}
		\ante(\bar{u}_j) &=& \bar{x}_j \wedge u_j \\
		\ante(t_j) &=& x_j \vee u_j \vee t_j  \\
		\ante(x_{i+1}) &=& L_{i+1}  \\
		\ante(\square) &=& R_{i+1}  
	\end{eqnarray*}
	From this we can learn the clause $R_i = x_i \vee u_i \vee \bigvee_{j=i+1}^n (u_j \vee \bar{u}_j) \vee T_{i-1}$.
	
	The proof ends with the two trails 
	\begin{equation*}
		\mathcal{U}_1 = (\mathbf{x_1},u_1, t_1,x_2,\square)
	\end{equation*}
	with antecedents clauses 
	\begin{eqnarray*}
		\ante(u_1) = x_1 \wedge \bar{u}_1 \\
		\ante(t_1) = \bar{x}_1 \vee t_1  \\
		\ante(x_2) = L_{2} \\
		\ante(\square) = R_{2} 
	\end{eqnarray*}
	allowing us to learn the clause $L_1 = \bar{x}_1$, and finally the last trail 
	\begin{equation*}
		\mathcal{V}_1 = (\bar{x}_1, \bar{u}_1, t_1, x_2, \square)
	\end{equation*}
	with antecedent clauses
	\begin{eqnarray*}
		\ante(\bar{x}_1) = \bar{x}_1 \\
		\ante(\bar{u}_1) = \bar{x}_1 \wedge u_1 \\
		\ante(t_1) = \bar{x}_1 \vee u_1 \vee t_1 \\
		\ante(x_2) = L_2  \\
		\ante(\square) = R_2 
	\end{eqnarray*}
	
	Resolving over all propagations in this trail, we learn the empty clause, completing the refutation.                  
\end{proof}

Now proving \cref{lem:DoubleLongEq-easyQCDCLcube} is straightforward:
\begin{proof}(of \cref{lem:DoubleLongEq-easyQCDCLcube}.)
	It can be seen that the cubes learnt in the refutation described in \cref{prop:DoubleLongEq-easyQCDCLcube} require no cube learning via resolution steps; they are all learnt from trails ending in satisfaction, using the term axiom rule and the $\reduce_\exists$ rule.  Therefore for this particular refutation, every cube learning step in the $\QCDCLDep{\lo}{\Dtrv}{\CubeLD}$ refutation is also a valid step in a  $\QCDCLDep{\lo}{\Dtrv}{\CubeD{\Dtrv}}$ refutation.
	
	By the discussion after the formula definitions, these are also valid refutations in 
	$\D_1 + \QCDCLDep{\lo}{\D_2}{\CubeD{\D_2}}$  where $\D_1,\D_2 \in \{\Dtrv,\Dstd, \Drrs\}$.
\end{proof}

We now turn to hardness. 
\begin{restatable}{lemma}{restateDoubleLongEqhardQCDCL}
	\label{lem:DoubleLongEq-hardQCDCL}
	For  $\D_1,\D_2 \in \{\Dtrv,\Dstd, \Drrs\}$ the \DoubleLongEq\ formulas require exponential size refutations in $\D_1 + \QCDCLDep{\lo}{\D_2}{\NoCube}$.
\end{restatable}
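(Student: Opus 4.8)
The plan is to first reduce every system named in the statement to the single plain system $\QCDCLDep{\lo}{\Dtrv}{\NoCube}$ on these formulas, and then to transfer the known exponential lower bound for \Equality. For the reduction, \cref{clm:DoubleLongEqDrrs} gives $\Drrs(\Phi)=\Dstd(\Phi)=\Dtrv(\Phi)$ for $\Phi=\DoubleLongEq$, so each $\D_2\in\{\Dtrv,\Dstd,\Drrs\}$ enables exactly the same universal reductions, hence exactly the same propagations and learnt clauses; thus the three choices of $\D_2$ yield identical derivations on $\Phi$. For preprocessing, \cref{prop:DtrvDstd-nopre} already handles $\D_1\in\{\Dtrv,\Dstd\}$, and since $\Drrs(\Phi)=\Dtrv(\Phi)$ with $\reduce(\Phi)=\Phi$ (again \cref{clm:DoubleLongEqDrrs}) every universal literal of $\Phi$ stays blocked under $\Drrs$, so $\reduceDrrs(\Phi)=\Phi$ and $\D_1=\Drrs$ is a no-op as well. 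It therefore suffices to prove that $\DoubleLongEq$ needs exponential-size refutations in $\QCDCLDep{\lo}{\Dtrv}{\NoCube}$.

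The core of the argument is to show that the two added clauses $UT_n$ and $UT'_n$ are inert, so that the hardness of \Equality\ is inherited. I would first note that under $\lo$ every trail decides the $x_i$'s, then the $u_i$'s, then the $t_i$'s, and that without cube learning the universal $u_i$ are only decided, never propagated. Since $UT_n$ and $UT'_n$ each contain all the literals $\bar u_i$, both are satisfied as soon as some $u_i$ is set to $0$, so they can affect a trail only once all $u_i$ have been assigned $1$. Under that assignment $UT_n$ restricts to $T_n$, so (propagations being eager) every literal it would propagate and every conflict it would cause is already delivered by $T_n$; and $UT'_n$ restricts to $t_1\vee\cdots\vee t_n$, which is satisfied the instant any $t_i$ is propagated to $1$ by an $A_i$ or $B_i$ clause, i.e.\ whenever some $x_i=1$. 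I would then check that $UT'_n$ can become unit only in the branch where all $x_i=0$, and that its propagation there completes a total assignment satisfying the whole matrix; such a satisfying trail is discarded in the $\NoCube$ setting and yields no learnt constraint. Hence on every trail that ends in a conflict, the conflict and all antecedents entering conflict analysis come from the \Equality\ clauses $A_i,B_i,T_n$ together with previously learnt clauses.

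Once inertness is in place, the exponential lower bound for \Equality\ in $\QCDCLDep{\lo}{\Dtrv}{\NoCube}$ from \cite{BeyersdorffBohm-LMCS23} carries over, giving the claim. The main obstacle is the single loophole left open above: when the trail has all $u_i=1$ and all but one $t_k$ already set to $1$, both $T_n$ and $UT_n$ are unit for the same $t$-literal, so an adversarial refutation is free to name $UT_n$ rather than $T_n$ as the antecedent. Because $\reduce(UT_n)=UT_n$ (each $\bar u_i$ is blocked by the surviving $t$-literals), resolving over $UT_n$ in conflict analysis merely injects the literals $\bar u_i$ into the learnt clause, widening it instead of shortening it. The delicate step is to confirm that this widening cannot defeat the counting argument behind the \Equality\ bound — i.e.\ that a clause learnt through $UT_n$ is no more useful for subsequent propagation and learning than the corresponding \Equality\ clause learnt through $T_n$ — so that the lower bound survives every possible choice of antecedents along the trails.
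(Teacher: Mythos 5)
Your reduction to the single system $\QCDCLDep{\lo}{\Dtrv}{\NoCube}$ is fine and matches the paper's first step: since $\reduce(\Phi)=\Phi$ and $\Drrs(\Phi)=\Dstd(\Phi)=\Dtrv(\Phi)$ by \cref{clm:DoubleLongEqDrrs}, preprocessing is a no-op and all three choices of $\D_2$ license identical reductions, propagations and learnt clauses. The problem is in the second half. You try to \emph{transfer} the \Equality\ lower bound by arguing that $UT_n$ and $UT'_n$ are inert, but a lower bound for a formula does not automatically survive the addition of clauses in a QCDCL proof system: extra clauses give the refutation more antecedents to name and more clauses to learn, so the transfer has to be argued, and you yourself identify exactly where it breaks --- when all $u_i$ are set to $1$ and $n-1$ of the $t_j$'s are assigned, $UT_n$ is unit simultaneously with $T_n$, and an adversarial refutation may use it as antecedent, injecting $\bar u_i$ literals into every subsequently learnt clause. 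Whether these widened clauses can or cannot short-circuit the counting argument underlying the \Equality\ bound is precisely the content of the lemma, and your proposal ends by naming this as "the delicate step to confirm" rather than confirming it. As written, the proof is therefore incomplete at its crux.

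The paper avoids this difficulty entirely by not transferring anything: it invokes the gauge lower bound of \cite{BohmBeyersdorff-JAR23}, which states that any $XUT$-formula with the $XT$-property requires $\QCDCLDep{\lo}{\Dtrv}{\NoCube}$ refutations of size exponential in its gauge (the width of the narrowest $X$-clause derivable using only reductions and resolutions on $T$-variables). One checks that $\DoubleLongEq_n$ is an $XUT$-formula with the $XT$-property, and that any derivation of an $X$-clause must start a $T$-resolution from $T_n$ or $UT_n$ (the only clauses with negated $t$-literals), each of which contains all $n$ of the $\bar t_i$; eliminating every $t_i$ introduces every $x_i$, so the gauge is $n$. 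Because the gauge is a property of the formula itself, the bound is automatically robust against whichever antecedents a trail happens to use --- which is exactly the robustness your inertness argument is missing. If you want to salvage your route, you would essentially have to redo a gauge-style or width-style counting argument for the enlarged clause set, at which point you may as well apply the gauge theorem directly.
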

\begin{proof}
                  By the discussion above, it suffices to show that the formulas require exponential size refutations in $\QCDCLDep{\lo}{\Dtrv}{\NoCube}$.
				
		  In \cite{BohmBeyersdorff-JAR23}, the authors consider $\Sigma^3$ formulas with a specific structure, called $XUT$-formulas with the $XT$-property. They introduce a semantic measure called gauge for $\Sigma^3$ QBFs, and show that for an $XUT$-formula with the $XT$-property, the size of a refutation in  $\QCDCLDep{\lo}{\Dtrv}{\NoCube}$ is at least exponential in its gauge.

                  The \DoubleLongEq\ formulas are easily seen to be $XUT$-formulas with the $XT$-property. We show now that they have linear gauge,  implying that they are exponentially hard.

Recall the definition of $XUT$ formulas and the gauge measure:
\begin{definition}[$XT$-property, \cite{BohmBeyersdorff-JAR23}]
	Let $\Phi$ be a PCNF QBF of the form $\exists X \forall U \exists T \cdot \phi$, where $X,U,T$ are non-empty sets of variables. Then $\Phi$ is an $XUT$-formula. We call a clause $C$  an
	\begin{itemize}
		\item $X$-clause: if it is non-empty and contains only $X$ variables,
		\item $T$-clause:  if it is non-empty and contains only $T$ variables,
		\item $XT$-clause: if it contains no $U$ variable and at least one $X$ and one $T$ variable, 
		\item $XUT$-clause: if it contains atleast one each of $X$,$U$, and $T$ variables. 
	\end{itemize} 
	$\Phi$ is said to fulfill the $XT$-property if $\phi$ contains no $XT$-clauses or unit $T$ clause,  and if no two $T$ clauses in $\phi$ are resolvable (the resolvent of any two $T$ clauses, if defined, is tautological). 
\end{definition}
\begin{definition}[gauge, \cite{BohmBeyersdorff-JAR23}]
	Let $\Phi$ be an $XUT$ formula. 
	The gauge of $\Phi$ is the size of the narrowest $X$-clause derivable  using only reductions and resolutions over variables in $T$. 
\end{definition}

First observe that none of the axioms of \DoubleLongEq\ are $X$-clauses. Therefore to derive an $X$-clause, there has to be some $T$-resolutions. A first $T$-resolution must involve either $T_n$ or $UT_n$, since only these clauses have $T$ variables negated. However, both these  clauses have all $n$ $T$-variables. Thus to eventually derive an $X$-clause,  there must be a resolution on every $t_i$ variable. Each such resolution introduces an $x_i$ variable. Therefore by the time all $T$ variables are removed, all the $X$ variables are  introduced. Therefore, the gauge of $\DoubleLongEq_n$ is  $n$.
\end{proof}

With this hardness result about \DoubleLongEq, we  can now complete the proof of
\cref{thm:ldqd-qcdcldord}. 
\begin{proof}(of \cref{thm:ldqd-qcdcldord}.)
  By definition, a valid $\LDQDRes$ refutation is contained within every $\QCDCLDep{\Dord{\D}}{\D}{\CubePolicy}$ refutation.  

We now show that the simulation is strict. 
We first prove it for the case when $\CubePolicy$ is $\NoCube$, using the
\DoubleLongEq\ formulas. Then we tweak the formulas slightly to extend the result to other cube-learning policies.

We saw in \cref{lem:DoubleLongEq-hardQCDCL}  that these formulas are hard for $\QCDCLDep{\Dord{\D}}{\D}{\NoCube}$.
To see that they are easy to refute in \LDQDRes, it is enough to construct a short refutation in \LDQRes.
By the rules of long-distance resolution , we can resolve each pair of $A_i$ and $B_i$ clauses on $x_i$ to get the $n$ clauses  $C_i = u_i \vee \bar{u}_i \vee t_i$. 
Starting with the clause $T_n$ and resolving sequentially with the $C_i$ clauses, we obtain the purely universal clause  $\bigvee_{i=1}^n ~ u_i \vee \bar{u}_i$. This can be universally reduced to  yield the empty clause, completing the refutation.

To extend the separation to systems that allow cube-learning, we slightly modify the \DoubleLongEq\ formula.
We add new existentially quantified variables at the end of the quantifier prefix, and we add to the matrix new clauses using these variables that encode \PHP. This not only preserves that the formula is false, but also makes the formula matrix unsatisfiable. Therefore, cube-learning will never be able to help in any \QCDCL\ refutation. The hardness for $\QCDCLDep{\Dord{\D}}{\D}{\NoCube}$ remains valid even after this modification, since any refutation of the modified formula must either refute the unmodified \DoubleLongEq\ formula, or refute \PHP, and  \PHP\ itself is propositionally hard for resolution.

The \LDQRes\ refutation of the modified formula remains the same as for \DoubleLongEq\ since the new clauses are completely disjoint. 
\end{proof}

\paragraph*{The $\PreRRSTrap_n$ formulas.}
	The next formula is designed to explore how adding $\Drrs$ in different ways affects the  system. It sends \QCDCL\ trails into a "trap" (of refuting the hard existential Pigeonhole Principle \PHP; see \cref{ex:illustration}) if $\Drrs$ is not used in propagation, but allows a short refutation (a contradiction on two variables) when $\Drrs$ is used. This leads to the definition of a formula inspired by the \Trapdoor\ and \DepTrap\ formulas from \cite{BeyersdorffBohm-LMCS23}(Def.~4.5)  and \cite{ChoudhuryMahajan-JAR24}(Def.~4.4) respectively.
%
		\begin{restatable}{formula}{restatePreRRSTrap}	
			\label{for:PreRRSTrap}
			The $\PreRRSTrap_n$  formula has the prefix  \\ $ \exists a ~ \forall p ~ \exists y_1, \cdots , y_{s_n} ~ \forall w ~ \forall v ~ \exists t ~ \exists x_1, \cdots , x_{s_n} ~ \forall u ~ \exists b ~ \exists q ~ \exists r ~ \exists s$, and the matrix is as given below.
			\[
			\begin{array}{lc}
				& \PHP_n^{n+1}(x_1, \cdots, x_{s_n}) \\
				\textrm{for~~} i\in[s_n]: &
				(\bar{y}_i  \vee x_i \vee u \vee b)  \  , \  (y_i \vee \bar{x}_i  \vee u  \vee b)\\
				\textrm{for~~} i\in[s_n]: &
				(y_i \vee w \vee v  \vee t \vee b) \ , \   (y_i \vee w \vee v \vee \bar{t} \vee b)  \ , \  (\bar{y}_i \vee w \vee v \vee t \vee b)  \ , \ (\bar{y}_i \vee w \vee v \vee  \bar{t} \vee b)  \\
				&  (\bar{u} \vee \bar{b})  \ , \ (v \vee \bar{b} \vee \bar{r}) \ , \ (\bar{v} \vee b \vee s) \ , \   (a \vee \bar{b}) \ , \ (\bar{a} \vee \bar{b}) \ , \ (p \vee q) \ , \ (\bar{p} \vee \bar{q}) \\ 
			\end{array}
			\] 
		\end{restatable}

		This formula has an unsatisfiable matrix (due to the presence of $\PHP$).
		
		\begin{observation}
			The variable $"w"$ is not necessary for the lower or upper bounds proved for this formula. Initialising $\PreRRSTrap|_{w=0}$ or removing the variable $"w"$ entirely affects neither the bounds nor their proofs. 
			
			However we keep it in because the $\PreRRSTrap$ formulas are defined to extend the $\Trapdoor$  formula (defined in \cite{BeyersdorffBohm-LMCS23}) which has the $"w"$ variable. Also, it shows that even if the preprocessing step (by $\Drrs$) is non-trivial and changes the formula, addition of $\Drrs$ in propagation can still make a difference.
			
		\end{observation}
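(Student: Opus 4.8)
The plan is to argue that $w$ is a pure \emph{spectator} variable. Syntactically, $w$ occurs only with positive polarity, and only in the four clause families $(y_i \vee w \vee v \vee t \vee b)$, $(y_i \vee w \vee v \vee \bar{t} \vee b)$, $(\bar{y}_i \vee w \vee v \vee t \vee b)$, $(\bar{y}_i \vee w \vee v \vee \bar{t} \vee b)$. Setting $w=0$ makes the literal $w$ vanish from each of these, producing exactly the clauses one obtains by deleting $w$ from the matrix outright; the only residual difference between $\PreRRSTrap|_{w=0}$ and the formula with $w$ deleted is that the former may retain an occurrence-free universal $w$ in the prefix, which is inert. Hence it suffices to compare $\PreRRSTrap$ with $\PreRRSTrap|_{w=0}$, and to check that the upper- and lower-bound arguments are insensitive to whether the literal $w$ is present in these four families.

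For the upper bound, I would exhibit the short refutation (the one available once $\Drrs$ is used in propagation) and verify that it reaches its contradiction through the $a$-, $b$-, $u$- and $v$-clauses, in particular $(a \vee \bar{b})$, $(\bar{a} \vee \bar{b})$ and $(\bar{u} \vee \bar{b})$, together with the $\Drrs$-enabled propagations, without ever invoking any of the four $w$-clauses as an antecedent. Since those clauses never enter the derivation, the identical sequence of trails and learnt clauses is valid whether $w$ is present, set to $0$, or removed. Even if some $w$-clause were used, dropping the false literal $w$ from an antecedent can only widen the set of reductions available under $\reduceD$, so no propagation or resolution step can be lost.

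For the lower bound, the hardness stems from the trails being forced into refuting $\PHP$ over the $x$-variables whenever $\Drrs$ is not used in propagation. The variable $w$ does not appear in the $\PHP$ clauses, and because it occurs only positively it can never serve as a resolution pivot; so the only way a refutation can touch a $w$-clause is by resolving over $y_i$, $v$, $t$, or $b$, or by reducing $w$ itself. Under $\Dtrv$ (and hence $\Dstd$) the literal $w$ is blocked by $t$ and $b$, so it cannot be reduced, and each $w$-clause reduces to the same object with or without the inert literal. I would therefore argue that no use of these clauses produces a contradiction bypassing $\PHP$, so the trap and the resulting exponential lower bound persist verbatim for $\PreRRSTrap|_{w=0}$ and for the $w$-deleted formula.

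The main obstacle is this last step: one must rule out that the four $w$-clauses, or their disappearance when a trail decides $w=1$, provide a shortcut around $\PHP$. The key point to nail down is that these clauses interact with the rest of the matrix only through $y_i,v,t,b$, exactly as in the $w$-deleted formula, so any refutation exploiting them is already a refutation of $\PreRRSTrap|_{w=0}$; and deciding $w=1$ merely deletes inert clauses, which cannot assist in refuting $\PHP$. With this in hand, both bounds and their proofs transfer unchanged, establishing the observation. Keeping $w$ is then justified purely because it makes $\Drrs$-preprocessing nontrivial: $w$ is $\Drrs$-reducible but $\Dtrv$-blocked, which is precisely what separates the preprocessing and propagation uses of $\Drrs$.
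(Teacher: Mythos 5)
The paper offers no proof of this observation at all; it is treated as immediate from the way the two bounds are stated. Both \cref{lem:PreRRSTrap-easyDrrsQCDCLDrrs} and \cref{lem:PreRRSTrap-hardDrrsQCDCL} concern systems of the form $\Drrs + \QCDCLDep{\lo}{\cdot}{\cdot}$, i.e.\ with $\Drrs$-preprocessing, and since $w$ occurs in only one polarity no variable depends on it under $\Drrs$ (\cref{clm:PreRRSTrapDrrs}); hence $\redPreRRSTrap = \PreRRSTrap|_{w=0}$ and both proofs already operate verbatim on the $w$-free matrix. Your argument reaches the same conclusion by a more laborious direct route, analysing the four $w$-clauses individually, and it is essentially sound; what your route buys is a stronger claim, namely that $w$ is also inert in settings \emph{without} preprocessing (your $\Dtrv$-blocking analysis for the lower bound addresses exactly that case), which the observation as stated does not require. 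One factual slip to fix: your plan to ``verify that the short refutation never invokes any of the four $w$-clauses as an antecedent'' would fail, since those clauses are precisely the antecedents of $t$ and $\square$ in the trails of the short refutation (in their $w$-reduced form $\bar{y}_i \vee v \vee t \vee b$ etc.); your fallback sentence --- that dropping a false or reducible literal $w$ from an antecedent can only help propagation --- is what actually carries the upper-bound half of the argument, so lead with that rather than with the incorrect non-occurrence claim.
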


\begin{restatable}{proposition}{restateclaimPreRRSTrap}
\label{clm:PreRRSTrapDrrs}
  $\Drrs(\PreRRSTrap) = \{(u,b), (v,b), (p,q)\}$.
\end{restatable}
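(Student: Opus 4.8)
The plan is to first pin down $\Dtrv(\PreRRSTrap)$ and then, since $\Drrs\subseteq\Dtrv$ always, decide for each pair in $\Dtrv$ whether it meets the resolution-path dependency condition of \cref{def:rrs}. Reading the prefix, the universal variables are $p,w,v,u$, and $\Dtrv$ is exactly the set of pairs $(U,e)$ with $U$ universal, $e$ existential, and $e$ quantified to the right of $U$. This is a finite list of candidates, so the proof splits into a membership part (the three claimed pairs) and a non-membership part (everything else).

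For membership I would exhibit length-one resolution paths realising one of the two polarity patterns in \cref{def:rrs}. For $(u,b)$, the clause $(\bar u\vee\bar b)$ connects $(\bar u,\bar b)$ while any clause $(\bar y_i\vee x_i\vee u\vee b)$ connects $(u,b)$; for $(v,b)$, the clauses $(v\vee\bar b\vee\bar r)$ and $(\bar v\vee b\vee s)$ connect $(v,\bar b)$ and $(\bar v,b)$; for $(p,q)$, the clauses $(p\vee q)$ and $(\bar p\vee\bar q)$ connect $(p,q)$ and $(\bar p,\bar q)$. Each supplies a complete polarity pattern, so all three pairs lie in $\Drrs$.

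For non-membership I would exploit pure-literal and isolation structure to knock out most candidates cheaply. The universal $w$ occurs only positively, so $\bar w$ lies in no clause and no path can begin at $\bar w$; hence no pair $(w,\cdot)$ can satisfy either polarity pattern. The variable $r$ occurs only negatively and $s$ only positively, so $r$ and $\bar s$ appear in no clause, and since every polarity pattern for a pair $(\cdot,r)$ or $(\cdot,s)$ forces a path ending at the absent literal, all such pairs are excluded. Finally $p,q$ occur only in $(p\vee q),(\bar p\vee\bar q)$, a gadget joined to the rest of the matrix solely through the universal $p$, which can never serve as a resolution pivot; thus from $p$ (resp.\ $\bar p$) the only reachable literal is $q$ (resp.\ $\bar q$), eliminating every $(p,\cdot)$ except $(p,q)$ and every $(\cdot,q)$ except $(p,q)$.

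The delicate cases, and the main obstacle, are the pairs $(v,t)$ and $(v,x_i)$: here one polarity is easily realised (for instance $(v,\bar t)$ via a $t$-clause), so everything reduces to showing that no resolution path starts at $\bar v$ and reaches a $t$- or $x_i$-literal. The lever is the $\Dtrv$-constraint in \cref{def:rrs}, which forces every intermediate pivot variable to be existential and quantified after $v$, thereby forbidding $a$ and all $y_i$ as pivots. I would trace the unique clause containing $\bar v$, namely $(\bar v\vee b\vee s)$: pivoting on $s$ dead-ends at once (no $\bar s$ exists), while pivoting on $b$ lands in a $\bar b$-clause from which every continuation is blocked --- $(\bar u\vee\bar b)$ offers no further existential pivot, $(v\vee\bar b\vee\bar r)$ points to the absent literal $r$, and $(a\vee\bar b),(\bar a\vee\bar b)$ would require the forbidden pivot $a$. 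Since the only existential pivots inside the $xyub$-clauses and $t$-clauses (other than the targets $t,x_i$ themselves) are the $y_i$, which are forbidden, none of those clauses is ever reachable from $\bar v$; the $\bar v$-reachable literals are confined to $\{\bar v,b,s,\bar b\}$. Consequently $(\bar v,t),(\bar v,\bar t),(\bar v,x_i),(\bar v,\bar x_i)$ are all unconnected, both polarity patterns for $(v,t)$ and $(v,x_i)$ fail, and collecting the three surviving pairs yields $\Drrs(\PreRRSTrap)=\{(u,b),(v,b),(p,q)\}$.
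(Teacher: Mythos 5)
Your proof is correct and follows essentially the same route as the paper's: a case analysis over the universal variables $p,w,v,u$, using single-polarity arguments for $w$ (and for $r,s$), the isolation of the $(p\vee q),(\bar p\vee\bar q)$ gadget, and a trace of all resolution paths emanating from $\bar v$ through $(\bar v\vee b\vee s)$. One tiny slip: the literal $\bar r$ is also a reachable endpoint from $\bar v$ (pivot on $b$ into $(v\vee\bar b\vee\bar r)$), so your reachable set should include it; this is harmless, since you already exclude every pair involving $r$ by the absence of the positive literal $r$ from the matrix.
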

\begin{proof}
	We look at all universal variables individually. 
	
	First consider $p$. The only other variable it shares a clause with is $q$, and $q$ does not share clauses  with any other existential variable. Therefore, $q$ is the only potential variable that can depend on $p$ in $\Drrs$ , and it indeed does so as witnessed by the path $((p,q),(\bar{q},p) )$.
	
	Next consider $w$. Since it appears in only one polarity in the matrix, therefore by definition no variable can depend on it.
	
	Third consider $v$. Consider any path starting with $\bar{v}$ and ending in $v$, and linked by existential variables (in opposing polarities) right of $v$. Such a path must begin with the clause $\bar{v}\vee b\vee s$. Since $\bar{s}$ does not even appear in the formula, the linking literal must be $b$. The next clause must contain $\bar{b}$, and also either $v$ or an existential variable right of $v$. The only such clause is $v\vee \bar{b} \vee \bar{r}$, and $r$ cannot be used to further extend the path since the positive literal $r$ does not appear in the formula. Hence the only such path $((\bar{v},b),(\bar{b},v)) $,
	and $b$ is the only existential depending on $v$ in $\Drrs$. 
	
	Finally consider $u$. $r,s$ appear in only one polarity in the axioms and $q$ is completely disjoint from any clause with $u$. So potentially only $b$ can depend on $u$. it indeed does so, because there is a path $(u,b),(\bar{b},\bar{u})$;  therefore $(u,b) \in \Drrs$. 
	
	Thus $\Drrs(\PreRRSTrap) = \{(u,b), (v,b), (p,q)\}$.
\end{proof}
Hence  \redPreRRSTrap = $\PreRRSTrap|_{w=0}$.  That is, if we preprocess using $\Drrs$, everything stays the same except that the variable $w$ "disappears".
		
		However, just preprocessing by $\Drrs$ is not enough to make this formula easy to refute. The following lemmas shows that the presence of $\Drrs$ {\em during propagation} is crucial to achieving polynomial sized refutations; its absence forces exponential size.
	
\begin{restatable}{lemma}{restatePreRRSTrapeasy}	
	\label{lem:PreRRSTrap-easyDrrsQCDCLDrrs}
	For $\CubePolicy  \in \{\NoCube, \CubeLD, \CubeD{\Drrs}\}$ the \PreRRSTrap\ formulas have polynomial size refutations in $\Drrs + \QCDCLDep{\lo}{\Drrs}{\CubePolicy}$ 
\end{restatable}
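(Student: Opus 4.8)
The plan is to exhibit an explicit refutation, after first disposing of the cube-learning policies and of the preprocessing step. Since the matrix of \PreRRSTrap\ is unsatisfiable (it contains \PHP), no satisfying trail can ever be built, so a refutation in the $\NoCube$ system lifts verbatim to both cube-learning systems: by \cref{obs:nocubeImplycube} it suffices to construct a polynomial-size refutation for $\CubePolicy=\NoCube$, and this same object is a valid refutation for $\CubeLD$ and for $\CubeD{\Drrs}$. As for preprocessing, by \cref{clm:PreRRSTrapDrrs} we have $\Drrs(\PreRRSTrap)=\{(u,b),(v,b),(p,q)\}$, so $\reduceDrrs$ merely erases the universal variable $w$ (no existential depends on it, so it is unblocked everywhere); I would therefore work with $\PreRRSTrap|_{w=0}$, though the construction does not really rely on this simplification.

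The heart of the argument is that, once $b$ is set to false, $\Drrs$ collapses the ``trap'' clauses into an immediately contradictory gadget on the low-level variables, so the $\lo$ trail never engages the hard \PHP\ variables $x_i$. Concretely, the first $\lo$ decision is forced onto $a$, and whichever value is chosen, one of $(a\vee\bar b)$, $(\bar a\vee\bar b)$ becomes unit and propagates $\bar b$. With $b$ false, each group of four trap clauses $(y_i\vee v\vee t\vee b)$, $(y_i\vee v\vee\bar t\vee b)$, $(\bar y_i\vee v\vee t\vee b)$, $(\bar y_i\vee v\vee\bar t\vee b)$ restricts to clauses of the form $(y_i\vee v\vee t)$, and since neither $t$ nor $y_i$ depends on $v$ under $\Drrs$ (in contrast to $\Dtrv$, where $t$ blocks $v$), the literal $v$ reduces away, leaving the four binary clauses $(y_i\vee t)$, $(y_i\vee\bar t)$, $(\bar y_i\vee t)$, $(\bar y_i\vee\bar t)$. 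These are jointly unsatisfiable already for a single $i$. After the forced decisions on $a$ and then $p$ (level~$2$, whose propagation of $q$ or $\bar q$ is inert), the next $\lo$ decision is $y_1$, and either value triggers a conflict between the reduced $t$-clauses.

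I would then run a constant number of such trails. The first trail (deciding $a$, then $p$, then $y_1$) conflicts on $t$, and resolving the two conflicting reduced trap clauses on $t$ and then the $\bar b$-antecedent on $b$ learns a short binary clause of the form $(\bar a\vee y_1)$. A second trail again decides $a$: now $(\bar a\vee y_1)$ additionally propagates $y_1$, the $t$-clauses conflict with $y_1$ propagated rather than decided, and conflict analysis resolving over $t$, over $y_1$, and over $b$ learns the unit clause $(\bar a)$. Two further analogous trails then learn $(\bar y_1)$ and finally $\square$. Each trail has constant length and touches none of the $x_i$ beyond at most one harmless propagation, so the entire refutation has size independent of $n$; in particular it is polynomial.

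The step I expect to require the most care is the bookkeeping of universal reductions during conflict analysis. While a derived clause still contains $b$, the literal $v$ (and, in the $y_i/x_i$-link clauses, $u$) is blocked, because $(v,b),(u,b)\in\Drrs$; but the instant $b$ is resolved out through $(a\vee\bar b)$ or $(\bar a\vee\bar b)$, the only surviving existential literals are among $a$ and $y_1$, neither of which depends on $v$ or $u$ under $\Drrs$, so the leftover universal literal reduces away exactly when needed. I would verify that each of these is a legal $\reduceDrrs$ step (the overall soundness of the learning is already guaranteed by \cref{lem:soundness}), and confirm the crucial point that $b$ is genuinely falsified before any $x_i$ becomes relevant --- this is precisely what keeps the $\lo$ trail away from the exponentially hard \PHP\ sub-instance and thereby separates this system from the one using only $\Dtrv$ in propagation.
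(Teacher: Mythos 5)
Your proposal is correct and follows essentially the same route as the paper's proof: reduce to the $\NoCube$ case via \cref{obs:nocubeImplycube}, observe that $\Drrs$-preprocessing only deletes $w$, and then exhibit a constant number of constant-size trails in which the forced decision on $a$ propagates $\bar b$, after which the $\Drrs$-reduction of $v$ (since neither $t$ nor $y_1$ depends on $v$) makes the $t$-clauses conflict on $y_1$ before the \PHP\ variables are ever engaged. The only deviations are cosmetic bookkeeping (the paper's four trails learn $\bar a\vee\bar y_1$, $a\vee y_1$, $\bar a$, $\square$ in that order, and its trails also record the inert forced propagation of $s$), none of which affects correctness.
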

                
\begin{proof}
	By \cref{obs:nocubeImplycube}, it suffices to show polynomial size refutations in  the system $\Drrs + \QCDCLDep{\lo}{\Drrs}{\NoCube}$. 
	
	Any trail must start with a decision on $a$. A decision in either polarity propagates $\bar{b}$, and with $\Drrs$ used in propagation, further propagates $s$ since $s$ does not depend on $v$.  Next, the variable $p$ must be decided; any polarity propagates a $q$ literal. At this point $y_1$ must be decided. Since $t$ also does not depend on $v$, this decision in either polarity propagates  a $t$ literal and then a conflict. An example trail is as follows:
	$ \T = \mathbf{a}, \bar{b}, s; \mathbf{p}, \bar{q}; \mathbf{y_1} , t ,\square $. 
	The conflict reached is due to the negation of the complete tautology on $y_1$ and $t$. Thus in 4 such trails the empty clause can be learnt, completing the refutation.
	
	The refutation is described in detail as follows:
	We construct a polynomial time $\Drrs + \QCDCLDep{\lo}{\Drrs}{\NoCube}$ refutation of \PreRRSTrap. Since $\Drrs$ for the formula is $\{(u,b), (v,b), (p,q)\}$, preprocessing using $\Drrs$ reduces  the formula to 
	\[
	\begin{array}{lc}
		& \exists a \forall p \exists y_1, \cdots , y_{s_n}  \forall v \exists t  \exists x_1, \cdots , x_{s_n} \forall u \exists b \exists q \exists r \exists s\\
		& \PHP_n^{n+1}(x_1, \cdots, x_{s_n}) \\
		\textrm{for~~} i\in[s_n]: &
		\bar{y}_i  \vee x_i \vee u \vee b  \  , \  y_i \vee \bar{x}_i  \vee u  \vee b\\
		\textrm{for~~} i\in[s_n]: &
		y_i \vee v  \vee t \vee b \ , \   y_i  \vee v \vee \bar{t} \vee b \ , \ \bar{y}_i  \vee v \vee t \vee b  \ , \ \bar{y}_i \vee v \vee  \bar{t} \vee b  \\
		&  \bar{u} \vee \bar{b}  \ , \ v \vee \bar{b} \vee \bar{r} \ , \ \bar{v} \vee b \vee s \\
		& a \vee \bar{b} \ , \ \bar{a} \vee \bar{b} \ , \ p \vee q \ , \ \bar{p} \vee \bar{q} \\ 
	\end{array}
	\] 
	
	Now consider the following trail. Due to $\Drrs$ being used in propagation as well, the literal  $t$ will be propagated even before $v$ is decided, producing a conflict in the clauses involving $t$.
	
	$$T_1 = (\mathbf{a}, \bar{b}, s; \mathbf{p}, \bar{q}; \mathbf{y_1},t,\square ) $$
	where the antecedent clauses are as follows:
	\begin{eqnarray*}
		\ante(\bar{b}) &=& \bar{a} \vee \bar{b} \\
		\ante(s) &=& \bar{v} \vee b \vee s  \\
		\ante(\bar{q}) &=&  \bar{p} \vee \bar{q}  \\
		\ante(t) &=& \bar{y}_1 \vee v \vee t \vee b \\
		\ante(\square) &=& \bar{y}_1 \vee v \vee \bar{t} \vee b 
	\end{eqnarray*}
	From this trail we learn the clause $L_1 = \bar{a} \vee \bar{y}_1$. 
	
	Next construct the trail:
	$$T_2 = (\mathbf{\bar{a}}, \bar{b}, s; \mathbf{p}, \bar{q}; \mathbf{\bar{y}_1},t,\square ) $$
	where the antecedent clauses are as follows:
	\begin{eqnarray*}
		\ante(\bar{b}) &=& a \vee \bar{b} \\
		\ante(s) &=& \bar{v} \vee b \vee s  \\
		\ante(\bar{q}) &=&  \bar{p} \vee \bar{q}  \\
		\ante(t) &=& y_1 \vee v \vee t \vee b \\
		\ante(\square) &=& y_1 \vee v \vee \bar{t} \vee b 
	\end{eqnarray*}
	From this trail, we learn the clause $L_2 = a \vee y_1$.  
	
	Now consider the following third trail:
	$$T_3 = (\mathbf{a}, \bar{b},\bar{y}_1,t,\square ) $$
	with antecedent clauses as follows:
	\begin{eqnarray*}
		\ante(\bar{b}) &=& \bar{a} \vee \bar{b} \\
		\ante(\bar{y}_1) &=& L_1 =  \bar{a} \vee \bar{y}_1  \\
		\ante(t) &=& y_1 \vee v \vee t \vee b \\
		\ante(\square) &=& y_1 \vee v \vee \bar{t} \vee b 
	\end{eqnarray*}
	From here we learn the unit clause $L_3 = \bar{a}$.
	
	Finally we have the fourth trail which is fully propagated and has no decisions.	
	$$T_4 = (\bar{a}, \bar{b},y_1,t,\square ) $$
	where,
	\begin{eqnarray*}
		\ante(\bar{a}) &=& \bar{a} \\
		\ante(\bar{b}) &=& \bar{a} \vee \bar{b} \\
		\ante(y_1) &=& L_1 =  a \vee y_1  \\
		\ante(t) &=& y_1 \vee v \vee t \vee b \\
		\ante(\square) &=& y_1 \vee v \vee \bar{t} \vee b 
	\end{eqnarray*}
	From this trail we learn the empty clause $(\square)$, thus completing the refutation.
\end{proof}

\begin{restatable}{lemma}{restatePreRRSTraphard}
	\label{lem:PreRRSTrap-hardDrrsQCDCL}
For $\CubePolicy \in \{\NoCube, \CubeLD, \CubeD{\Drrs}\}$ the \PreRRSTrap\ formulas require exponential size refutations in $\Drrs + \QCDCLDep{\lo}{\Dtrv}{\CubePolicy}$  
\end{restatable}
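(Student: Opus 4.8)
\emph{Reductions.} The matrix of $\PreRRSTrap_n$ is unsatisfiable, since it contains $\PHP_n^{n+1}$, so no satisfying trail can ever arise and, by \cref{obs:unsatnocubeEqcube}, every refutation using $\CubeLD$ or $\CubeD{\Drrs}$ is already a refutation in the $\NoCube$ system; hence it suffices to treat $\CubePolicy=\NoCube$. By \cref{clm:PreRRSTrapDrrs} and the remark following it, preprocessing by $\Drrs$ merely deletes $w$ from the matrix, i.e.\ $\redPreRRSTrap=\PreRRSTrap|_{w=0}$. The whole task therefore reduces to proving that $\PreRRSTrap|_{w=0}$ requires exponential-size refutations in $\QCDCLDep{\lo}{\Dtrv}{\NoCube}$.

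\emph{Why the trivial scheme blocks the shortcut.} Every trail is forced by $\lo$ to decide $a$ first; either polarity propagates $\bar b$, after which a decision on $p$ propagates a $q$-literal. The crux is that under $\Dtrv$ the ``$t$-shortcut'' exploited in \cref{lem:PreRRSTrap-easyDrrsQCDCLDrrs} is unavailable: in $\bar v\vee b\vee s$ and in the clauses $y_i\vee v\vee t\vee b$ (and their variants), once $b$ is set the literal on $v$ is blocked by the existential $s$, resp.\ $t$, lying to its right, so neither $s$ nor $t$ can be propagated, and no conflict on $t$ is reachable, until $v$ itself is decided. By contrast, in the equality clauses $\bar y_i\vee x_i\vee u\vee b$ and $y_i\vee\bar x_i\vee u\vee b$, setting $b$ makes $u$ unblocked, since its only right-neighbour existential $b$ has vanished, so $u$ is reduced and the clauses collapse to $x_i\leftrightarrow y_i$; hence deciding any $y_i$ immediately propagates $x_i=y_i$.

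\emph{Forcing a $\PHP$-conflict and extracting a resolution refutation.} Since $v$ lies at a later prefix level than the $y_i$, the $\lo$ policy cannot decide $v$ before all $y_i$ are assigned; but assigning the $y_i$ propagates the matching $x_i$ and drives propagation inside $\PHP_n^{n+1}$, which, being unsatisfiable, produces a conflict among the $\PHP$ clauses while the trail is still at the $y$-level, before $v$ is ever decided. Thus every trail ends in such a $\PHP$-conflict, reached purely through the equality clauses and the $\PHP$ matrix. I would then apply the restriction that fixes $a$ to the value forcing $\bar b$, deletes the then-satisfied or blocked variables $p,q,r,s,t,u,v,b$, and identifies each $x_i$ with $y_i$ via the learned equalities: under it every learned clause projects to a $\PHP$-clause and every conflict-analysis step projects to a propositional resolution step, so the sequence of learned clauses yields a resolution refutation of $\PHP_n^{n+1}$ of size at most that of the $\QCDCL$ refutation. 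As $\PHP_n^{n+1}$ requires exponential-size resolution refutations, the claimed bound follows.

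\emph{Main obstacle.} The delicate step is the last one: verifying that, after the restriction, the $\LDQDRes$ conflict-analysis steps, whose surviving pivots are the $x/y$ existentials, descend to genuine propositional resolution steps on $\PHP$-clauses, with the merged universal literals and the auxiliary literals on $u,b$ simply disappearing, so that one obtains an honest resolution refutation of $\PHP$ rather than merely a list of $\PHP$-consequences. This is exactly the bookkeeping performed in the lower-bound arguments for the $\Trapdoor$ and $\DepTrap$ formulas in \cite{BeyersdorffBohm-LMCS23,ChoudhuryMahajan-JAR24}, and the $\Dtrv$-blocking analysis of the second paragraph is what guarantees that no trail can evade this structure by reaching an early conflict on $t$.
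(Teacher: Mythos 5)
Your proposal is correct and follows essentially the same route as the paper's proof: reduce to $\NoCube$ via \cref{obs:unsatnocubeEqcube}, observe that every level-ordered trail is forced to decide $a$, propagate $\bar b$, decide $p$, propagate a $q$-literal, after which the surviving matrix is effectively the $\Trapdoor$ matrix and the $y$-decisions propagate matching $x$-literals into a $\PHP$ conflict, so that the refutation reduces to a resolution refutation of $\PHP$. Your additional analysis of why $\Dtrv$ blocks the $t$-shortcut, and your explicit restriction argument for extracting the $\PHP$ resolution refutation, just spell out details the paper delegates to the $\Trapdoor$ lower-bound argument of \cite{BeyersdorffBohm-LMCS23}.
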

	
\begin{proof}
	Since the \PreRRSTrap\ formulas have an unsatisfiable matrix, therefore, by \cref{obs:unsatnocubeEqcube} it suffices to show hardness in  $\Drrs + \QCDCLDep{\lo}{\Dtrv}{\NoCube}$
	
	Observe that any trail must start with a decision on $a$, propagating a $b$ literal, followed by a decision on $p$, propagating a $q$ literal:
	$ \T = \mathbf{a/\bar{a}}, \bar{b}; \mathbf{p/\bar{p}}, \bar{q}/q $. \\
	At this point in the trail, the formula matrix has reduced to 
	\[
	\begin{array}{lc}
		& \PHP_n^{n+1}(x_1, \cdots, x_{s_n}) \\
		\textrm{for~~} i\in[s_n]: &
		(\bar{y}_i  \vee x_i \vee u)   \  , \  (y_i \vee \bar{x}_i  \vee u)  \\
		\textrm{for~~} i\in[s_n]: &
		(y_i  \vee v  \vee t)  \ , \   (y_i  \vee v \vee \bar{t})  \ , \ (\bar{y}_i  \vee v \vee t)   \ , \ (\bar{y}_i \vee v \vee  \bar{t})   \\
		&  (\bar{v}  \vee s) \\
	\end{array}
	\] 
	This is effectively the matrix of the \Trapdoor\ formulas from \cite{BeyersdorffBohm-LMCS23}, with one extra clause $\bar{v}\vee s$. After this point, all decisions are made on $y$ variables propagating a corresponding $x$ variable. By the time all $y$ variables are decided, a conflict in $\PHP$ on the $x$ variables is achieved. Thus, exactly like the \Trapdoor\ formulas, refuting \PreRRSTrap\ boils down to refuting \PHP, which is known to require exponential size.
\end{proof}

\paragraph*{The \StdDepTrap\ formulas.}	
	The third  new formula \StdDepTrap\  we introduce shows the advantage of using $\Dstd$ in propagation and learning. The goal is to create clauses that can be learnt easily with $\Dstd$, but are hard to learn without it. These learned clauses enable a quick refutation in \QCDCL\ with $\Dstd$, but without them, one is stuck refuting something hard. 
%

		\begin{restatable}{formula}{restateStdDepTrap}	
			\label{formula: StdDepTrap}
			The $\StdDepTrap_n$  formula has the prefix  \\ 
			$\exists b ~ \forall w_1 ~ \exists z_1, \cdots , z_{s_n} ~ \forall w_2 ~ \exists a ~ \exists d ~ \exists c ~ \forall u ~ \exists x ~ \exists y ~ \exists p ~ \exists e_1 ~ \exists e_2 $, and the matrix is as given below.
			\[
			\begin{array}{lcc}
				& \bar{b} \vee \PHP_n^{n+1}(z_1, \cdots, z_{s_n}) & \\
				&  (y \vee p) \ , \  (y \vee \bar{p}) ,
				& (w_1 \vee e_1)  \ , \ (w_2 \vee e_2) \\
				& (b  \vee y) \ , \ (a \vee \bar{y})  \ , \  (\bar{a} \vee x) \ , \ (\bar{c} \vee u \vee \bar{x}) ,
				& (d \vee c \vee \bar{y}) \ , \ (\bar{d} \vee c \vee \bar{y})
			\end{array}
			\]
		\end{restatable}
	
		The variables $w_1,w_2,u$ are essentially ``separators'', putting existential variables into different levels. The variables $e_1,e_2,x$ are blockers, ensuring that the separators do not get reduced too early in the trail.
\begin{restatable}{proposition}{restateclaimStdDepTrapDep}
\label{clm:StdDepTrapDstd}                  
$\Dstd(\StdDepTrap) = \{(w_1,e_1), (w_2,e_2), (u,x)\}$.
\end{restatable}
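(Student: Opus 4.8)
The plan is to treat the three universal variables $w_1$, $w_2$, $u$ one at a time, exactly as in the proof of \cref{clm:PreRRSTrapDrrs}, and in each case determine precisely which existential variables can depend on them under $\Dstd$. Throughout I would use the characterisation from \cref{def:std}: for $v$ existential and quantified after a universal $w$, the pair $(w,v)$ lies in $\Dstd(\StdDepTrap)$ exactly when there is a sequence of clauses $C_1,\dots,C_k$ with $w\in\var(C_1)$, $v\in\var(C_k)$, and consecutive clauses $C_i,C_{i+1}$ sharing an existential variable quantified after $w$. So for each universal $w$ I start from the (few) clauses containing $w$ and chase the admissible resolution-path links, tracking which clauses are reachable and hence which existential variables quantified after $w$ can appear as an endpoint.

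The cases $w_1$ and $w_2$ are immediate. The variable $w_1$ occurs in only one clause, $(w_1\vee e_1)$, whose sole existential variable is $e_1$; moreover $e_1$ occurs in no other clause. Hence the only clause reachable from $w_1$ is $(w_1\vee e_1)$ itself, so the only existential quantified after $w_1$ that can serve as an endpoint is $e_1$, giving $(w_1,e_1)\in\Dstd$ and nothing more. The argument for $w_2$ is identical and yields exactly $(w_2,e_2)$.

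The variable $u$ is the one requiring care, and is the main obstacle. It appears only in $(\bar{c}\vee u\vee\bar{x})$. Among this clause's variables, $c$ is quantified before $u$, so the only existential quantified after $u$ present here --- and hence the only admissible linking variable --- is $x$, which immediately gives $(u,x)\in\Dstd$. I then have to rule out any longer path creating a further dependency. Since every link must use an existential quantified after $u$, any continuation must pass to another clause containing $x$; the clauses containing $x$ are $(\bar{a}\vee x)$ and $(\bar{c}\vee u\vee\bar{x})$, and in the former $a$ is again quantified before $u$, so $x$ remains the only existential-after-$u$ available and the path cannot branch to a new variable. Thus the clauses reachable from $u$ are confined to $\{(\bar{c}\vee u\vee\bar{x}),(\bar{a}\vee x)\}$, whose only existential quantified after $u$ is $x$ itself, and so $(u,x)$ is the sole $\Dstd$-dependency of $u$.

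Combining the three cases yields $\Dstd(\StdDepTrap)=\{(w_1,e_1),(w_2,e_2),(u,x)\}$, as claimed. The only genuinely delicate point is the exhaustiveness of the reachability analysis for $u$: I must confirm from the full clause list that the quantifier placement of $a$ and $c$ (both before $u$) really does confine the resolution paths, so that none of the other existentials quantified after $u$ --- namely $y$, $p$, $e_1$, $e_2$ --- can be reached, and that no omitted clause reintroduces a link through $x$. This bookkeeping over the clause list is the bulk of the verification, but it is routine once the reachable clause set is pinned down.
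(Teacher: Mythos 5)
Your proof is correct and follows essentially the same route as the paper's: a case analysis over the three universal variables $w_1,w_2,u$, tracing which clauses are reachable via existential links quantified after each universal and reading off the possible endpoints. Your treatment of $u$ is in fact slightly more careful than the paper's, which asserts that $x$ occurs only in $(\bar{c}\vee u\vee\bar{x})$ and overlooks the clause $(\bar{a}\vee x)$; you correctly account for that second clause and observe that, since $a$ is quantified before $u$, it cannot extend the path to any new dependency, so the conclusion $\Dstd(\StdDepTrap)=\{(w_1,e_1),(w_2,e_2),(u,x)\}$ stands.
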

\begin{proof}
	We want to show that $\Dstd(\StdDepTrap) = \{(w_1,e_1), (w_2,e_2), (u,x)\}$. For each universal variable (there are three,  $w_1,w_2,u$), we consider its dependencies. 
	
	The variable $w_1$ appears in exactly one clause, and that clause has
	just one other variable $e_1$, This variable $e_1$ is quantified to
	the right ot $w_1$ and appears in no other clause. So $e_1$, and no
	other variable, depends on $w_1$ in $\Dstd$. Similarly, $e_2$, and no
	other variable, depends on $w_2$.
	
	The variable $u$ appears only in the clause $(\bar{c} \vee u \vee \bar{x})$. Since $c$ is left of $u$ in the quantifier prefix, it does not depend on $u$ or provide a path for  other variables to depend either. Since $x$ is quantified after $u$, it does depend on $u$. However, $x$ appears only in this clause and no other clause, so it cannot provide further paths either. Therefore it is the only variable depending on $u$.

	Thus, $\Dstd(\StdDepTrap) = \{(w_1,e_1), (w_2,e_2), (u,x)\}$.
\end{proof}


When $\Dstd$ is used in propagation, we show that these formulas have short refutations.

\begin{restatable}{lemma}{restateStdDepTrapeasy}
	\label{lem:StdDepTrap-easyQCDCLDstd}
	For $\CubePolicy \in \{\NoCube, \CubeLD, \CubeD{\Drrs}\}$ the \StdDepTrap\ formulas have polynomial size refutations in  $ \QCDCLDep{\lo}{\Dstd}{\CubePolicy}$ 
\end{restatable}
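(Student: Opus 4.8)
The plan is to produce an explicit polynomial-size refutation, following the same template as the $\PreRRSTrap$ refutation of \cref{lem:PreRRSTrap-easyDrrsQCDCLDrrs}, with the role played there by the $\Drrs$-reduction of the separator $v$ now played by the $\Dstd$-reduction of the separator $u$. First I would dispose of the cube-learning policies: a derivation in $\QCDCLDep{\lo}{\Dstd}{\NoCube}$ learns no cubes, so the cube-learning machinery simply goes unused and, exactly as in \cref{obs:nocubeImplycube}, such a derivation is also valid under $\CubeLD$ and under $\CubeD{\Drrs}$. Hence it suffices to construct a clause-only refutation in $\QCDCLDep{\lo}{\Dstd}{\NoCube}$.

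The heart of the construction is the observation recorded in \cref{clm:StdDepTrapDstd}, namely that under $\Dstd$ the separator $u$ is blocked only by $x$ and by no other existential quantified to its right, in particular not by $y$, $a$, $d$, or $c$. Consequently, as soon as $x$ has been assigned and $\bar x$ is stripped from $(\bar c\vee u\vee\bar x)$, the literal $u$ becomes reducible, so the small ``gadget'' on $a,d,c,x,y,p$ can be driven to a conflict purely by propagation together with $\Dstd$-reductions, while remaining in the top prefix levels and never branching on the $\PHP$-variables $z_1,\dots,z_{s_n}$. The mechanism I would exploit is the chain in which $(y\vee p)$ and $(y\vee\bar p)$ force $y$, which propagates through $(a\vee\bar y)$ and $(\bar a\vee x)$ to fix $x$, after which the $\Dstd$-reduction of $u$ in $(\bar c\vee u\vee\bar x)$ fixes $c$ and the clauses $(d\vee c\vee\bar y)$ and $(\bar d\vee c\vee\bar y)$ clash. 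I would write down the (constantly many) trails realising this clash, perform conflict analysis on each to extract a short learnt clause, and argue that a constant number of such trails produces $\square$; since $\PHP$ is never engaged, the whole refutation has size polynomial in $n$.

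The main obstacle is the interaction with the $\lo$ decision policy: I must guarantee that each gadget conflict is reached \emph{before} $\lo$ is compelled to branch on the $z$-level, where a descent into $\PHP$ would force exponential size. This is exactly the purpose of the separators $w_1,w_2,u$ (which fix the level structure) and the blockers $e_1,e_2,x$ (which keep the separators from being reduced prematurely), and the delicate part of the proof will be to verify, trail by trail, that every propagation needed to reach the gadget conflict is available at the moment it is required. In particular I would check that the $\Dstd$-reduction of $u$ during conflict analysis, carried out once its blocker $x$ has been resolved away -- a step that $\Dtrv$ cannot take, since then $y$ still blocks $u$ -- produces precisely the compact clauses that drive the refutation, and that \cref{clm:StdDepTrapDstd} rules out any spurious early reduction of $w_1$ or $w_2$ that could derail the intended trails.
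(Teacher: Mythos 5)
Your high-level plan coincides with the paper's: reduce to $\QCDCLDep{\lo}{\Dstd}{\NoCube}$ via \cref{obs:nocubeImplycube}, reach a conflict in the small gadget by propagation alone so that the $\PHP$ block is never engaged, and exploit the fact (from \cref{clm:StdDepTrapDstd}) that $(u,y)\notin\Dstd$ so that $u$ can be reduced during conflict analysis once its blocker $x$ has been resolved away --- a step $\Dtrv$ cannot take. That last observation is indeed the crux, and you identify it correctly: it is what turns the learnt clause $u\vee\bar a\vee\bar y$ into $\bar a\vee\bar y$ and ultimately yields the unit clause $\bar y$, after which a second, fully propagated trail through $(y\vee p)$ and $(y\vee\bar p)$ learns $\square$. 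The paper needs exactly two trails.

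However, the concrete propagation chain you describe would fail at its entry point. You claim that $(y\vee p)$ and $(y\vee\bar p)$ ``force $y$''; they do not. Unit propagation acts on a single clause that has become unit under the current assignment plus reductions: $(y\vee p)$ is unit only once $p$ is assigned false, $(y\vee\bar p)$ only once $p$ is assigned true, and $p$ is quantified \emph{after} $y$, so under $\lo$ it can be neither decided nor otherwise assigned before $y$. In the paper's refutation the trail is launched instead by the mandatory first decision $\bar b$, which makes $(b\vee y)$ unit and propagates $y$; from there $a$, $x$, $\bar c$, $d$ and the conflict follow by propagation. The two $p$-clauses play the opposite role from the one you assign them: they supply the conflict in the \emph{second} trail, after the unit $\bar y$ has been learnt and propagated. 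A second, smaller inaccuracy: the gadget variables $a,d,c,u,x,y,p$ are not ``in the top prefix levels'' --- they sit below the entire $z$-block. The reason the $z$-variables are never decided is not their position but the fact that propagation, unlike decision, is not level-ordered, so the single decision $\bar b$ lets the trail leapfrog the $z$-level entirely. With the entry point corrected to the decision $\bar b$ and the clause $(b\vee y)$, your argument becomes the paper's.
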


\begin{proof}
	By \cref{obs:nocubeImplycube}, it suffices to show polynomial size refutations in the system $ \QCDCLDep{\lo}{\Dstd}{\NoCube}$. 
	
	For this purpose, consider the trail
	$ \T_1 =  \mathbf{\bar{b}}, y , a , x , \bar{c} , d , \square $.  \\
	Since $(u,y) \not \in \Dstd(\StdDepTrap)$,  the learnable sequence for this trail is 
	$$L_{\T_1} = \{ c \vee \bar{y}, u \vee \bar{x} \vee \bar{y}, \bar{a} \vee \bar{y}, \bar{y}, b \}.$$ 
	We choose to learn $\bar{y}$, and  then proceed to the next trail 
	$\T_2=  \bar{y} , p , \square $. 
	The learnable clauses for this trail are 
	$$L_{\T_2} = \{ y \vee \bar{p}, y, \square\}  .$$
	Thus the empty clause $\square$ is learnt, completing the refutation.
\end{proof}

However, without  $\Dstd$, the propagations force refuting the \PHP\ clauses, which is hard. 
 	
\begin{restatable}{lemma}{restateStdDepTraphard}		
	\label{lem:StdDepTrap-hardQCDCL}
	For $\CubePolicy \in \{\NoCube, \CubeLD, \CubeD{\Dtrv}\}$  the \StdDepTrap\ formulas require exponential size refutations  in  $\QCDCLDep{\lo}{\Dtrv}{\CubePolicy}$  
\end{restatable}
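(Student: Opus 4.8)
The plan is to first reduce to the cube-free setting and then to show that, without $\Dstd$, every refutation is driven into the embedded $\PHP$ instance.

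First I would argue that cube learning is vacuous here, so that it suffices to prove the bound for $\CubePolicy=\NoCube$. Unlike the \PreRRSTrap\ formulas, the matrix of \StdDepTrap\ is satisfiable (e.g.\ by $b=0$, $y=a=x=c=u=1$), so \cref{obs:unsatnocubeEqcube} does not apply directly. Instead I would show that under the $\lo$ policy no trail can ever reach a satisfying assignment. A first decision setting $b$ true turns $\bar b\vee\PHP$ into the unsatisfiable $\PHP$, so every completion hits a clause conflict. A first decision setting $b$ false unit-propagates $y$, then $a$, then $x$; the clause $\bar c\vee u\vee\bar x$, restricted by $x=1$ and then universally reduced on $u$ (valid already for $\Dtrv$ once $x$ is set), forces $\bar c$, and this produces a conflict in the $d$-gadget before $u$ could ever be set to $1$. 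Since no satisfying trail exists, no cube is ever learnable, and by induction $\mathcal{T}_i=\emptyset$ throughout; thus every derivation is in fact a $\QCDCLDep{\lo}{\Dtrv}{\NoCube}$ derivation, and I only need the lower bound there.

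Next I would carry out the trap argument in the spirit of the \Trapdoor\ and \DepTrap\ lower bounds. Every trail begins with a decision on $b$, the unique level-$1$ variable. If $b$ is decided true, level order forbids advancing past level $3$ until all $z$-variables are assigned, so the trail is driven into the $\PHP$ instance on the $z$-variables and ends in a $\PHP$-conflict whose learned clause is a $\PHP$-resolvent. If $b$ is decided false, the forced cascade ends in the $d$-gadget conflict; the crucial point is that the genuinely useful unit $\bar y$ cannot be learned, because producing it would require reducing $u$ out of a clause that still contains $\bar y$, and $(u,y)\in\Dtrv$ blocks exactly this reduction --- precisely the reduction that $\Dstd$ enables (cf.\ \cref{lem:StdDepTrap-easyQCDCLDstd}). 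I would package this into an invariant asserting that every clause learnable from a $z$-free trail either still mentions $u$ or is one of the weak clauses forcing $b$ true, and that none of these, once added, removes the obligation to assign all of $z$. Consequently the only route to $\square$ passes through clauses over the $z$-variables obtained by $\PHP$-resolution, so any refutation of \StdDepTrap\ contains a resolution refutation of $\PHP_n^{n+1}$ up to polynomial overhead; the exponential resolution hardness of $\PHP$ then yields the bound.

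The main obstacle is exactly this last invariant: ruling out that the learned clauses of the $b$-false branch ever combine into a short, $z$-free refutation. In particular one must show that no intermediate clause derivable from the $d$-gadget (such as $c\vee\bar y$) can be parlayed, on a later trail, into the unit $\bar y$ --- the step that would bypass $\PHP$ entirely. The delicate ingredient is the natural-trail propagation order: I would need to establish that along every $\lo$-trail with $b$ false the variable $c$ is fixed only through the universal reduction of $u$ in $\bar c\vee u\vee\bar x$, so that $u$ re-enters the conflict analysis together with $\bar y$ and every resulting learnable clause stays $u$-polluted. Making this order analysis airtight, and verifying it is unaffected by the blockers $e_1,e_2,x$ and by any previously learned clauses, is where the real work lies; once it is in place, the reduction to $\PHP$ hardness is routine.
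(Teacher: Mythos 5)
Your overall plan coincides with the paper's: first argue that no $\lo$-trail of $\StdDepTrap$ can end in a satisfying assignment, so cube learning is vacuous and the bound reduces to $\CubePolicy=\NoCube$ (this part is correct and matches the paper), and then run a trap argument in which deciding $b$ true forces a $\PHP$ conflict on the $z$-variables while deciding $b$ false triggers a forced cascade whose conflict analysis supposedly cannot produce the unit $\bar y$. The gap is exactly where you say it is, and it is fatal to the invariant you propose. The clause $c\vee\bar y$ is learnable already from the first $\bar b$-trail (resolve $\ante(\square)=\bar d\vee c\vee\bar y$ with $\ante(d)=d\vee c\vee\bar y$); it contains neither $u$ nor $b$, so it violates your invariant outright. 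Worse, once $c\vee\bar y$ is in the clause database, the next $\bar b$-trail propagates $y$, then $a$ and $c$ (the latter \emph{positively}, from the learnt clause, instead of $\bar c$ by universal reduction), then $x$, and conflicts on $\bar c\vee u\vee\bar x$. Reverse resolution now gives $\bar c\vee u\vee\bar x \rightarrow \bar a\vee\bar c\vee u \rightarrow \bar a\vee\bar c$ (the literal $u$ reduces here because only $a$ and $c$, both quantified before $u$, remain) $\rightarrow \bar a\vee\bar y \rightarrow \bar y$. So $\bar y$ \emph{is} learnable on the second trail, and a third trail using $\bar y$, $y\vee p$, $y\vee\bar p$ learns $\square$. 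The $\PHP$ bypass you feared is realised in three constant-size trails; the ``order analysis'' you defer cannot be made airtight as stated.

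You should also be aware that the paper's own proof does not close this gap: it lists $c\vee\bar y$ among the learnable clauses of the first trail and then asserts that learning any clause other than $b$ ``does not affect the decisions in the trail at all''. That is true of the decisions but not of the propagations, and it is precisely the changed propagation of $c$ that relocates the conflict to $\bar c\vee u\vee\bar x$ and lets $u$ be reduced away before $\bar y$ enters the derivation. So your instinct to demand a proof of the invariant was right; as matters stand, neither your argument nor the paper's establishes the claimed lower bound, and either the learnable set must be restricted (e.g.\ to asserting clauses) or the $d$-gadget must be redesigned so that it cannot emit a $u$-free clause containing $\bar y$.
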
	

\begin{proof}
	We first show that the the \StdDepTrap\ formulas require exponential size  without cube learning i.e. in $\QCDCLDep{\lo}{\Dtrv}{\NoCube}$   and then extend the argument for the case of $\QCDCLDep{\lo}{\Dtrv}{\CubeLD}$ and $\QCDCLDep{\lo}{\Dtrv}{\CubeD{\Dtrv}}$
	
	Every $\QCDCLDep{\lo}{\Dtrv}{\NoCube}$ trail must start with a decision on the variable $b$, unless and until a literal on $b$ is learnt; thenceforth a trail must begin by propagating that literal.
	
	Suppose that the $\QCDCLDep{\lo}{\Dtrv}{\NoCube}$ trail starts with a decision $b$. In such a case there are no propagations possible. Next a decision must be made on $w_1$, which may or may not propagate $e_1$. At this point, no other propagations are possible, and decisions must be made on all the $z$ variables. Since the $z$ variables are only involved in the $\PHP$ clauses, this leads to a conflict in the $\PHP$ clauses.

	Suppose that the$\QCDCLDep{\lo}{\Dtrv}{\NoCube}$  trail starts with the decision $\bar{b}$. Then there is a unique forced trail leading to a conflict, namely,
	$$ \T_1 =  \mathbf{\bar{b}}, y , a , x , \bar{c} , d , \square .$$
	The learnable sequence for this trail under regular reduction is 
	$$L_{\T_1} = \{ c \vee \bar{y}, u \vee \bar{x} \vee \bar{y}, \bar{a} \vee u \vee \bar{y}, u \vee \bar{y}, b \}.$$
	It is easy to see that learning any clause other than $b$ does not affect 
	the decisions in the trail at all.  Thus, with trails of this type, in a few stages the clause $b$ will be learnt inevitably. 
	
	To summarise, any $\QCDCLDep{\lo}{\Dtrv}{\NoCube}$ trail must start with a decision on $b$. If the decision is $b$, it leads to a conflict in the $\PHP$ clauses. If the  trails avoid decision $b$ and start with $\bar{b}$, we will eventually be forced to learn the unit clause $b$, leading to trails starting with propagating $b$, and again reaching a conflict in the $\PHP$ clauses. Therefore, any $\QCDCLDep{\lo}{\Dtrv}{\NoCube}$\ refutation of \StdDepTrap\ reduces to refuting \PHP, thus requiring exponential size.
	
	In the case of $\QCDCLDep{\lo}{\Dtrv}{\CubeLD}$ and $\QCDCLDep{\lo}{\Dtrv}{\CubeD{\Dtrv}}$, since the $\PHP$ clauses are unsatisfiable, $\bar{b}$ must be in any satisfying assignment of the matrix of \StdDepTrap. 
	Therefore for cube learning to ever play a role, the $\QCDCLDep{\lo}{\Dtrv}{\CubeLD}$ or $\QCDCLDep{\lo}{\Dtrv}{\CubeD{\Dtrv}}$ trail must start with deciding $b$ as $\bar{b}$.  However, as discussed above, trails starting with $\bar{b}$ are forced and rapidly hit a conflict; they cannot lead on to a satisfying assignment. Thus, even though the matrix of $\StdDepTrap$ is satisfiable (e.g.\ by the literals $\bar{b},y,a,x,c,u,w_1,w_2$), such assignments can never be discovered through \QCDCL\ trails.  Hence, the $\QCDCLDep{\lo}{\Dtrv}{\NoCube}$ hardness lifts as cube learning is useless, and the \StdDepTrap\ formulas continue to be hard for $\QCDCLDep{\lo}{\Dtrv}{\CubeLD}$ and $\QCDCLDep{\lo}{\Dtrv}{\CubeD{\Dtrv}}$.
\end{proof}

		\subsection{Strength Relations between the Proof Systems}
		\label{subsec:results}
                The hardness of formulas in \QCDCL\ systems with or without cube-learning and dependency schemes are collected in \cref{tab:for-sideways}; the table includes known bounds as well as those established in \cref{subsec:formulae}.
			 \begin{sidewaystable}
			\begin{adjustbox}{width=\textwidth}
				\begin{tabular}{| l | l | l | l | l |}
					\hline
					Formula 	&		Matrix         & 		Dependencies        & 			Weakest System where  Easy       & Strongest System where   Hard    \\
					\hline
					\Equality\  (\cite{BeyersdorffBH-LMCS19}-Def. 3.1)     
					& Sat                                                                            
					& \begin{tabular}[c]{@{}l@{}} $\Drrs=\emptyset$ \\ $\Dstd=\Dtrv$ \end{tabular}
					& \begin{tabular}[c]{@{}l@{}} $\QCDCLDep{\lo}{\Dtrv}{\CubeLD}$ \cite{BohmPB-AI24}-Prop 5.2 \\ $\QCDCLDep{\lo}{\Dtrv}{\CubeD{\Dtrv}}$ \cite{BohmPB-AI24}-Prop 5.2\\ 
					\QCDCL\ with $\Drrs$ \cite{ChoudhuryMahajan-JAR24}-Lemma 1\\  $\QCDCLDep{\lo}{\Dstd}{\CubeLD}$  \\  $\QCDCLDep{\lo}{\Dtrv}{\CubeD{\Dstd}}$ \end{tabular}
					& \begin{tabular}[c]{@{}l@{}} $\QCDCLDep{\lo}{\Dtrv}{\NoCube}$ \cite{BeyersdorffBohm-LMCS23}-Cor 5.8\\
						$\QCDCLDep{\lo}{\Dstd}{\NoCube}$  \end{tabular} \\
					\hline
					\TwinEquality\ (\cite{BohmPB-AI24}-Def 6.1)
					& Sat                                                                            
					& \begin{tabular}[c]{@{}l@{}}$\Drrs=\emptyset$ \\ $\Dstd=\Dtrv$\end{tabular}
					& \begin{tabular}[c]{@{}l@{}} \QCDCL\ with $\Drrs$ \cite{ChoudhuryMahajan-JAR24}-Sec 4.9 \end{tabular} 
					& \begin{tabular}[c]{@{}l@{}} $\QCDCLDep{\lo}{\Dtrv}{\CubeLD}$ \cite{BohmPB-AI24}-Prop 6.3 \end{tabular} \\
					\hline
					\Trapdoor\ (\cite{BeyersdorffBohm-LMCS23}-Def 4.5)       
					& Unsat                                                                            
					& \begin{tabular}[c]{@{}l@{}}$\Drrs=\emptyset$ \\ $\Dstd=\{(w,t)\}$\end{tabular}
					&  \begin{tabular}[c]{@{}l@{}} \QCDCL\ with $\Drrs$ \cite{ChoudhuryMahajan-JAR24}-Lemma 2 \end{tabular} 
					& \begin{tabular}[c]{@{}l@{}}$\QCDCLDep{\lo}{\Dtrv}{\CubeLD}$ (Obs.~\ref{obs:unsatnocubeEqcube},)  \\$\QCDCLDep{\lo}{\Dtrv}{\CubeD{\Dtrv}}$ (and \cite{BeyersdorffBohm-LMCS23}-Prop 4.6) \\ $\QCDCLDep{\lo}{\Dstd}{\CubeLD}$  \\ $\QCDCLDep{\lo}{\Dstd}{\CubeD{\Dstd}}$ \end{tabular} \\
					\hline
					\DepTrap\ (\cite{ChoudhuryMahajan-JAR24}-Sec 4.4)       
					& Unsat                                                                            
					& \begin{tabular}[c]{@{}l@{}}$\Drrs=\{(w,t)\}$ \cite{ChoudhuryMahajan-JAR24}\\ $\Dstd=\{(w,t)\} \cup \{(u_i,x_i)\}$\end{tabular}
					& \begin{tabular}[c]{@{}l@{}} $\QCDCLDep{\lo}{\Dtrv}{\NoCube}$  \cite{ChoudhuryMahajan-JAR24}-Lemma 3\\ $\QCDCLDep{\lo}{\Dstd}{\NoCube}$\end{tabular} 
					& \begin{tabular}[c]{@{}l@{}} \QCDCLcube\ with $\Drrs$ (\cite{ChoudhuryMahajan-JAR24}, Obs.~\ref{obs:unsatnocubeEqcube}) \end{tabular} \\
					\hline
					\TwoPHPandCT\ (\cite{ChoudhuryMahajan-JAR24}-Sec 4.5)      
					& Unsat                                                                           
					& $\Drrs = \emptyset$ \cite{ChoudhuryMahajan-JAR24}
					& \begin{tabular}[c]{@{}l@{}} $\Drrs + \QCDCLDep{\lo}{\Dtrv}{\NoCube}$  \cite{ChoudhuryMahajan-JAR24}-Lemma 5 \\$\QCDCLDep{\lo}{\Drrs}{\NoCube}$  \cite{ChoudhuryMahajan-JAR24} \end{tabular} 
					& \begin{tabular}[c]{@{}l@{}} $\QCDCLDep{\lo}{\Dtrv}{\CubeLD}$  (\cite{ChoudhuryMahajan-JAR24},Obs.~\ref{obs:unsatnocubeEqcube}) \\
						$\QCDCLDep{\lo}{\Dtrv}{\CubeD{\Dtrv}}$  (\cite{ChoudhuryMahajan-JAR24},Obs.~\ref{obs:unsatnocubeEqcube}) \\
					 $\QCDCLDep{\lo}{\Drrs}{\CubeLD}$  (\cite{ChoudhuryMahajan-JAR24},Obs.~\ref{obs:unsatnocubeEqcube}) \\
						 $\QCDCLDep{\lo}{\Drrs}{\CubeD{\Drrs}}$  (\cite{ChoudhuryMahajan-JAR24},Obs.~\ref{obs:unsatnocubeEqcube}) \end{tabular} \\
					\hline
					\PreDepTrap \ (\cite{ChoudhuryMahajan-JAR24}-Sec 4.7)     
					&  \begin{tabular}[c]{@{}l@{}}Sat\\ $\reduceDrrs$: Unsat \end{tabular}                                                                        
					& $\Drrs=\{(w,t)\}$ \cite{ChoudhuryMahajan-JAR24}
					& \begin{tabular}[c]{@{}l@{}} \QCDCL  \cite{ChoudhuryMahajan-JAR24} \\  \QCDCLDrrs \cite{ChoudhuryMahajan-JAR24} \end{tabular} 
					& \begin{tabular}[c]{@{}l@{}}\ \DrrsQCDCLcube (\cite{ChoudhuryMahajan-JAR24},Obs.~\ref{obs:unsatnocubeEqcube}) \\ \DrrsQCDCLcubeDrrs  (\cite{ChoudhuryMahajan-JAR24},Obs.~\ref{obs:unsatnocubeEqcube}) \end{tabular} \\
					\hline
					\PropDepTrap\ (\cite{ChoudhuryMahajan-JAR24}-Sec 4.8)       
					& Unsat                                                                           
					& $\Drrs=\{(w,t),(b_1,z_1), (b_2,z_2)\}$ \cite{ChoudhuryMahajan-JAR24}
					& \begin{tabular}[c]{@{}l@{}} $\QCDCLDep{\lo}{\Dtrv}{\NoCube}$   (\cite{ChoudhuryMahajan-JAR24}) \\ $\Drrs + \QCDCLDep{\lo}{\Dtrv}{\NoCube}$  (\cite{ChoudhuryMahajan-JAR24}) \end{tabular} 
					& \begin{tabular}[c]{@{}l@{}} $\QCDCLDep{\lo}{\Drrs}{\CubeLD}$  (\cite{ChoudhuryMahajan-JAR24}, Obs.~\ref{obs:unsatnocubeEqcube}) \\ 
					$\QCDCLDep{\lo}{\Drrs}{\CubeD{\Drrs}}$  (\cite{ChoudhuryMahajan-JAR24}, Obs.~\ref{obs:unsatnocubeEqcube}) \\ 
					$\Drrs + \QCDCLDep{\lo}{\Drrs}{\CubeLD}$  (\cite{ChoudhuryMahajan-JAR24}, Obs.~\ref{obs:unsatnocubeEqcube}) \\
					$\Drrs + \QCDCLDep{\lo}{\Drrs}{\CubeD{\Drrs}}$  (\cite{ChoudhuryMahajan-JAR24}, Obs.~\ref{obs:unsatnocubeEqcube})  \end{tabular} \\
					\hline
					*\DoubleLongEq      
					& Sat                                                                           
					& $\Drrs = \Dtrv$ (\cref{clm:DoubleLongEqDrrs})
					& \begin{tabular}[c]{@{}l@{}} $\QCDCLDep{\lo}{\Dtrv}{\CubeLD}$  (\cref{lem:DoubleLongEq-easyQCDCLcube} )\\  \QCDCLcube\ with $\Drrs$ (\cref{lem:DoubleLongEq-easyQCDCLcube}) \end{tabular} 
					& \begin{tabular}[c]{@{}l@{}} \QCDCL\ (\cref{lem:DoubleLongEq-hardQCDCL}) \\  \QCDCL\ with $\Drrs$ (\cref{lem:DoubleLongEq-hardQCDCL}) \end{tabular} \\
					\hline
					*\PreRRSTrap       
					& Unsat                                                                            
					&\begin{tabular}{c}$\Drrs=\{(u,b),(v,b), (p,q)\}$ \\
                                           (\cref{clm:PreRRSTrapDrrs})
					\end{tabular}
					& \begin{tabular}[c]{@{}l@{}} $\Drrs + \QCDCLDep{\lo}{\Drrs}{\NoCube}$ (\cref{lem:PreRRSTrap-easyDrrsQCDCLDrrs}) \end{tabular} 
					& \begin{tabular}[c]{@{}l@{}} $\Drrs + \QCDCLDep{\lo}{\Dtrv}{\CubeLD}$ (\cref{lem:PreRRSTrap-hardDrrsQCDCL}) \end{tabular} \\
					\hline
					*\StdDepTrap      
					& Sat                                                                            
					& \begin{tabular}{c}$\Dstd = \{(u,x),(w_1,e_1), (w_2,e_2)\}$ \\(\cref{clm:StdDepTrapDstd})
					\end{tabular}
					& \begin{tabular}[c]{@{}l@{}} $\QCDCLDep{\lo}{\Dstd}{\NoCube}$  (\cref{lem:StdDepTrap-easyQCDCLDstd}) \end{tabular} 
					& \begin{tabular}[c]{@{}l@{}}  $\QCDCLDep{\lo}{\Dtrv}{\CubeLD}$  (\cref{lem:StdDepTrap-hardQCDCL})  \end{tabular} \\
					\hline              
				\end{tabular} 
			\end{adjustbox}
			\caption{Formulas, their dependencies, and ease/hardness of refutations. \\
		"\QCDCL\ with $\Drrs$" includes  $\{\Drrs + \QCDCLDep{\lo}{\D}{\NoCube}\}$ where $\D \in \{\Dtrv,\Drrs\}$, as well as  $\QCDCLDep{\lo}{\Drrs}{\NoCube}$.
                \\
		"\QCDCLcube\ with $\Drrs$" includes 
                $\Drrs +\QCDCLDep{\lo}{\D}{\CubePolicy}$       as well as           $\QCDCLDep{\lo}{\Drrs}{\CubePolicy}$, \\ ~~~where
                $\D \in \{\Dtrv,\Drrs\}$ and $\CubePolicy\in \{\CubeLD,\CubeD{\Drrs}\}$.
}
			\label{tab:for-sideways}
			\end{sidewaystable}
                         %

Using these bounds, we now establish various relations between the newly introduced $\lo$-based  proof systems and also between them and other \QCDCL-based proof systems. \cref{fig:all-figures}  summarises these and earlier known relations in a visually clear way.
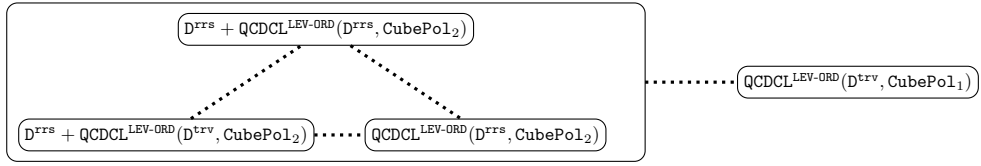
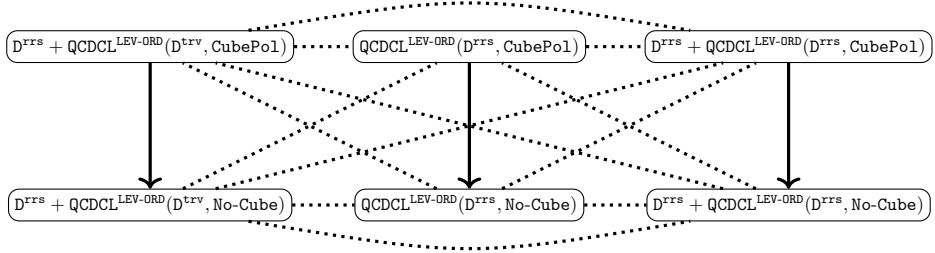
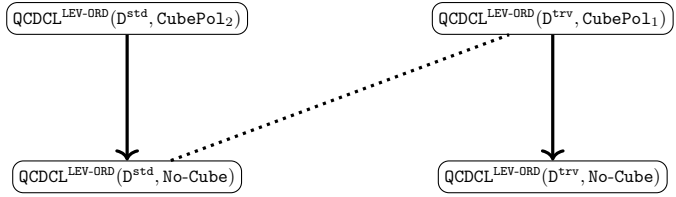
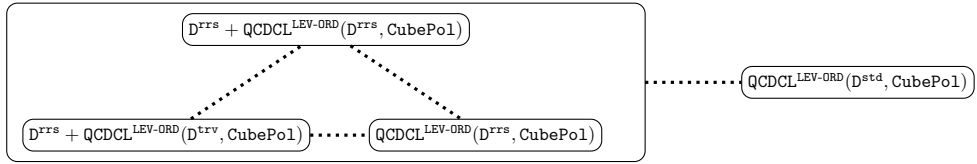
\begin{figure}
	\begin{subfigure}{\textwidth}
		\begin{tikzpicture}[scale=0.7, transform shape]
			\draw[rounded corners] (0,1) rectangle (12,4);
			\node[draw,rounded corners] (LB11) at (3,1.5) {$\Drrs+\QCDCLDep{\lo}{\Dtrv}{\CubePolicy_2}$};
			\node[draw,rounded corners] (LB12) at (9,1.5) {$\QCDCLDep{\lo}{\Drrs}{\CubePolicy_2}$};
			\node[draw,rounded corners] (LB21) at (6,3.5) {$\Drrs+\QCDCLDep{\lo}{\Drrs}{\CubePolicy_2}$};
			\node[draw,rounded corners] (RB) at (16,2.5) {$\QCDCLDep{\lo}{\Dtrv}{\CubePolicy_1}$};
			\draw[very thick, dotted] (LB11) -- (LB12);
			\draw[very thick, dotted] (LB21) -- (LB12);
			\draw[very thick, dotted] (LB11) -- (LB21);
			\draw[very thick, dotted] (12,2.5) -- (RB);
		\end{tikzpicture}
		
		\caption{For $\CubePolicy_1\in\{\NoCube,\CubeLD,\CubeD{\Dtrv}\}$; $\CubePolicy_2\in\{\CubeLD,\CubeD{\Drrs}\}$. \\ 
			From \cref{thm:QCDCLcubeDrrs-QCDCLcube,thm:QCDCLcubeDrrs-incomparable}.    }
	\end{subfigure}
	
	\par\bigskip\medskip
	
	\begin{subfigure}{\textwidth}

		\begin{tikzpicture}[scale=0.7, transform shape]
			\node[draw,rounded corners] (11) at (2,1) {$\Drrs+\QCDCLDep{\lo}{\Dtrv}{\NoCube}$};
			\node[draw,rounded corners] (12) at (8,1) {$\QCDCLDep{\lo}{\Drrs}{\NoCube}$};
			\node[draw,rounded corners] (13) at (14,1) {$\Drrs+\QCDCLDep{\lo}{\Drrs}{\NoCube}$};
			\node[draw,rounded corners] (21) at (2,4) {$\Drrs+\QCDCLDep{\lo}{\Dtrv}{\CubePolicy}$};
			\node[draw,rounded corners] (22) at (8,4) {$\QCDCLDep{\lo}{\Drrs}{\CubePolicy}$};
			\node[draw,rounded corners] (23) at (14,4) {$\Drrs+\QCDCLDep{\lo}{\Drrs}{\CubePolicy}$};
			\draw[very thick, <-] (11) -- (21);
			\draw[very thick, <-] (12) -- (22);
			\draw[very thick, <-] (13) -- (23);
			\draw[very thick, dotted] (11) -- (22);
			\draw[very thick, dotted] (11) -- (23);
			\draw[very thick, dotted] (12) -- (21);
			\draw[very thick, dotted] (12) -- (23);
			\draw[very thick, dotted] (13) -- (21);
			\draw[very thick, dotted] (13) -- (22);
			\draw[very thick, dotted] (11) -- (12);
			\draw[very thick, dotted] (12) -- (13);
			\draw[very thick, dotted] (11) .. controls (8,0) .. (13);
			\draw[very thick, dotted] (21) -- (22);
			\draw[very thick, dotted] (22) -- (23);
			\draw[very thick, dotted] (21) .. controls (8,5) .. (23);
		\end{tikzpicture}
		\caption{For $\CubePolicy \in \{\CubeLD,\CubeD{\Drrs}\}$.\\
			From \cref{thm:QCDCLcubeD-stronger,thm:QCDCLcubeDrrs-DrrsQCDCL}.  
		}
	\end{subfigure}
	
	\par\bigskip\medskip
	
	\begin{subfigure}{\textwidth}

		\begin{tikzpicture}[scale=0.7, transform shape]
			\node[draw,rounded corners] (11) at (2,1) {$\QCDCLDep{\lo}{\Dstd}{\NoCube}$};
			\node[draw,rounded corners] (12) at (10,1) {$\QCDCLDep{\lo}{\Dtrv}{\NoCube}$};
			\node[draw,rounded corners] (21) at (2,4) {$\QCDCLDep{\lo}{\Dstd}{\CubePolicy_2}$};
			\node[draw,rounded corners] (22) at (10,4) {$\QCDCLDep{\lo}{\Dtrv}{\CubePolicy_1}$};
			\draw[very thick, <-] (11) -- (21);
			\draw[very thick, <-] (12) -- (22);
			\draw[very thick, dotted] (11) -- (22);
		\end{tikzpicture}
		
		\caption{For $\CubePolicy_1\in\{\CubeLD,\CubeD{\Dtrv}\}$, $\CubePolicy_2\in\{\CubeLD,\CubeD{\Dstd}\}$.\\
			From \cref{thm:QCDCLcubeD-stronger,thm:DstdCube-incomp}. }
	\end{subfigure}
	
	\par\bigskip\medskip
	
	\begin{subfigure}{\textwidth}

		\begin{tikzpicture}[scale=0.7, transform shape]
			\draw[rounded corners] (0,1) rectangle (12,4);
			\node[draw,rounded corners] (LB11) at (3,1.5) {$\Drrs+\QCDCLDep{\lo}{\Dtrv}{\CubePolicy}$};
			\node[draw,rounded corners] (LB12) at (9,1.5) {$\QCDCLDep{\lo}{\Drrs}{\CubePolicy}$};
			\node[draw,rounded corners] (LB21) at (6,3.5) {$\Drrs+\QCDCLDep{\lo}{\Drrs}{\CubePolicy}$};
			\node[draw,rounded corners] (RB) at (16,2.5) {$\QCDCLDep{\lo}{\Dstd}{\CubePolicy}$};
			\draw[very thick, dotted] (LB11) -- (LB12);
			\draw[very thick, dotted] (LB21) -- (LB12);
			\draw[very thick, dotted] (LB11) -- (LB21);
			\draw[very thick, dotted] (12,2.5) -- (RB);
		\end{tikzpicture}

		\caption{For $\CubePolicy \in \{\NoCube,\CubeLD,\CubeD{\Dstd} \}$. From \cref{thm:DstdDrrs-incomp}.}
	\end{subfigure}
	\caption{The simulation order of various \QCDCL\ systems.\\
		$A\rightarrow B$ means $A$ simulates $B$ but $B$ does not simulate $A$.\\
		$A\cdots B$ means neither $A$ nor $B$ simulates the other.
	}
	\label{fig:all-figures}
\end{figure}

		As discussed cube-learning in \QCDCL\ proof systems is the concept of allowing "terms" to be learnt from satisfying trails, and intuitively more (optional) 
		 learning power would mean a more powerful proof system. The first theorem validates this claim . It states that for any \QCDCL\ proof system with dependency schemes, 
		 adding cube-learning yields a more powerful system than the corresponding system without it.  
		 
\begin{restatable}{theorem}{restateThmQCDCLcubeDstronger}
			\label{thm:QCDCLcubeD-stronger}
			 For $\D_1,\D_2  \in \{\Dtrv, \Dstd, \Drrs\}$ and $\CubePolicy \in \{\CubeLD, \CubeD{\D_2}\}$  the proof system $\D_1 + \QCDCLDep{\lo}{\D_2}{\CubePolicy}$ is strictly stronger than $\D_1 + \QCDCLDep{\lo}{\D_2}{\NoCube}$
\end{restatable}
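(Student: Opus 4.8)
The plan is to prove "strictly stronger" in the two usual directions: first that $\D_1 + \QCDCLDep{\lo}{\D_2}{\CubePolicy}$ $p$-simulates $\D_1 + \QCDCLDep{\lo}{\D_2}{\NoCube}$, and second that no reverse simulation exists. The first direction I expect to be essentially immediate from \cref{obs:nocubeImplycube}. Unfolding the preprocessing convention of \cref{def:d+qcdcldep system}, a derivation in $\D_1 + \QCDCLDep{\lo}{\D_2}{\NoCube}$ is by definition a derivation from the $\D_1$-reduced formula $\Psi$ in $\QCDCLDep{\lo}{\D_2}{\NoCube}$. By \cref{obs:nocubeImplycube}, any such derivation is also a derivation from $\Psi$ in $\QCDCLDep{\lo}{\D_2}{\CubePolicy}$ for $\CubePolicy \in \{\CubeLD, \CubeD{\D_2}\}$, since a trail that never reaches a satisfying assignment is a legal trail in the cube-learning system and not a single proof step needs to change. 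Because the proof object is literally copied, this is a $p$-simulation.

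For strictness I would exhibit a family that is easy with cube learning but hard without, and the \DoubleLongEq\ formulas of \cref{for:DoubleLongEq} are engineered precisely for this. \cref{lem:DoubleLongEq-easyQCDCLcube} supplies polynomial-size refutations of \DoubleLongEq\ in $\D_1 + \QCDCLDep{\lo}{\D_2}{\CubePolicy}$ for every $\D_1,\D_2 \in \{\Dtrv,\Dstd,\Drrs\}$ and $\CubePolicy \in \{\CubeLD, \CubeD{\D_2}\}$, whereas \cref{lem:DoubleLongEq-hardQCDCL} shows the same formulas require exponential-size refutations in $\D_1 + \QCDCLDep{\lo}{\D_2}{\NoCube}$ for all $\D_1,\D_2 \in \{\Dtrv,\Dstd,\Drrs\}$. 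The parameter ranges in these two lemmas match the statement of the theorem exactly, so the two bounds directly rule out any reverse simulation and yield an exponential separation. Assembling the theorem from the simulation observation and the two lemmas therefore involves no further work.

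The genuine content, and the part I would regard as the main obstacle, lives inside the two \DoubleLongEq\ lemmas rather than in the assembly. The crucial point enabling a uniform treatment is \cref{clm:DoubleLongEqDrrs}: on \DoubleLongEq\ one has $\reduce(\Phi)=\Phi$ and $\Drrs(\Phi)=\Dstd(\Phi)=\Dtrv(\Phi)$, so preprocessing is inert and all nine choices of $(\D_1,\D_2)$ in the no-cube setting collapse to hardness for $\QCDCLDep{\lo}{\Dtrv}{\NoCube}$; that hardness is extracted from the gauge lower bound of \cite{BohmBeyersdorff-JAR23} by verifying the $XT$-property and computing the linear gauge. The matching upper bound reuses the \Equality-style cube-learning refutation verbatim, with the extra clauses $UT_n,UT'_n$ playing no role, and with all learnt cubes arising purely from the term-axiom and $\reduce_\exists$ rules so that they remain valid under both $\CubeLD$ and $\CubeD{\D_2}$. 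Given these lemmas, the theorem follows.
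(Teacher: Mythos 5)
Your proposal matches the paper's own proof: the simulation direction is obtained from \cref{obs:nocubeImplycube} (applied after the $\D_1$-preprocessing step), and strictness is witnessed by the \DoubleLongEq\ formulas via \cref{lem:DoubleLongEq-easyQCDCLcube} and \cref{lem:DoubleLongEq-hardQCDCL}. Your additional remarks about where the real work lies (in the two lemmas, enabled by \cref{clm:DoubleLongEqDrrs}) are accurate but not part of the theorem's proof itself.
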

		\begin{proof}
				From \cref{obs:nocubeImplycube} the systems with cube-learning are at least as strong as the corresponding systems without cube learning. The \DoubleLongEq\ formulas show that they are in fact strictly stronger; see the bounds in \cref{lem:DoubleLongEq-easyQCDCLcube,lem:DoubleLongEq-hardQCDCL}.
		\end{proof}


For the next three theorems we focus  specifically on  $\Drrs$.
The first of these shows  that irrespective of the manner in which $\Drrs$  
is used in a \QCDCL\ proof system, the resulting system is provably incomparable in strength to  the \QCDCL\ system that does not use dependency schemes. This was already known for the setting without cube-learning,  Theorem~5 in \cite{ChoudhuryMahajan-JAR24}.  We  show here that cube learning makes no difference; the systems still remain incomparable. This  highlights the fact that adding dependency schemes is not always beneficial.
              
\begin{restatable}{theorem}{restateThmQCDCLcubeDrrsQCDCLcube}
			\label{thm:QCDCLcubeDrrs-QCDCLcube}		
			Any proof systems $\PS_1,\PS_2$ are incomparable, where
                        \[\PS_1  \in  \left\{ \QCDCLDep{\lo}{\Dtrv}{\CubePolicy} \mid \CubePolicy  \in \{ \NoCube, \CubeLD, \CubeD{\Drrs}\right\}  \} \textrm{~and}\]
			\[\PS_2  \in \left\{ \D_1 + \QCDCLDep{\lo}{\D_2}{\CubePolicy} \mid 
                        \parbox{3in}{$(\D_1,\D_2)  \in \{(\Dtrv, \Drrs), (\Drrs, \Dtrv), (\Drrs, \Drrs) \}, \\  \CubePolicy \in \{\CubeLD, \CubeD{\D_2}\}$}   \right\}.\] 
\end{restatable}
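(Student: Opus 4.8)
The plan is to establish the two non-simulations separately, each witnessed by a single formula family whose matrix is \emph{unsatisfiable}. Concretely, I will use the \Trapdoor\ formulas (from \cite{BeyersdorffBohm-LMCS23}) to show that no $\PS_1$ simulates any $\PS_2$, and the \DepTrap\ formulas (from \cite{ChoudhuryMahajan-JAR24}) to show that no $\PS_2$ simulates any $\PS_1$. The whole argument rests on the fact that both witness families have unsatisfiable matrices, so by \cref{obs:unsatnocubeEqcube} cube learning is vacuous on them: every derivation in any of the cube-learning systems is already a derivation in the corresponding $\NoCube$ system, and conversely a short $\NoCube$ refutation lifts to every cube policy by \cref{obs:nocubeImplycube}. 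Hence all the cube-learning size bounds we need collapse to the $\NoCube$ bounds already recorded in \cref{tab:for-sideways}, which in turn reduce the statement to the cube-free incomparability proved as Theorem~5 in \cite{ChoudhuryMahajan-JAR24}. Note also that by \cref{prop:DtrvDstd-nopre} the case $(\D_1,\D_2)=(\Dtrv,\Drrs)$ of $\PS_2$ is literally $\QCDCLDep{\lo}{\Drrs}{\CubePolicy}$.

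For the direction ``$\PS_1$ does not simulate $\PS_2$'', I will use the \Trapdoor\ formulas. Since $\Drrs(\Trapdoor)=\emptyset$, every $\Drrs$-reduction is enabled, and \Trapdoor\ has polynomial-size refutations in all three $\Drrs$-flavoured $\NoCube$ systems $\Drrs+\QCDCLDep{\lo}{\Dtrv}{\NoCube}$, $\QCDCLDep{\lo}{\Drrs}{\NoCube}$ and $\Drrs+\QCDCLDep{\lo}{\Drrs}{\NoCube}$ (the ``\QCDCL\ with $\Drrs$'' row of \cref{tab:for-sideways}); by \cref{obs:nocubeImplycube} these lift verbatim to every cube policy, so \Trapdoor\ is easy for every $\PS_2$. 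On the other hand \Trapdoor\ is exponentially hard for plain $\QCDCLDep{\lo}{\Dtrv}{\NoCube}$, and because its matrix is unsatisfiable, \cref{obs:unsatnocubeEqcube} pushes this lower bound up to $\QCDCLDep{\lo}{\Dtrv}{\CubeLD}$ and $\QCDCLDep{\lo}{\Dtrv}{\CubeD{\Drrs}}$ as well; thus \Trapdoor\ is hard for every $\PS_1$. An easy-for-$\PS_2$, hard-for-$\PS_1$ family rules out any simulation of $\PS_2$ by $\PS_1$.

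For the direction ``$\PS_2$ does not simulate $\PS_1$'', I will use the \DepTrap\ formulas, which behave in the opposite way. They have polynomial-size refutations in plain $\QCDCLDep{\lo}{\Dtrv}{\NoCube}$, and (their matrix being unsatisfiable) this upper bound survives into $\QCDCLDep{\lo}{\Dtrv}{\CubeLD}$ and $\QCDCLDep{\lo}{\Dtrv}{\CubeD{\Drrs}}$ by \cref{obs:nocubeImplycube}, so \DepTrap\ is easy for every $\PS_1$. Conversely \DepTrap\ is exponentially hard for every $\Drrs$-using $\NoCube$ system (the ``\QCDCLcube\ with $\Drrs$'' lower bound of \cref{tab:for-sideways}), and by \cref{obs:unsatnocubeEqcube} this hardness transfers to all the cube policies $\CubeLD$ and $\CubeD{\D_2}$ occurring in $\PS_2$; hence \DepTrap\ is hard for every $\PS_2$. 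This yields the reverse non-simulation, and the two directions together give the claimed incomparability for every pair $(\PS_1,\PS_2)$.

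The work is almost entirely bookkeeping: the size bounds recorded in \cref{tab:for-sideways} are stated for particular cube policies (often only $\CubeD{\Drrs}$ or $\CubeD{\Dtrv}$), whereas $\PS_1$ and $\PS_2$ range over several policies, including the ``mixed'' system $\QCDCLDep{\lo}{\Dtrv}{\CubeD{\Drrs}}$ and the preprocessing variant $\Drrs+\QCDCLDep{\lo}{\Dtrv}{\CubeD{\Dtrv}}$. The main (and really the only) obstacle is therefore to verify that the unsatisfiability of the two matrices genuinely neutralises \emph{every} cube policy under consideration; once \cref{obs:unsatnocubeEqcube,obs:nocubeImplycube} are invoked, all these systems agree with their $\NoCube$ counterparts on the witness formulas, so no bound beyond those in \cref{tab:for-sideways} needs to be proved.
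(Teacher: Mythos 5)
Your proposal is correct, and it establishes all thirty-six required separations; it coincides with the paper's proof in one direction but diverges in the other. The paper also uses \DepTrap\ (unsatisfiable matrix, so \cref{obs:unsatnocubeEqcube} collapses every cube policy to $\NoCube$) to show that no $\PS_2$ simulates any $\PS_1$, exactly as you do. For the reverse direction, however, the paper takes \TwinEquality\ rather than \Trapdoor: since \TwinEquality\ has a \emph{satisfiable} matrix, its hardness for the cube-learning members of $\PS_1$ cannot be reduced to a $\NoCube$ bound and instead rests on the genuine cube-learning lower bound of \cite{BohmPB-AI24} (Prop.~6.3), while its easiness for every $\PS_2$ comes from the ``\QCDCL\ with $\Drrs$'' upper bounds lifted via \cref{obs:nocubeImplycube}. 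Your substitution of \Trapdoor\ is legitimate: it is easy in all three $\Drrs$-flavoured $\NoCube$ systems (\cite{ChoudhuryMahajan-JAR24}, Lemma~2), hence in every $\PS_2$ by \cref{obs:nocubeImplycube}, and hard for $\QCDCLDep{\lo}{\Dtrv}{\NoCube}$ (\cite{BeyersdorffBohm-LMCS23}, Prop.~4.6), hence for every $\PS_1$ since its unsatisfiable matrix makes cube learning vacuous. What your route buys is uniformity: both witnesses reduce entirely to $\NoCube$ bounds, so you never need a cube-specific lower bound, and in particular the $\CubeD{\Drrs}$ member of $\PS_1$ is handled cleanly (whereas transferring the recorded $\CubeLD$ hardness of \TwinEquality\ to $\CubeD{\Drrs}$ is not immediate, since a $\CubeD{\Drrs}$ derivation need not be a $\CubeLD$ derivation). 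What the paper's choice buys is a qualitatively stronger witness --- hardness that survives even when cube learning is genuinely available --- though that extra strength is not needed for the theorem. One small caveat applies to both arguments: \cref{obs:unsatnocubeEqcube} is stated for $\CubePolicy\in\{\CubeLD,\CubeD{\D}\}$ with $\D$ matching the clause policy, whereas $\PS_1$ pairs clause policy $\Dtrv$ with $\CubeD{\Drrs}$; the underlying reasoning (no satisfying trail, hence no cube is ever learnt) covers this mismatched case, but you should say so explicitly rather than cite the observation verbatim.
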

  \begin{proof}
			There are eighteen incomparability claims  expressed so thirty-six separations are required! Fortunately, just two formulas establish all the desired separations. 

The bounds from \cite{ChoudhuryMahajan-JAR24} along with \cref{obs:nocubeImplycube} and  \cref{obs:unsatnocubeEqcube} imply that the \DepTrap\ formulas require exponential size refutations in every system in $\PS_2$, but have polynomial size refutations in every system in $\PS_1$.
The bounds from \cite{BeyersdorffBohm-LMCS23} and \cite{ChoudhuryMahajan-JAR24} along with \cref{obs:nocubeImplycube} imply that the \TwinEquality\ formulas require exponential size refutations in every system in $\PS_1$, but have polynomial size refutations in every system in $\PS_2$.
%
		\end{proof}

The next theorem shows that adding $\Drrs$ in different ways to \QCDCL\ yields systems incomparable in strength. (Again, this was already known for the setting without cube-learning,  Theorem~4 in \cite{ChoudhuryMahajan-JAR24}.) This highlights that adding dependency schemes in any one of  preprocessing or propagation and learning is not  inherently better than the other.

		\begin{restatable}{theorem}{restateQCDCLcubeDrrsincom}
			\label{thm:QCDCLcubeDrrs-incomparable}
			For a fixed $\CubePolicy \in \{\CubeLD, \CubeD{\Drrs}\}$, the three systems \\
                        $\QCDCLDep{\lo}{\Drrs}{\CubePolicy}$, $\Drrs+ \QCDCLDep{\lo}{\Dtrv}{\CubePolicy}$, and \\ $\Drrs + \QCDCLDep{\lo}{\Drrs}{\CubePolicy}$ are pairwise incomparable.
		\end{restatable}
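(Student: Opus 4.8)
The plan is to transfer the pairwise incomparability of the three corresponding \emph{no-cube} systems, established as Theorem~4 in \cite{ChoudhuryMahajan-JAR24}, to the cube-learning setting. Throughout, write $S_1 = \QCDCLDep{\lo}{\Drrs}{\CubePolicy}$, $S_2 = \Drrs + \QCDCLDep{\lo}{\Dtrv}{\CubePolicy}$, $S_3 = \Drrs + \QCDCLDep{\lo}{\Drrs}{\CubePolicy}$, and $A \ge B$ for ``$A$ simulates $B$''; pairwise incomparability amounts to the six one-sided failures of simulation. The two transfer tools are \cref{obs:nocubeImplycube} (a no-cube derivation is still valid once cube learning is switched on, so every upper bound lifts immediately) and \cref{obs:unsatnocubeEqcube} (on a formula whose matrix is unsatisfiable, cube learning can never fire, so the cube system collapses to its no-cube counterpart and lower bounds lift as well). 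Since the three systems share the $\lo$ policy and differ only in whether $\Drrs$ is used in preprocessing and in propagation, each separating family needs only to expose one of these two axes.

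Four of the six failures I would obtain from two families reused from \cite{ChoudhuryMahajan-JAR24}. The \PropDepTrap\ formulas have an unsatisfiable matrix and are easy in $S_2$ but hard in $S_1$ and $S_3$; by the two observations these bounds lift verbatim, giving $S_1 \not\ge S_2$ and $S_3 \not\ge S_2$. The \PreDepTrap\ formulas have a satisfiable matrix that becomes unsatisfiable after $\Drrs$-preprocessing; their easy no-cube refutation in the propagation-only system lifts by \cref{obs:nocubeImplycube} to make them easy in $S_1$, while in $S_2$ and $S_3$ preprocessing first renders the matrix unsatisfiable, so \cref{obs:unsatnocubeEqcube} preserves the no-cube hardness there. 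This yields $S_2 \not\ge S_1$ and $S_3 \not\ge S_1$.

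For the fifth failure I would invoke the new \PreRRSTrap\ formulas. Their matrix is unsatisfiable (it contains \PHP), so cube learning is vacuous and \cref{lem:PreRRSTrap-easyDrrsQCDCLDrrs,lem:PreRRSTrap-hardDrrsQCDCL} apply directly: they are easy whenever $\Drrs$ drives propagation (in particular in $S_3$) and hard in $S_2$, giving $S_2 \not\ge S_3$. This isolates the propagation axis, showing that using $\Drrs$ in propagation helps even after a \emph{non-trivial} $\Drrs$-preprocessing step has already been performed.

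The remaining and most delicate failure is $S_1 \not\ge S_3$, which must isolate the \emph{preprocessing} axis: the two systems agree on propagation, so I need a family that is easy in $S_3$ but hard in $S_1$ purely because $\Drrs$-preprocessing is available. The corresponding no-cube separation $\QCDCLDep{\lo}{\Drrs}{\NoCube} \not\ge \Drrs + \QCDCLDep{\lo}{\Drrs}{\NoCube}$ is supplied by \cite{ChoudhuryMahajan-JAR24}, and the task is to make its witnessing family lift. The hard part will be precisely that $S_1$ performs no preprocessing, so for the lower bound to survive cube learning the family must have an unsatisfiable matrix, letting \cref{obs:unsatnocubeEqcube} again collapse $S_1$ to its no-cube form while \cref{obs:nocubeImplycube} carries the $S_3$ upper bound across. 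I expect the heart of the argument to be verifying that preprocessing, rather than propagation, is the decisive ingredient --- typically that, absent preprocessing, the $\lo$ trails are forced into a hard \PHP\ sub-refutation that the preprocessed matrix escapes --- and checking that no residual satisfiability reopens the instance to cube learning in $S_1$.
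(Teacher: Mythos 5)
Your proposal is correct and follows essentially the same route as the paper: lift the no-cube separations through \cref{obs:nocubeImplycube} and \cref{obs:unsatnocubeEqcube}, using \PreDepTrap, \PropDepTrap, and \PreRRSTrap\ exactly as the paper does. The one witness you leave unnamed, for $S_1 \not\ge S_3$, is the \TwoPHPandCT\ family from \cite{ChoudhuryMahajan-JAR24}; it has precisely the properties you anticipate (an unsatisfiable matrix containing \PHP, $\Drrs=\emptyset$, and $\lo$ trails forced into a \PHP\ refutation unless preprocessing removes the blocking universals), and the paper in fact uses it to witness $S_1 \not\ge S_2$ as well, in place of your \PropDepTrap.
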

\begin{proof}
	Refer to \cref{tab:for-sideways} and \cref{obs:nocubeImplycube} and \cref{obs:unsatnocubeEqcube}.
	
	It can be seen that \TwoPHPandCT\ formulas are easy to refute  when $\Drrs$ is used in preprocessing i.e. in $\Drrs + \QCDCLDep{\lo}{\Dtrv}{\CubePolicy}$ and  $\Drrs + \QCDCLDep{\lo}{\Drrs}{\CubePolicy}$, but hard otherwise i.e.  $ \QCDCLDep{\lo}{\Drrs}{\CubePolicy}$. On the other hand, the \PreDepTrap\ formulas are hard to refute if preprocessed by $\Drrs$, but easy otherwise. Together, they witness that $ \QCDCLDep{\lo}{\Drrs}{\CubePolicy}$ is incomparable with $\Drrs + \QCDCLDep{\lo}{\Dtrv}{\CubePolicy}$ and  $\Drrs + \QCDCLDep{\lo}{\Drrs}{\CubePolicy}$.
	
	Further, the \PropDepTrap\ formulas are easy to refute if the propagations and clause learning do not use $\Drrs$, but become hard if $\Drrs$ is used. This is independent of whether preprocessing is used and whether cube-learning is switched on. On the other hand, the (new) \PreRRSTrap\ formulas show that using $\Drrs$ in propagations can be advantageous. Together, these two formulas  witness   $\Drrs + \QCDCLDep{\lo}{\Dtrv}{\CubePolicy}$  and  $\Drrs + \QCDCLDep{\lo}{\Drrs}{\CubePolicy}$ are incomparable.
\end{proof}

                
Earlier, we have seen that adding cube-learning to a \QCDCL\ system with dependency always yields a stronger system, \cref{thm:QCDCLcubeD-stronger}. The following theorem shows that even adding cube-learning to a \QCDCL\ system using $\Drrs$ is incomparable to a \QCDCL\ system without cube-learning but using $\Drrs$ in a different way.
		\begin{restatable}{theorem}{restateQCDCLcubeDrrsDrrsQCDCL}
			\label{thm:QCDCLcubeDrrs-DrrsQCDCL}
				For any $\CubePolicy \in \{\CubeLD, \CubeD{\Drrs}\}$, adding $\Drrs$ to the proof system  $\QCDCLDep{\lo}{\Dtrv}{\CubePolicy}$  in one way and to $\QCDCLDep{\lo}{\Dtrv}{\NoCube}$ in any different way yields incomparable proof systems. In particular,
					\begin{enumerate}
						\item $\Drrs + \QCDCLDep{\lo}{\Dtrv}{\CubePolicy}$ is incomparable with $\QCDCLDep{\lo}{\Drrs}{\NoCube}$ and $\Drrs+\QCDCLDep{\lo}{\Drrs}{\NoCube}$.
						\item $\QCDCLDep{\lo}{\Drrs}{\CubePolicy}$ is incomparable with $\Drrs + \QCDCLDep{\lo}{\Dtrv}{\NoCube}$  and $\Drrs + \QCDCLDep{\lo}{\Drrs}{\NoCube}$.
						\item $\Drrs + \QCDCLDep{\lo}{\Drrs}{\CubePolicy}$ incomparable with $\QCDCLDep{\lo}{\Drrs}{\NoCube}$ and $\Drrs + \QCDCLDep{\lo}{\Dtrv}{\NoCube}$.
					\end{enumerate}
		\end{restatable}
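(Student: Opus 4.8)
The plan is to prove each of the six incomparability assertions by exhibiting, for the pair in question, two formula families: one easy in the first system and hard in the second, and one with the roles reversed. The key simplification is that cube-learning can be factored out almost entirely, so that none of the separations actually needs the extra power of cube-learning. Each of the three cube systems on the left corresponds to a unique $\NoCube$ system making the same use of $\Drrs$: $\Drrs+\QCDCLDep{\lo}{\Dtrv}{\CubePolicy}$ to $\Drrs+\QCDCLDep{\lo}{\Dtrv}{\NoCube}$, $\QCDCLDep{\lo}{\Drrs}{\CubePolicy}$ to $\QCDCLDep{\lo}{\Drrs}{\NoCube}$, and $\Drrs+\QCDCLDep{\lo}{\Drrs}{\CubePolicy}$ to $\Drrs+\QCDCLDep{\lo}{\Drrs}{\NoCube}$. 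By \cref{obs:unsatnocubeEqcube}, on any formula with an unsatisfiable matrix a cube system and its $\NoCube$ counterpart are literally the same system, and by \cref{obs:nocubeImplycube} a poly-size $\NoCube$ refutation is automatically one in the cube system. Consequently I would only ever separate using families whose matrix is unsatisfiable (or becomes so after $\Drrs$-preprocessing), so that every bound is imported from the $\NoCube$ setting of \cite{ChoudhuryMahajan-JAR24} and of \cref{lem:PreRRSTrap-easyDrrsQCDCLDrrs,lem:PreRRSTrap-hardDrrsQCDCL}.

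The four families I would use, read off from \cref{tab:for-sideways}, each isolate exactly one of the two distinguishing features: whether $\Drrs$ is used in preprocessing, and whether $\Drrs$ rather than $\Dtrv$ is used in propagation and learning. \TwoPHPandCT\ is easy precisely when $\Drrs$-preprocessing is present and hard otherwise; \PreDepTrap\ is easy precisely when $\Drrs$-preprocessing is absent and hard otherwise; \PreRRSTrap\ is easy precisely when $\Drrs$ is used in propagation and hard with $\Dtrv$-propagation; and \PropDepTrap\ is easy precisely with $\Dtrv$-propagation and hard when $\Drrs$ is used in propagation. All four have unsatisfiable matrices (\PreDepTrap\ after its $\Drrs$-preprocessing), so their bounds are insensitive to the cube policy.

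Matching families to pairs is then routine. For Part~1, $\Drrs+\QCDCLDep{\lo}{\Dtrv}{\CubePolicy}$ propagates with $\Dtrv$, so \PropDepTrap\ is easy in it but hard in both targets $\QCDCLDep{\lo}{\Drrs}{\NoCube}$ and $\Drrs+\QCDCLDep{\lo}{\Drrs}{\NoCube}$ (which propagate with $\Drrs$), while \PreRRSTrap\ reverses each separation. For Part~2, $\QCDCLDep{\lo}{\Drrs}{\CubePolicy}$ does no preprocessing and propagates with $\Drrs$: against $\Drrs+\QCDCLDep{\lo}{\Dtrv}{\NoCube}$, \PreRRSTrap\ (easy here, hard there by the propagation difference) pairs with \PropDepTrap\ (the reverse); against $\Drrs+\QCDCLDep{\lo}{\Drrs}{\NoCube}$, which shares $\Drrs$-propagation and differs only in preprocessing, \PreDepTrap\ (easy here, hard there because preprocessing hurts) pairs with \TwoPHPandCT\ (easy there, hard here because preprocessing helps). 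For Part~3, $\Drrs+\QCDCLDep{\lo}{\Drrs}{\CubePolicy}$ does both: against $\QCDCLDep{\lo}{\Drrs}{\NoCube}$ (no preprocessing), \TwoPHPandCT\ pairs with \PreDepTrap; against $\Drrs+\QCDCLDep{\lo}{\Dtrv}{\NoCube}$ ($\Dtrv$-propagation), \PreRRSTrap\ pairs with \PropDepTrap.

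The work is essentially bookkeeping, and I expect only two real pitfalls. First, one must confirm that the cube-transfer via \cref{obs:unsatnocubeEqcube,obs:nocubeImplycube} is applied to the correct side of each separation: easiness lifts from the $\NoCube$ system to the cube system for free, whereas hardness in a cube system is only inherited from its $\NoCube$ counterpart when the matrix is unsatisfiable — which is exactly why every separating family is chosen to be (eventually) unsatisfiable. Second, the two ``helps'' witnesses must be handled with care: \PreRRSTrap\ shows that $\Drrs$-propagation can help even when $\Drrs$-preprocessing has already been applied, and \TwoPHPandCT\ shows that $\Drrs$-preprocessing can help even when $\Drrs$-propagation is already in force. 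These are precisely the separations that the more classical ``hurts'' formulas (\PreDepTrap, \PropDepTrap) cannot supply, so checking their easy/hard status in the combined preprocessing-plus-propagation regime is the crux of the argument.
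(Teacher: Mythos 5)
Your overall strategy is the paper's: use \cref{obs:nocubeImplycube} and \cref{obs:unsatnocubeEqcube} to reduce everything to the $\NoCube$ bounds, and separate each pair with two of the four families \TwoPHPandCT, \PreDepTrap, \PropDepTrap, \PreRRSTrap. Four of your six pairings coincide exactly with the paper's (both pairs involving $\Drrs+\QCDCLDep{\lo}{\Drrs}{\NoCube}$ or $\Drrs+\QCDCLDep{\lo}{\Drrs}{\CubePolicy}$ against a preprocessing-free system via \TwoPHPandCT/\PreDepTrap, and both ``$\Dtrv$-propagation vs.\ $\Drrs$-propagation, both preprocessed'' pairs via \PropDepTrap/\PreRRSTrap), and for those your bookkeeping is correct.

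The gap is in the two remaining pairs, where the two systems differ in \emph{both} preprocessing and propagation: $\Drrs+\QCDCLDep{\lo}{\Dtrv}{\CubePolicy}$ vs.\ $\QCDCLDep{\lo}{\Drrs}{\NoCube}$ (Part~1) and $\QCDCLDep{\lo}{\Drrs}{\CubePolicy}$ vs.\ $\Drrs+\QCDCLDep{\lo}{\Dtrv}{\NoCube}$ (Part~2). You resolve these along the propagation axis, which forces you to claim that \PreRRSTrap\ is easy in $\QCDCLDep{\lo}{\Drrs}{\NoCube}$, i.e.\ \emph{without} $\Drrs$-preprocessing. That claim cannot be ``read off from \cref{tab:for-sideways}'': the table and \cref{lem:PreRRSTrap-easyDrrsQCDCLDrrs} only establish easiness in $\Drrs+\QCDCLDep{\lo}{\Drrs}{\CubePolicy}$, and since preprocessed and unpreprocessed systems are in general incomparable, easiness does not transfer. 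Your blanket characterisation ``easy precisely when $\Drrs$ is used in propagation'' therefore overstates what has been proved. The claim is in fact true --- preprocessing only deletes the literal $w$, which occurs in a single polarity, so nothing depends on it under $\Drrs$ and it is reduced during propagation anyway, making the same trails available --- but you would need to say this explicitly (the remark after \cref{for:PreRRSTrap} gestures at it without covering this case). Alternatively, resolve these two pairs along the preprocessing axis instead, using \TwoPHPandCT\ (easy iff preprocessed) and \PreDepTrap\ (easy iff not preprocessed), exactly as the paper does; then every bound you need is already established.
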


\begin{proof}
	Refer to \cref{tab:for-sideways} and \cref{obs:nocubeImplycube} and \cref{obs:unsatnocubeEqcube}.
	
	The \TwoPHPandCT\ formulas are easy to refute if and only if preprocessed by $\Drrs$, irrespective of whether or not cube-learning is used and whether or not $\Drrs$ is used in propagation and learning.
	The situation is exactly reversed for the \PreDepTrap\ formulas, which are easy to refute if and only if not preprocessed by $\Drrs$. Together, these show eight of the twelve claimed incomparability relations, namely \\
	\begin{tabular}{rr}
		$\Drrs + \QCDCLDep{\lo}{\Dtrv}{\CubePolicy}$ and &$\QCDCLDep{\lo}{\Drrs}{\NoCube}$,\\
		$\QCDCLDep{\lo}{\Drrs}{\CubePolicy}$ and &$\Drrs + \QCDCLDep{\lo}{\Dtrv}{\NoCube}$,\\
		$\QCDCLDep{\lo}{\Drrs}{\CubePolicy}$ and &$\Drrs + \QCDCLDep{\lo}{\Drrs}{\NoCube}$,\\
		$\Drrs + \QCDCLDep{\lo}{\Drrs}{\CubePolicy}$ and &$\QCDCLDep{\lo}{\Drrs}{\NoCube}$.
	\end{tabular}
	
	The \PropDepTrap\ formulas are easy to refute if and only if $\Drrs$ is not used for propagation and learning, irrespective of its use in preprocessing, and irrespective of whether or not cube-learning is used.  On the other hand, the  \PreRRSTrap\ formulas are easy to refute if $\Drrs$ is used for preprocessing and in propagation and learning, but not if it is used only for preprocessing. Together, these show the remaining four claimed incomparability relations, namely
	
	\begin{tabular}[b]{rr}
		$\Drrs + \QCDCLDep{\lo}{\Dtrv}{\CubePolicy}$ and &$\Drrs+\QCDCLDep{\lo}{\Drrs}{\NoCube}$\\
		$\Drrs + \QCDCLDep{\lo}{\Drrs}{\CubePolicy}$ and &$\Drrs + \QCDCLDep{\lo}{\Dtrv}{\NoCube}$.
	\end{tabular}
\end{proof}
		
%

Now we come to $\Dstd$. First, we show that, as  for $\Drrs$, a \QCDCL\ system with $\Dstd$ is incomparable in strength to a standard \QCDCL\ system without any dependency schemes.


\begin{restatable}{theorem}{restateThmDstdCubeIncomp}
		\label{thm:DstdCube-incomp}
			For $\CubePolicy \in \{\CubeLD, \CubeD{\Dstd}\}$, the proof systems $\QCDCLDep{\lo}{\Dstd}{\NoCube}$  and $\QCDCLDep{\lo}{\Dtrv}{\CubePolicy}$ are incomparable.
\end{restatable}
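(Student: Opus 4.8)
The plan is to prove incomparability by producing two formula families, one easy in each system but hard in the other; this gives the two non-simulations that together constitute incomparability.

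For the direction showing that $\QCDCLDep{\lo}{\Dstd}{\NoCube}$ does not simulate $\QCDCLDep{\lo}{\Dtrv}{\CubePolicy}$, I would use the \Equality\ formulas. Because $\Dstd(\Equality)=\Dtrv(\Equality)$, running $\QCDCLDep{\lo}{\Dstd}{\NoCube}$ on these formulas is indistinguishable from running $\QCDCLDep{\lo}{\Dtrv}{\NoCube}$, so the known exponential lower bound for the latter (\cref{tab:for-sideways}) transfers, giving hardness in $\QCDCLDep{\lo}{\Dstd}{\NoCube}$. Conversely, \Equality\ admits polynomial-size refutations both in $\QCDCLDep{\lo}{\Dtrv}{\CubeLD}$ and in $\QCDCLDep{\lo}{\Dtrv}{\CubeD{\Dstd}}$ (both listed in \cref{tab:for-sideways}), which covers both admissible choices of $\CubePolicy$. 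Hence a polynomial simulation would force short refutations in $\QCDCLDep{\lo}{\Dstd}{\NoCube}$, a contradiction.

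For the reverse direction, that $\QCDCLDep{\lo}{\Dtrv}{\CubePolicy}$ does not simulate $\QCDCLDep{\lo}{\Dstd}{\NoCube}$, I would use the \StdDepTrap\ formulas. By \cref{lem:StdDepTrap-easyQCDCLDstd} they have polynomial-size refutations in $\QCDCLDep{\lo}{\Dstd}{\NoCube}$, and by \cref{lem:StdDepTrap-hardQCDCL} they require exponential size in $\QCDCLDep{\lo}{\Dtrv}{\CubeLD}$, settling the case $\CubePolicy=\CubeLD$. For $\CubePolicy=\CubeD{\Dstd}$, which \cref{lem:StdDepTrap-hardQCDCL} does not state explicitly, I would re-use the heart of that lemma's proof: every trail either decides $b$ and degenerates into refuting the hard \PHP\ clauses, or decides $\bar{b}$ and is then forced along a short deterministic path to a clausal conflict. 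Consequently no satisfying trail is ever reached, cube learning never fires, and the system behaves exactly as $\QCDCLDep{\lo}{\Dtrv}{\NoCube}$. Since this reasoning refers only to clause propagation under $\Dtrv$ and never to the cube-propagation scheme, it transfers verbatim to $\CubeD{\Dstd}$.

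The main obstacle is precisely this extension to $\CubeD{\Dstd}$: one must confirm that the forced structure of trails in \cref{lem:StdDepTrap-hardQCDCL} is dictated entirely by clause propagation, so that replacing $\Dtrv$ by $\Dstd$ in cube handling cannot conjure a new satisfying trail and thereby enable cube learning. Once this independence is checked, the two families yield both required separations and incomparability follows.
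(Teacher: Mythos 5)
Your proposal is correct and follows essentially the same route as the paper: \Equality\ (hard in $\QCDCLDep{\lo}{\Dstd}{\NoCube}$ because $\Dstd=\Dtrv$ on these formulas, easy with cube learning) paired with \StdDepTrap\ (easy with $\Dstd$, hard without it) gives both non-simulations. Your extra care in extending \cref{lem:StdDepTrap-hardQCDCL} to $\CubeD{\Dstd}$ is well placed and sound --- the lemma is only stated for $\CubeD{\Dtrv}$, but as you observe its proof shows that no satisfying trail is ever reached, so the cube policy is irrelevant; the paper cites the lemma without remarking on this.
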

		\begin{proof}
				For $\CubePolicy \in \{\CubeLD, \CubeD{\Dstd}\}$, the \Equality\ formulas have polynomial size $\QCDCLDep{\lo}{\Dtrv}{\CubePolicy}$ refutations (the refutations in \cite{BohmPB-AI24} are of this type), but require exponential size $\QCDCLDep{\lo}{\Dstd}{\NoCube}$  refutations (the lower bound from \cite{BeyersdorffBohm-LMCS23} carries over, because $\Dstd=\Dtrv$). On the other hand, the newly defined  \StdDepTrap\ formulas have polynomial size $\QCDCLDep{\lo}{\Dstd}{\NoCube}$  refutations but require exponential size $\QCDCLDep{\lo}{\Dtrv}{\CubePolicy}$  refutations (\cref{lem:StdDepTrap-easyQCDCLDstd,lem:StdDepTrap-hardQCDCL}). 
		\end{proof}

Finally, we compare the different dependency schemes $\Drrs$ and $\Dstd$. 
The mere fact that $\Drrs$ is a refinement of $\Dstd$ (more general, eliminates more dependencies) does not make it better; for that matter, $\Drrs$ is a refinement of $\Dtrv$, but using it can be a disadvantage for some formulas. Similarly, we prove below that neither of $\Drrs$ and $\Dstd$ has a  proof-theoretic advantage over the other, irrespective of the presence or absence of cube-learning.

\begin{restatable}{theorem}{restateThmDstdDrrsIncomp}	
		\label{thm:DstdDrrs-incomp}
			For any $ (\D_1,\D_2)  \in \{(\Dtrv, \Drrs), (\Drrs, \Dtrv), (\Drrs, \Drrs) \}$,\\ and $\CubePolicy \in \{\NoCube, \CubeLD, \CubeD{\Dstd} \}$, the proof systems \\
			$\PS_1 \in$ $\QCDCLDep{\lo}{\Dstd}{\CubePolicy} $ and $\PS_2 \in$  $\D_1 + \QCDCLDep{\lo}{\D_2}{\CubePolicy}$ are incomparable.
\end{restatable}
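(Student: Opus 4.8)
The plan is to prove incomparability for every admissible pair $(\PS_1,\PS_2)$ by producing two QBF families with opposite behaviour: one that is easy for the $\Dstd$-based system $\PS_1$ but hard for the $\Drrs$-based system $\PS_2$, and one that is easy for $\PS_2$ but hard for $\PS_1$. Both families are already tabulated in \cref{tab:for-sideways}, so the real content is to read off the appropriate bounds and push them across the three cube policies $\NoCube$, $\CubeLD$, $\CubeD{\Dstd}$ using \cref{obs:nocubeImplycube,obs:unsatnocubeEqcube}.

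For the first direction I would use the \DepTrap formulas from \cite{ChoudhuryMahajan-JAR24}, which have $\Drrs=\{(w,t)\}$ but $\Dstd=\{(w,t)\}\cup\{(u_i,x_i)\}$. The extra dependencies $(u_i,x_i)$ retained by $\Dstd$ are exactly those that prevent a premature reduction from dumping a trail into the hard $\PHP$ core, so \DepTrap is easy in $\QCDCLDep{\lo}{\Dstd}{\NoCube}$ but hard as soon as $\Drrs$ is used, whether in preprocessing, in propagation and learning, or in both. Because the matrix is unsatisfiable, \cref{obs:unsatnocubeEqcube} identifies all three cube policies with $\NoCube$, so these bounds hold unchanged for each $\CubePolicy\in\{\NoCube,\CubeLD,\CubeD{\Dstd}\}$. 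This makes \DepTrap easy for $\PS_1$ and simultaneously hard for all three forms of $\PS_2$, namely $\Drrs+\QCDCLDep{\lo}{\Dtrv}{\CubePolicy}$, $\QCDCLDep{\lo}{\Drrs}{\CubePolicy}$, and $\Drrs+\QCDCLDep{\lo}{\Drrs}{\CubePolicy}$.

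For the reverse direction I would use the \TwinEquality formulas from \cite{BohmPB-AI24}, for which $\Drrs=\emptyset$ while $\Dstd=\Dtrv$. Any use of $\Drrs$ trivialises all universal literals, so \TwinEquality is easy for every variant of $\PS_2$ by the ``\QCDCL\ with $\Drrs$'' bounds in \cref{tab:for-sideways} together with \cref{obs:nocubeImplycube}. On the $\PS_1$ side $\Dstd=\Dtrv$ gives no help, and \TwinEquality is hard in $\QCDCLDep{\lo}{\Dtrv}{\CubeLD}$; hardness for $\NoCube$ is then immediate from \cref{obs:nocubeImplycube}, since the no-cube system is a restriction of the cube system.

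The main obstacle is the remaining policy $\CubePolicy=\CubeD{\Dstd}=\CubeD{\Dtrv}$: since \TwinEquality has a satisfiable matrix, \cref{obs:unsatnocubeEqcube} is unavailable and cube learning cannot simply be collapsed away. I would handle this by showing that $\QCDCLDep{\lo}{\Dtrv}{\CubeLD}$ $p$-simulates $\QCDCLDep{\lo}{\Dtrv}{\CubeD{\Dtrv}}$: the two systems perform identical clause handling and identical ($\Dtrv$) cube propagation, and every cube-learning resolution admissible in $\QCDCLDep{\lo}{\Dtrv}{\CubeD{\Dtrv}}$ remains admissible under the more permissive $\LDQTermRes$ rule of $\CubeLD$, so any short $\CubeD{\Dtrv}$ refutation is already a short $\CubeLD$ refutation. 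Hardness of \TwinEquality in the stronger $\CubeLD$ system then forces hardness under $\CubeD{\Dstd}$, and combining the two families settles all nine combinations of $(\D_1,\D_2)$ and $\CubePolicy$.
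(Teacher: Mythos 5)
Your first direction is exactly half of the paper's argument: \DepTrap\ is easy for the $\Dstd$-based systems and hard for every $\Drrs$-based system, and since its matrix is unsatisfiable, \cref{obs:unsatnocubeEqcube} collapses all three cube policies to $\NoCube$, so the known bounds settle every choice of $\CubePolicy$. That part is fine.

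The divergence, and the gap, is in the reverse direction. The paper uses the \Trapdoor\ formulas there, not \TwinEquality. \Trapdoor\ also has an unsatisfiable matrix, so \cref{obs:unsatnocubeEqcube} again identifies $\NoCube$, $\CubeLD$ and $\CubeD{\Dstd}$; the $\lo$-ordered lower bound survives under $\Dstd$ (since $\Dstd(\Trapdoor)=\{(w,t)\}$ does not unlock the trap), while $\Drrs(\Trapdoor)=\emptyset$ makes the formulas easy for every form of $\PS_2$. Your choice of \TwinEquality\ forces you to confront cube learning head-on, because its matrix is satisfiable and \cref{obs:unsatnocubeEqcube} is unavailable --- as you correctly note. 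But the patch you propose, that $\QCDCLDep{\lo}{\Dtrv}{\CubeLD}$ $p$-simulates $\QCDCLDep{\lo}{\Dtrv}{\CubeD{\Dtrv}}$, does not follow from the observation that every admissible $\QTermRes$ step is an admissible $\LDQTermRes$ step. The two policies do not merely differ in which resolutions are \emph{permitted}; they prescribe different learnable sets from the same conflict trail. Under $\CubeD{\Dtrv}$, a term resolution whose resolvent would be contradictory is \emph{skipped} and the previous cube is carried forward, whereas under $\CubeLD$ that same step is \emph{performed} and yields a cube with a merged existential literal. Hence a cube learnable under $\CubeD{\Dtrv}$ need not be learnable under $\CubeLD$ from that trail, and since learnt cubes drive propagation in later trails, a $\CubeD{\Dtrv}$ refutation cannot simply be replayed in $\CubeLD$. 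Note also that \cref{tab:for-sideways} records \TwinEquality\ as hard only for $\QCDCLDep{\lo}{\Dtrv}{\CubeLD}$, never for $\CubeD{\Dtrv}$, so the bound you need is not available off the shelf. Either prove that hardness directly, or --- the clean fix --- replace \TwinEquality\ by \Trapdoor.
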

		\begin{proof}
			 Using  refutations and lower bound arguments from \cite{BeyersdorffBohm-LMCS23,ChoudhuryMahajan-JAR24} along with \cref{obs:nocubeImplycube} and \cref{obs:unsatnocubeEqcube}, the \Trapdoor\  and \DepTrap\ formulas bear witness;  the former are easy in  $\PS_2$ but hard in  $\PS_1$, whereas the situation is reversed for the latter. 
		\end{proof}

\section{Conclusions}
\label{sec:concl}

%
%
%
%

 In the context of QBF proof systems, dependency schemes are
expected to aid the process of refutation. Indeed, in the
proof systems \QRes\ and \QURes, two of the earliest
resolution-based QBF proof systems to be studied \cite{BuningKF-IC95,Gelder-CP12}, it
is known that using $\Drrs$ can shorten proofs exponentially
-- the \Equality\ formulas require exponential refutation size
in these but have polynomial-sized proofs in \QDrrsRes. It was
thus a surprise to see this advantage does not automatically
translate to QCDCL algorithms; we see that when restricted to $\lo$ decisions, 
even in the presence of cube learning, usage of dependency schemes for propagation 
and learning is not always advantageous. The hardness in these systems is primarily a consequence
of the level-ordered nature of decisions -- it can be shown that all formulas with lower bounds 
shown in \cref{sec:QCDCLcuberesults} are easy when the decision policy for QCDCL is $\Dord{\Dstd}$ or $\Dord{\Drrs}$. 
In fact, proper lower bound techniques for a decision policy $\Dord{\D}$ are unknown, and to us, this is the big open 
question arising from this work.

Even for the level-ordered policy, there are many unresolved questions.
The comparison between $\Dtrv$ and $\Drrs$ already shows that using a more refined scheme is not necessarily an advantage. The relative strengths of $\Dstd$ and $\Drrs$ (which refines $\Dstd$) is not yet clear.  How other  dependency schemes would relate  in this scenario is completely unexplored.

Since QCDCL-style reasoning is explained through long-distance clause/term resolution, it is worth highlighting three long-standing open questions about those systems. Firstly, does using dependency schemes confer any advantage with clausal long-distance; i.e.\ is the simulation of $\LDQRes$ by $\LDQDRes$ strict for  $\Dstd$ or $\Drrs$? Secondly, is the use of dependency schemes in long-distance term resolution (the system $\LDQTermRes$) sound? Thirdly, since $\Dstd$ is the scheme actually used in DepQBF, separations specific to $\Dstd$ would be quite interesting. Can we even show that $\QDstdRes$ is strictly more powerful than $\QRes$? This question has remained open since $\Dstd$ was first implemented in DepQBF over 15 years ago.

                
\bibliography{referencesOrganized}

%
\appendix

\section{Normal Dependency Schemes.}
\label{app:Dep-Schemes}

\begin{definition}
	\label{def: monodepcheme}
	A dependency scheme $\D$ is said to be a monotone dependency scheme
	if $\D(\phi[\tau]) \subseteq \D(\phi)$ for every PCNF formula $\phi$ and 
	assignment $\tau$ to subset $var(\phi)$.
\end{definition}

\begin{definition}
	\label{def: simpdepscheme}
	A dependency scheme $\D$ is said to be a simple dependency scheme if
	for every PCNF formula $\Phi = \forall X Q.\phi$,
	every $LDQ(D)$ derivation $P$ from $\Phi$, for every $u \in X$
	either $u$ or $\bar u$ do not appear in $P$
\end{definition}

\begin{definition}
	\label{def: normaldepscheme}
	A dependency scheme $\D$ is said to be a normal dependency scheme \cite{PeitlSS-JAR19}
	if: 
	\begin{itemize}
		\item $\D$ is a Monotone dependency scheme 
		\item $\D$ is a Simple dependency scheme
	\end{itemize}
\end{definition}

$\Dtrv$, $\Dstd$, $\Drrs$ are all normal dependency schemes.

\section{Rules in various QBF Proof systems}        
\label{sec:proofsystems}

For an initial PCNF and a fixed dependency scheme $\D$, rules for various proofs systems are defined as follows.

       For clauses:
       \begin{itemize}
       \item $\CAxiom$ (clause axiom rule)
              \AxiomC{$ $}
               \UnaryInfC{$A$}
               \DisplayProof where $A$ is any clause in the matrix.

       \item $\reduceD$ (clause reduction with $\D$): 
               \AxiomC{$A \vee \ell$}
               \UnaryInfC{$A$}
               \DisplayProof \\
where $\var(\ell)\in X_\forall$, and for each $x\in X_\exists \cap
         \var(A)$, $(\var(\ell),x)\not\in \D$.

       \item $\Res$ (Resolution):
               \AxiomC{$A \vee \ell$}
               \AxiomC{$B \vee \neg\ell$}
               \BinaryInfC{$A\vee B$}
               \DisplayProof \\
               where $\var(\ell)\in X_\exists$, and $A\vee B$ is not tautological.
       \item $\LDRD{\D}$ (Long-distance Resolution with $\D$):
               \AxiomC{$A \vee \ell$}
               \AxiomC{$B \vee \neg\ell$}
               \BinaryInfC{$A\vee B$}
               \DisplayProof \\
               where $\var(\ell)\in X_\exists$;  for each $x\in X_\exists$, either $x$ or $\neg x$ is not in  $A\cup B$; for each
               $u\in X_\forall$, if $u \in A$ and $\neg u \in B$, then $(u,\var(\ell))\not \in \D$. \\
               (Note that tautological clauses can be generated. In much of the literature, the symbol $u^*$ is used to denote that both $u$ and $\neg u$ are in  a clause.)
           \end{itemize}
The proof system $\QDRes$ uses the  rules $\CAxiom$, $\reduceD$, and $\Res$; the proof system $\LDQDRes$ uses the rules  $\CAxiom$, $\reduceD$, and $\LDRD{\D}$. The proof systems \LDQRes\ and  \QRes\ are the special cases of \LDQDRes\ and  \QDRes\ where $\D=\Dtrv$. A refutation of a false QBF is a derivation of the empty clause using the permitted rules. 

For cubes/terms, the situation is essentially dual, exchanging the roles of $X_\exists$ and $X_\forall$, to give the rules $\reduceD_\exists$ (term reduction with $\D$),  $\TermRes$ (Term Resolution), and $\LDTRD{\D}$ (Long-distance Term Resolution with $\D$).  Also, the
Axiom rule is modified to the term axiom rule: \AxiomC{} \UnaryInfC{$A$} \DisplayProof ~~ 
          where $A$ is a non-contradictory cube  whose literals satisfy the matrix.

The proof system $\QDTermRes$ uses the  rules  $\TAxiom$, $\reduceD_\exists$, and $\TermRes$; the proof system $\LDQDTermRes$ uses the rules $\TAxiom$,  $\reduceD_\exists$, and $\LDTRD{\D}$. The proof systems \QTermRes\ and  \LDQTermRes\ are their respective special cases where $\D=\Dtrv$. A verification of a true QBF is a derivation of the empty term using the permitted rules.

\end{document}